\documentclass[letterpaper, 11pt]{article}

\usepackage{amsfonts}
\usepackage{amsmath}

\usepackage{amsthm}
\usepackage{thmtools}
\usepackage{amssymb}
\usepackage{mathtools}
\usepackage{thm-restate}
\usepackage{url}
\usepackage{authblk}
\usepackage[ruled,vlined]{algorithm2e}
\include{pythonlisting}
\usepackage{algpseudocode}
\usepackage{environ}

\usepackage{geometry}
\geometry{margin=1 in}
\usepackage{xspace,fullpage}


\bibliographystyle{plainurl}


\newcommand{\lp}{\left}
\newcommand{\rp}{\right}
\newcommand{\mrm}[1]{\mathrm{#1}}



\newcommand{\poly}{\mrm{poly}}








\def\eps{\varepsilon}


\newlength\myindent
\setlength\myindent{2em}

\newcommand{\repeattheorem}[1]{%
  \begingroup
  \renewcommand{\thetheorem}{\ref{#1}}%
  \expandafter\expandafter\expandafter\theorem
  \csname reptheorem@#1\endcsname
  \endtheorem
  \endgroup
}
\NewEnviron{reptheorem}[1]{%
  \global\expandafter\xdef\csname reptheorem@#1\endcsname{%
    \unexpanded\expandafter{\BODY}%
  }%
  \expandafter\theorem\BODY\unskip\label{#1}\endtheorem
}

\newcommand{\repeatlemma}[1]{%
  \begingroup
  \renewcommand{\thelemma}{\ref{#1}}%
  \expandafter\expandafter\expandafter\protect\lemma
  \csname replemma@#1\endcsname
  \protect\endlemma
  \endgroup
}

\NewEnviron{replemma}[1]{%
  \global\expandafter\xdef\csname replemma@#1\endcsname{%
    \unexpanded\expandafter{\BODY}%
  }%
  \expandafter\lemma\BODY\unskip\label{#1}\endlemma
}

\makeatletter
\newcommand*{\inlineequation}[2][]{%
  \begingroup
    \refstepcounter{equation}%
    \ifx\\#1\\%
    \else
      \label{#1}%
    \fi
    \relpenalty=10000 %
    \binoppenalty=10000 %
    \ensuremath{%
      #2%
    }%
    \qquad \@eqnnum
  \endgroup
}
\makeatother

\def\mc{\textsc{Max-Cut}}
\def\lmc{\textsc{Max-Cut}}
\def\llmc{\textsc{Local-Max-Cut}}
\def\nmc{\textsc{Node-Max-Cut}}
\def\lnmc{\textsc{Node-Max-Cut}}
\def\llnmc{\textsc{Local-Node-Max-Cut}}
\def\cf{\textsc{Circuit-Flip}}

\def\algo{\textsc{BridgeGaps}}

\newcommand{\thl}[1]{\textcolor{red}{#1}}

\newtheorem{theorem}{Theorem}
\newtheorem{lemma}{Lemma}
\newtheorem{proposition}[theorem]{Proposition}
\theoremstyle{remark}
\newtheorem*{remark}{Remark}
\usepackage {tikz}
\usetikzlibrary {positioning}

\theoremstyle{definition}
\newtheorem*{definition}{Definition}
\def\hf{\selectfont\sffamily\bfseries}
\newcommand{\paragr}[1]{\smallskip\noindent{\hf #1}}


\title{\bf Node Max-Cut and Computing Equilibria\\ in Linear  Weighted Congestion Games.}


\author[1]{Dimitris Fotakis}
\author[2]{Vardis Kandiros}
\author[1]{Thanasis Lianeas}
\author[1]{Nikos Mouzakis}
\author[1]{Panagiotis Patsilinakos}
\author[3]{Stratis Skoulakis}
\affil[1]{ National Technical University of Athens}
\affil[ ]{ {\textit{\{fotakis,lianeas,nmouzakis,patsilinak\}@corelab.ntua.gr}}}

\affil[2]{Massachussets Institute of Technology}
\affil[ ]{ \textit{kandiros@mit.edu}}




\affil[3]{Singapore Institute of Technology and Design}
\affil[ ]{ \textit{stratis.skoulakis@gmail.com}}

\begin{document}
\date{}
\maketitle
\begin{abstract}
In this work, we seek a more refined understanding of the complexity of local optimum computation for Max-Cut and pure Nash equilibrium (PNE) computation for congestion games with weighted players and linear latency functions. We show that computing a PNE of linear weighted congestion games is PLS-complete either for very restricted strategy spaces, namely when player strategies are paths on a series-parallel network with a single origin and destination, or for very restricted latency functions, namely when the latency on each resource is equal to the congestion. Our results reveal a remarkable gap regarding the complexity of PNE in congestion games with weighted and unweighted players, since in case of unweighted players, a PNE can be easily computed by either a simple greedy algorithm (for series-parallel networks) or any better response dynamics (when the latency is equal to the congestion). For the latter of the results above, we need to show first that computing a local optimum of a natural restriction of Max-Cut, which we call \emph{Node-Max-Cut}, is PLS-complete. In Node-Max-Cut, the input graph is vertex-weighted and the weight of each edge is equal to the product of the weights of its endpoints. Due to the very restricted nature of Node-Max-Cut, the reduction requires a careful combination of new gadgets with ideas and techniques from previous work. We also show how to compute efficiently a $(1+\eps)$-approximate equilibrium for Node-Max-Cut, if the number of different vertex weights is constant. 
\end{abstract}

\setcounter{page}{0}

\section{Introduction}
\label{s:intro}

Motivated by the remarkable success of local search in combinatorial optimization, Johnson et al. introduced \cite{JPY88} the complexity class Polynomial Local Search (PLS), consisting of local search problems with polynomially verifiable local optimality. PLS includes many natural complete problems (see e.g., \cite[App.~C]{MAK07}), with \cf~\cite{JPY88} and \mc~\cite{SY91} among the best known ones, and lays the foundation for a principled study of the complexity of local optima computation. In the last 15 years, a significant volume of research on PLS-completeness was motivated by the problem of computing a pure Nash equilibrium of potential games (see e.g., \cite{ARV08,SV08,GS10} and the references therein), where any improving deviation by a single player decreases a potential function and its local optima correspond to pure Nash equilibria \cite{MS96}. 

Computing a local optimum of \mc~under the {\sc flip} neighborhood (a.k.a. \llmc) has been one of the most widely studied problems in PLS. Given an edge-weighed graph, a cut is locally optimal if we cannot increase its weight by moving a vertex from one side of the cut to the other. Since its PLS-completeness proof by Sch\"affer and Yannakakis \cite{SY91}, researchers have shown that \llmc~remains PLS-complete for graphs with maximum degree five \cite{ET11}, is polynomially solvable for cubic graphs \cite{Pol95}, and its smoothed complexity is either polynomial in complete \cite{AN17} and sparse \cite{ET11} graphs, or almost polynomial in general graphs \cite{CGVYZ19,ER17}. Moreover, due to its simplicity and versatility, \mc~has been widely used in PLS reductions (see e.g., \cite{ARV08,GS10,SV08}). \llmc~can also be cast as a game, where each vertex aims to maximize the total weight of its incident edges that cross the cut. Cut games are potential games (the value of the cut is the potential function), which has motivated research on efficient computation of approximate equilibria for \llmc~\cite{BCK10,CFGS15}. To the best of our knowledge, apart from the work on the smoothed complexity of \llmc~(and may be that \llmc~is P-complete for unweighted graphs \cite[Theorem~4.5]{SY91}), there has not been any research on whether (and to which extent) additional structure on edge weights affects hardness of \llmc. 

A closely related research direction deals with the complexity of computing a pure Nash equilibrium (equilibrium or PNE, for brevity) of \emph{congestion games} \cite{Ros73}, a typical example of potential games \cite{MS96} and among the most widely studied classes of games in Algorithmic Game Theory (see e.g., \cite{Fot15} for a brief account of previous work). In congestion games (or CGs, for brevity), a finite set of players compete over a finite set of resources. Strategies are resource subsets and players aim to minimize the total cost of the resources in their strategies. Each resource $e$ is associated with a (non-negative and non-decreasing) latency function, which determines the cost of using $e$ as a function of $e$'s \emph{congestion} (i.e., the number of players including $e$ in their strategy). Researchers have extensively studied the properties of special cases and variants of CGs. Most relevant to this work are \emph{symmetric} (resp. \emph{asymmetric}) CGs, where players share the same strategy set (resp. have different strategy sets), \emph{network} CGs, where strategies correspond to paths in an underlying network, and \emph{weighted} CGs, where player contribute to the congestion with a different weight. 

Fabrikant et al. \cite{FPT04} proved that computing a PNE of asymmetric network CGs or symmetric CGs is PLS-complete, and that it reduces to min-cost-flow for symmetric network CGs. About the same time \cite{FKS05,PS06} proved that weighted congestion games admit a (weighted) potential function, and thus a PNE, if the latency functions are either affine or exponential (and \cite{HK10,HKM11} proved that in a certain sense, this restriction is necessary). Subsequently, Ackermann et al. \cite{ARV08} characterized the strategy sets of CGs that guarantee efficient equilibrium computation. They also used a variant of \llmc, called \emph{threshold games}, to simplify the PLS-completeness proof of \cite{FPT04} and to show that computing a PNE of asymmetric network CGs with (exponentially steep) linear latencies is PLS-complete. 

On the other hand, the complexity of equilibrium computation for weighted CGs is not well understood. All the hardness results above carry over to weighted CGs, since they generalize standard CGs (where the players have unit weight). But on the positive side, we only know how to efficiently compute a PNE for weighted CGs on parallel links with general latencies \cite{FKKMS09} and for weighted CGs on parallel links with identity latency functions and asymmetric strategies \cite{GLMM04}. Despite the significant interest in (exact or approximate) equilibrium computation for CGs (see e.g., \cite{CFGS11,CFGS15,KS17} and the references therein), we do not understand how (and to which extent) the complexity of equilibrium computation is affected by player weights. This is especially true for weighted CGs with linear latencies, which admit a potential function and their equilibrium computation is in PLS. 

\paragr{Contributions.}
We contribute to both research directions outlined above. In a nutshell, we show that equilibrium computation in linear weighted CGs is significantly harder than for standard CGs, in the sense that it is PLS-complete either for very restricted strategy spaces, namely when player strategies are paths on a series-parallel network with a single origin and destination, or for very restricted latency functions, namely when resource costs are equal to the congestion. Our main step towards proving the latter result is to show that computing a local optimum of \nmc, a natural and interesting restriction of \mc~where the weight of each edge is the product of the weights of its endpoints, is PLS-complete. 

More specifically, using a tight reduction from \llmc, we first show, in Section~\ref{sec:sepa}, that equilibrium computation for linear weighted CGs on series-parallel networks with a single origin and destination is PLS-complete (Theorem~\ref{thm:sepaPLShard}). The reduction results in games where both the player weights and the latency slopes are exponential. Our result reveals a remarkable gap between weighted and standard CGs regarding the complexity of equilibrium computation, since for standard CGs on series-parallel networks with a single origin and destination, a PNE can be computed by a simple greedy algorithm \cite{FKS05b}. 

Aiming at a deeper understanding of how different player weights affect the complexity of equilibrium computation in CGs, we show, in Section~\ref{sec:identity}, that computing a PNE of weighted network CGs with asymmetric player strategies and identity latency functions is PLS-complete (Theorem~\ref{thm:mutiPLS}). Again the gap to standard CGs is remarkable, since for standard CGs with identity latency functions, any better response dynamics converges to a PNE in polynomial time. In the reduction of Theorem~\ref{thm:mutiPLS}, \nmc~plays a role similar to that of threshold games in \cite[Sec.~4]{ARV08}. The choice of \nmc~seems necessary, in the sense that known PLS reductions, starting from {\sc Not-All-Equal Satisfiability} \cite{FPT04} or \llmc~\cite{ARV08}, show that equilibrium computation is hard due to the interaction of players on different resources (where latencies simulate the edge / clause weights), while in our setting, equilibrium computation is hard due to the player weights, which are the same for all resources in a player's strategy.

\nmc~is a natural restriction of \mc~and settling the complexity of its local optima computation may be of independent interest, both conceptually and technically. \nmc~coincides with the restriction of \mc~shown (weakly) NP-complete on complete graphs in the seminal paper of Karp \cite{Karp72}, while a significant generalization of \nmc~with polynomial weights was shown P-complete in \cite{SY91}. 

A major part of our technical effort concerns reducing \cf~to \nmc, thus showing that computing a local optimum of \nmc~is PLS-complete (Section~\ref{sec:nMCoverview}).  
Since \nmc~is a very restricted special case of \mc~ 
we have to start from a PLS-complete problem lying before \llmc~on the ``reduction paths'' of PLS. The reduction is technically involved, due to the very restricted nature of the problem. In \nmc, every vertex contributes to the cut value of its neighbors with the same weight, and differentiation comes only as a result of the different total weight in the neighborhood of each vertex. To deal with this restriction, we combine some new carefully contstructed gadgets with the gadgets used by Sch\"affer and Yannakakis \cite{SY91}, Els\"asser and Tscheuschner \cite{ET11}. and Gairing and Savani \cite{GS10}. In general, as a very resctricted special case of \mc, \nmc~is a natural and convenient starting point for future PLS reductions, especially when one wants to show hardness of equilibrium computation for restricted classes of games that admit weighted potential functions (e.g., as that in \cite{FKS05}). So, our results may serve as a first step towards a better understanding of the complexity of (exact or approximate) equilibrium computation for weighted potential games. 

We also show that a $(1+\eps)$-approximate equilibrium for \nmc, where no vertex can switch sides and increase the weight of its neighbors across the cut by a factor larger than $1+\eps$, can be computed in time exponential in the number of different weights (see Theorem~\ref{thm:equilibrium} for a precise statement). Thus, we can efficiently compute a $(1+\eps)$-approximate equilibrium for \nmc, for any $\eps > 0$, if the number of different vertex weights is constant. Since similar results are not known for \mc, we believe that Theorem~\ref{thm:equilibrium} may indicate that approximate equilibrium computation for \nmc~may not be as hard as for \mc. An interesting direction for further research is to investigate (i) the quality of efficiently computable approximate equilibria for \nmc; and (ii) the smoothed complexity of its local optima.

\paragr{Related Work.}
Existence and efficient computation of (exact or approximate) equilibria for weighted congestion games have received significant research attention. We briefly discuss here some of the most relevant previous work. There has been significant research interest in the convergence rate of best response dynamics for weighted congestion games (see e.g., \cite{CS11,CF19,EKM03,FM09,G04}). Gairing et al.~\cite{GLMM04} presented a polynomial algorithm for computing a PNE for load balancing games on restricted parallel links.  Caragiannis et al.~\cite{CFGS15} established existence and presented efficient algorithms for computing approximate PNE in weighted CGs with polynomial latencies (see also \cite{FGKS17,GNS18}).  

Bhalgat et al. \cite{BCK10} presented an efficient algorithm for computing a $(3+\eps)$-approximate equilibrium in \mc~ games, for any $\eps > 0$. The approximation guarantee was improved to $2+\eps$ in \cite{CFGS15}. We highlight that the notion of approximate equilibrium in cut games is much stronger than the notion of approximate local optimum of \mc, since the former requires that no vertex can significantly improve the total weight of its incidence edges that cross the cut (as e.g., in \cite{BCK10,CFGS15}), while the latter simply requires that the total weight of the cut cannot be significantly improved (as e.g., in \cite{CFGS15}). 

Johnson et al.~\cite{JPY88} introduced the complexity class PLS and proved that \cf~is PLS-complete. Subsequently, Sch\"affer and Yannakakis \cite{SY91} proved that \mc~is PLS-complete. From a technical viewpoint, our work is close to previous work by Els\"asser and Tscheuschner \cite{ET11} and Gairing and Savani \cite{GS10}, where they show that \llmc~in graphs of maximum degree five \cite{ET11} and computing a PNE for hedonic games \cite{GS10} are PLS-complete, and by Ackermann et al. \cite{ARV08}, where they reduce \llmc~to computing a PNE in network congestion games.

\section{Basic Definitions and Notation}
\label{sec:prelim}

\paragr{Polynomial-Time Local Search (PLS).}
A \emph{polynomial-time local search} (PLS) problem $L$ \cite[Sec.~2]{JPY88} is specified by a (polynomially recognizable) set of instances $I_L$, a set $S_L(x)$ of feasible solutions for each instance $x \in I_L$, with $|s| = O(\poly(|x|)$ for every solution $s \in S_L(x)$, an objective function $f_L(s, x)$ that maps each solution $s \in S_L(x)$ to its value in instance $x$, and a \emph{neighborhood} $N_L(s, x) \subseteq S_L(x)$ of feasible solutions for each $s \in S_L(x)$. Moreover, there are three polynomial-time algorithms that for any given instance $x \in I_L$: (i) the first generates an initial solution $s_0 \in S_L(x)$; (ii) the second determines whether a given $s$ is a feasible solution and (if $s \in S_L(x)$) computes its objective value $f_L(s, x)$; and (iii) the third returns either that $s$ is \emph{locally optimal} or a feasible solution $s' \in N_L(s, x)$ with better objective value than $s$. If $L$ is a maximization (resp. minimization) problem, a solution $s$ is locally optimal if for all $s' \in N_L(s, x)$, $f_L(s, x) \geq f_L(s', x)$ (resp. $f_L(s, x) \leq f_L(s', x)$). If $s$ is not locally optimal, the third algorithm returns a solution $s' \in N_L(s, x)$ with $f(s, x) < f(s', x)$ (resp. $f(s, x) > f(s', x)$). The complexity class PLS consists of all polynomial-time local search problems. By abusing the terminology, we always refer to polynomial-time local search problem simply as local search problems. 

\paragr{PLS Reductions and Completeness.} 
A local search problem $L$ is PLS-\emph{reducible} to a local search problem
$L'$, if there are polynomial-time algorithms $\phi_1$ and $\phi_2$ such that (i) $\phi_1$ maps any instance $x \in I_L$ of $L$ to an instance $\phi_1(x) \in I_{L'}$ of $L'$; (ii) $\phi_2$ maps any (solution $s'$ of instance $\phi_1(x)$, instance $x$) pair, with $s' \in S_{L'}(\phi_1(x))$, to a solution $s \in S_L(x)$; and (iii) for every instance $x \in I_L$, if $s'$ is locally optimal for $\phi_1(x)$, then $\phi_2(s', x)$ is locally optimal for $x$. 

By definition, if a local search problem $L$ is PLS-reducible to a local search problem $L'$, a polynomial-time algorithm that computes a local optimum of $L'$ implies a polynomial time algorithm that computes a local optimum of $L$. Moreover,  a PLS-reduction is transitive. As usual, a local search problem $Q$ is PLS-\emph{complete}, if $Q \in \text{PLS}$ and any local search problem $L \in \text{PLS}$ is PLS-reducible to $Q$. 

\paragr{Max-Cut and Node-Max-Cut.}
An instance of \mc~consists of an undirected edge-weighted graph $G(V, E)$, where $V$ is the set of vertices and $E$ is the set of edges. Each edge $e$ is associated with a positive weight $w_e$. A cut of $G$ is a vertex partition $(S, V \setminus S)$, with $\emptyset \neq S \neq V$. We usually identify a cut with one of its sides (e.g., $S$). We denote $\delta(S) = \{ \{u, v\} \in E : u \in S \land v \not\in S \}$ the set of edges that cross the cut $S$. The weight (or the value) of a cut $S$, denoted $w(S)$, is $w(S) = \sum_{e \in \delta(S)} w_e$. In \mc, the goal is to compute an optimal cut $S^\ast$ of maximum value $w(S^\ast)$. 

In \nmc, each vertex $v$ is associated with a positive weight $w_v$ and the weight of each edge $e = \{ u, v \}$ is $w_e = w_u w_v$, i.e. equal to the product of the weights of $e$'s endpoints. Again the goal is to compute a cut $S^\ast$ of maximum value $w(S^\ast)$. As optimization problems, both \mc~and \nmc~are NP-complete \cite{Karp72}. 

In this work, we study \mc~and \nmc~as local search problems under the {\sc flip} neighborhood. Then, they are referred to as \llmc~and \llnmc. The neighborhood $N(S)$ of a cut $(S, V \setminus S)$ consists of all cuts $(S', V \setminus S')$ where $S$ and $S'$ differ by a single vertex. Namely, the cut $S'$ is obtained from $S$ by moving a vertex from one side of the cut to the other. A cut $S$ is locally optimal if for all $S' \in N(S)$, $w(S) \geq w(S')$. In \llmc~(resp. \llnmc), given an edge-weighted (resp. vertex-weighted) graph, the goal is to compute a locally optimal cut. Clearly, both \mc{} and \nmc{} belong to PLS. In the following, we abuse the terminology and refer to \llmc~and \llnmc~as \mc{} and \nmc{}, for brevity, unless we need to distinguish between the optimization and the local search problem. 

%

\paragr{Weighted Congestion Games.} 
A \emph{weighted congestion game} $\mathcal{G}$ consists of $n$ players, where each player $i$ is associated with a positive weight $w_i$, a set of resources $E$, where each resource $e$ is associated with a non-decreasing latency function $\ell_e:\mathbb{R}_{\geq0}\to \mathbb{R}_{\geq0}$, and a non-empty strategy set $\Sigma_i \subseteq 2^E$ for each player $i$. A game is linear if $\ell_e(x) = a_e x+b_e$, for some $a_e, b_e \geq 0$, for all $e \in E$. The identity latency function is $\ell(x) =x$. The player strategies are \emph{symmetric}, if all players share the same strategy set $\Sigma$, and \emph{asymmetric}, otherwise. 

We focus on \emph{network} weighted congestion games, where the resources $E$ correspond to the edges of an underlying network $G(V, E)$ and the player strategies are paths on $G$. A network game is \emph{single-commodity}, if $G$ has an origin $o$ and a destination $d$ and the player strategies are all (simple) $o-d$ paths. A network game is \emph{multi-commodity}, if $G$ has an origin $o_i$ and a destination $d_i$ for each player $i$, and $i$'s strategy set $\Sigma_i$ consists of all (simple) $o_i - d_i$ paths. 
A single-commodity network $G(V, E)$ is \emph{series-parallel}, if it either consists of a single edge $(o, d)$ or can be obtained from two series-parallel networks composed either in series or in parallel (see e.g., \cite{VTL82} for details on composition and recognition of series-parallel networks). 

A configuration $\vec{s} = (s_1, \ldots, s_n)$ consists of a strategy $s_i \in \Sigma_i$ for each player $i$. The congestion $s_e$ of resource $e$ in configuration $\vec{s}$ is $s_e = \sum_{i: e \in s_i} w_i$. The cost of resource $e$ in $\vec{s}$ is $\ell_e(s_e)$. The \emph{individual cost} (or \emph{cost}) $c_i(\vec{s})$ of player $i$ in configuration $\vec{s}$ is the total cost for the resources in her strategy $s_i$, i.e., $c_i(\vec{s}) = \sum_{e \in s_i} \ell_e(s_e)$. A configuration $\vec{s}$ is a \emph{pure Nash equilibrium} (equilibrium or PNE, for brevity), if for every player $i$ and every strategy $s' \in \Sigma_i$, $c_i(\vec{s}) \leq c_i(\vec{s}_{-i}, s')$ (where $(\vec{s}_{-i}, s')$ denotes the configuration obtained from $\vec{s}$ by replacing $s_i$ with $s'$). Namely, no player can improve her cost by unilaterally switching her strategy. 

\paragr{Equilibrium Computation and Local Search.} 
\cite{FKS05} shows that for linear weighted congestion games, with latencies $\ell_e(x) = a_ex+b_e$, $\Phi(\vec{s}) = \sum_{e \in E} (a_e s_e^2 + b_e s_e) + \sum_{i} w_i \sum_{e \in s_i} (a_e w_i + b_e)$ changes by $2w_i(c_i(\vec{s}) - c_i(\vec{s}_{-i}, s'))$, when a player $i$ switches from strategy $s_i$ to strategy $s'$ in $\vec{s}$. Hence, $\Phi$ is a weighted potential function, whose local optimal (wrt. single player deviations) correspond to PNE of the underlying game. 
Hence, equilibrium computation for linear weighted congestion games is in PLS. Specifically, configurations corresponds to solutions, the neighborhood $N(\vec{s})$ of a configuration $\vec{s}$ consists of all configurations $(\vec{s}_{-i}, s')$ with $s' \in \Sigma_i$, for some player $i$, and local optimality is defined wrt. the potential function $\Phi$. 

\paragr{Max-Cut and Node-Max-Cut as Games.} 
\llmc{} and \llnmc{} can be cast as cut games, where players correspond to vertices of $G(V, E)$, strategies $\Sigma = \{ 0, 1 \}$ are symmetric, and configurations $\vec{s} \in \{0, 1\}^{|V|}$ correspond to cuts, e.g., $S(\vec{s}) = \{ v \in V: s_v = 0\}$. Each player $v$ aims to maximize $w_v(\vec{s}) = \sum_{e = \{u, v\} \in E: s_u \neq s_v} w_e$, that is the total weight of her incident edges that cross the cut. For \nmc{}, this becomes $w_v(\vec{s}) = \sum_{u: \{u, v\} \in E \land s_u \neq s_v} w_u$, i.e., $v$ aims to maximize the total weight of her neighbors across the cut. A cut $\vec{s}$ is a PNE if for all players $v$, $w_v(\vec{s}) \geq w_v(\vec{s}_{-i}, 1-s_v)$. Equilibrium computation for cut games is equivalent to local optimum computation, and thus, is in PLS. 

A cut $\vec{s}$ is a $(1+\eps)$-approximate equilibrium, for some $\eps > 0$,  if for all players $v$, $(1+\eps)w_v(\vec{s}) \geq w_v(\vec{s}_{-i}, 1-s_v)$. Note that the notion of $(1+\eps)$-approximate equilibrium is stronger than the notion of $(1+\eps)$-approximate local optimum, i.e., a cut $S$ such that for all $S' \in N(S)$, $(1+\eps)w(S) \geq w(S')$ (see also the discussion in \cite{CFGS15}). 

\section{Hardness of Computing Equilibria in Weighted Congestion Games}
\label{sec:congestion}

We next show that computing a PNE in weighted congestion games with linear latencies is PLS-complete either for single-commodity series-parallel networks or for multi-commodity networks with identity latency functions. Missing technical details can be found in Appendix~\ref{s:CG}.

\subsection{Weighted Congestion Games on Series-Parallel Networks}
\label{sec:sepa}

\begin{theorem}\label{thm:sepaPLShard}
Computing a pure Nash equilibrium in weighted congestion games on single-commodity series-parallel networks with linear latency functions is PLS-complete.
\end{theorem}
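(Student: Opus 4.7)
The plan is to PLS-reduce \llmc{} to the equilibrium problem for weighted congestion games on single-commodity series-parallel networks with linear latencies. Membership in PLS is immediate from the weighted potential $\Phi$ recalled in Section~\ref{sec:prelim}, so the real task is PLS-hardness: given a \llmc{} instance $(G,w)$ with $G=(V,E)$ and $n := |V|$, one has to produce in polynomial time a weighted congestion game on a single-commodity series-parallel network, together with a map $\phi_2$ from its configurations to cuts of $G$, such that every PNE is sent to a locally optimal cut under the \textsc{flip} neighborhood.

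I would fix an ordering $v_1,\ldots,v_n$ of $V$ and introduce one player $p_i$ per vertex with exponentially spaced weight $W_i := M^{n-i}$, where $M$ is a sufficiently large polynomial in the input size and in $\max_e w_e$. The network is a series composition of parallel gadgets: a \emph{choice gadget} $C_i$ for every $v_i$, consisting of two parallel edges $T_i$ (``side $0$'') and $B_i$ (``side $1$'') with tailored linear latencies; and an \emph{interaction gadget} $I_e$ for every $e=\{v_i,v_j\}\in E$ whose parallel branches carry linear latencies tuned so that, when $p_i$ routes through $I_e$, the contribution to her cost changes by an amount proportional to $w_e W_i$ according to which branch the endpoint player $p_j$ has chosen. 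Short auxiliary parallel bundles are inserted between consecutive gadgets so that, in any PNE, the only non-trivial degree of freedom of $p_i$ is the choice at her own $C_i$, with $T_i$ encoding ``side $0$'' and $B_i$ encoding ``side $1$''.

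The exponential weight hierarchy then drives correctness: since $W_i$ dominates $W_{i+1}+\cdots+W_n$, congestions on every parallel branch are determined up to a negligible additive term by the higher-priority players $p_1,\ldots,p_{i-1}$. By induction on $i$ one shows that at any PNE the best response of $p_i$ at every non-choice gadget is uniquely forced and insensitive to her side-flip at $C_i$, while her best response at $C_i$ is governed by the sign of
\[ \sum_{\{v_i,v_j\}\in E,\, s_j = s_i} w_{ij} \;-\; \sum_{\{v_i,v_j\}\in E,\, s_j \neq s_i} w_{ij}. \]
The reduction map $\vec{s} \mapsto S(\vec{s}) := \{v_i : p_i \text{ picks } T_i \text{ at } C_i\}$ therefore sends PNE to cuts in which no single-vertex flip is improving, i.e., to locally optimal cuts.

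The main technical obstacle is engineering $I_e$ so that three requirements hold simultaneously: (i) its parallel structure embeds inside a single-commodity series-parallel network; (ii) the exact weight $w_e$ is transmitted, through purely linear latencies, into the cost gap faced by each endpoint player at her own choice gadget; and (iii) the many non-endpoint players who must also traverse $I_e$ do not contaminate that gap. Meeting all three pressures is what compels the exponential blow-up in both player weights and latency slopes, and one must verify that the encoding remains polynomial-time computable and that single-player flip-gains match, sign for sign, the corresponding single-vertex \mc{} flip-gains.
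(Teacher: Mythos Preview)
Your proposal has the right high-level shape (reduce from \llmc, exponentially spaced player weights, series-parallel gadgets), but it contains a structural gap that the architecture you describe cannot close.

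In a series composition, a player's routing choices at the different parallel components are completely decoupled: her strategy is simply a concatenation of one branch per component, and she may combine branches arbitrarily. You declare that in any PNE ``the only non-trivial degree of freedom of $p_i$ is the choice at her own $C_i$,'' and then assert that ``her best response at $C_i$ is governed by'' the \mc{} flip-gain at $v_i$. But the cost of $p_i$ on $T_i$ versus $B_i$ depends only on the congestion on those two edges, and that congestion is fixed once the choices of the \emph{other} players at $C_i$ are fixed---choices you yourself say are forced, hence independent of their own cut-side bits $s_j$. Likewise, if $p_j$'s branch at $I_e$ is forced by heavier players, then it does not carry the bit $s_j$, so $p_i$'s cost through $I_e$ cannot encode whether $v_i$ and $v_j$ lie on the same side. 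Nothing in a pure series composition links the bit chosen at $C_j$ to the branch taken at $I_e$; that linkage is exactly what you need and exactly what a series-only layout forbids. The ``main technical obstacle'' you flag is therefore not a matter of tuning $I_e$ but an architectural impossibility.

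The paper resolves this by putting a \emph{parallel} composition at the outermost level: the network is two identical copies $G_1,G_2$ of a series chain of edge-gadgets $F_{ij}$, and a player's single strategic bit is which copy she enters. Inside each copy, the slopes $\ell_k(x)=Dx/4^k$ force a player of weight $16^i$ onto the unique path $p_i^u$ (resp.\ $p_i^l$) that collects all $\ell_i$-edges; along that path she necessarily traverses the interaction edge $\ell_{ij}(x)=w_{ij}x/(16^{i+j})$ for every $\{i,j\}\in E$. Three players per weight (not one) ensure that each of $p_i^u,p_i^l$ is occupied and let the cut be read off as ``which side carries two copies of weight $16^i$.'' The outer parallel split is what makes the side choice a single bit that simultaneously determines routing through every interaction edge---the coupling your series-only proposal cannot supply.
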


\begin{proof}[Proof sketch]
Membership in PLS follows from the potential function argument of \cite{FKS05}. To show hardness, we present a reduction from \lmc{} (see also Appendix~\ref{sec:singlePLS}). 

Let $H(V,A)$ be an instance of \llmc{} with $n$ vertices and $m$ edges. Based on $H$, we construct a weighted congestion game on a single-commodity series-parallel network $G$ with $3n$ players, where for every $i \in [n]$, there are three players with weight $w_i = 16^i$. 
Network $G$ is a parallel composition of two identical copies of a simpler series-parallel network. We refer to these copies as $G_1$ and $G_2$. Each of $G_1$ and $G_2$ is a series composition of $m$ simple series-parallel networks $F_{ij}$, each corresponding to an edge $\{i,j\}\in A$. Network $F_{ij}$ is depicted in Figure~\ref{fig:sepaPLS}, where $D$ is assumed to be a constant chosen (polynomially) large enough. An example of the entire network $G$ is shown in Figure~\ref{fig:sepaExample}. 

\begin{figure}[t]
    \centering
    \includegraphics[width=0.75\textwidth]{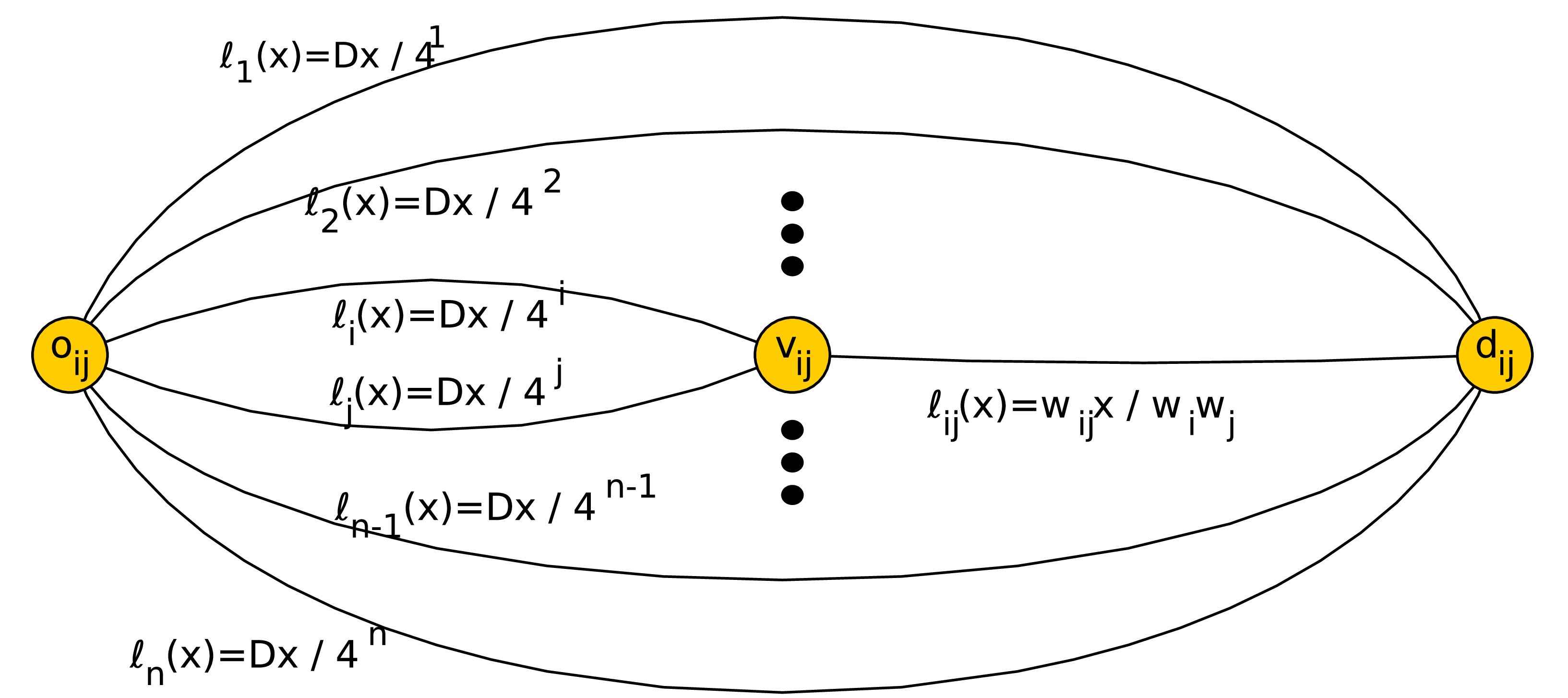}
    \caption{The series-parallel network $F_{ij}$ that corresponds to edge $\{i,j\}\in A$.}
    \label{fig:sepaPLS}
\end{figure}

\begin{figure}
    \centering
    \includegraphics[width=\textwidth]{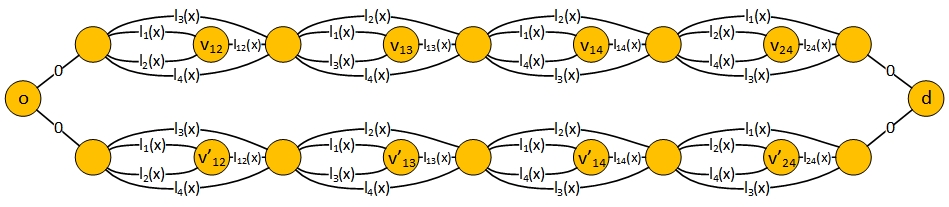}
    \caption{An example of the network $G$ constructed in the proof of Theorem~\ref{thm:sepaPLShard} for graph $H(V,A)$, with $V=\{1,2,3,4\}$ and $A=\{\{1,2\},\{1,3\},\{1,4\},\{2,4\}\}$. $G$ is a parallel composition of two parts, each consisting of the smaller networks $F_{12}$, $F_{13}$, $F_{14}$ and $F_{24}$ (see also Figure~\ref{fig:sepaPLS}) connected in series.}
    \label{fig:sepaExample}
\end{figure}

In each of $G_1$ and $G_2$, there is a unique path that contains all  edges with latency functions $\ell_i(x) = Dx/4^i$, for each $i\in[n]$. We refer to these paths as $p^u_i$ for $G_1$ and $p_i^l$ for $G_2$. In addition to the edges with latency $\ell_i(x)$, $p^u_i$ and $p_i^l$ include all edges with latencies $\ell_{ij}(x) = \frac{w_{ij}x}{w_iw_j} = \frac{w_{ij}x}{16^{i+j}}$, which correspond to the edges incident to vertex $i$ in $H$. 

Due to the choice of the player weights and the latency slopes, a player with weight $w_i$ must choose either $p^u_i$ or $p_i^l$ in any PNE (see also propositions~\ref{prop:PlayersToPaths}~and~\ref{prop:playersToPaths2} in Appendix~\ref{sec:singlePLS}). We can prove this claim by induction on the player weights. The players with weight $w_n = 16^n$ have a dominant strategy to choose either $p^u_n$ or $p_n^l$, since the slope of $\ell_n(x)$ is significantly smaller than the slope of any other latency $\ell_i(x)$. In fact, the slope of $\ell_n$ is so small that even if all other $3n-1$ players choose one of $p^u_n$ or $p_n^l$, a player with weight $w_n$ would prefer either $p^u_n$ or $p_n^l$ over all other paths. Therefore, we can assume that each of $p^u_n$ and $p_n^l$ are used by at least one player with weight $w_n$ in any PNE, which would increase their latency so much that no player with smaller weight would prefer them any more. The inductive argument applies the same reasoning for players with weights $w_{n-1}$, who should choose either $p^u_{n-1}$ or $p_{n-1}^l$ in any PNE, and subsequently, for players with weights $w_{n-2}, \ldots, w_{1}$. 
Hence, we conclude that for all $i \in [n]$, each of $p^u_i$ and $p_i^l$ is used by at least one player with weight $w_i$. 

Moreover, we note that two players with different weights, say $w_i$ and $w_j$, go through the same edge with latency $\ell_{ij}(x) = \frac{w_{ij}x}{w_iw_j}$ in $G$ only if the corresponding edge $\{i,j\}$ is present in $H$. The correctness of the reduction follows the fact that a player with weight $w_i$ aims to minimize her cost through edges with latencies $\ell_{ij}$ in $G$ in the same way that in the \lmc~instance, we want to minimize the weight of the edges incident to a vertex $i$ and do not cross the cut. 
Formally, we next show that a cut $S$ is locally optimal for the \lmc{} instance if and only if the configuration where for every $k \in S$, two players with weight $w_k$ use $p_k^u$ and for every $k \not\in S$, two players with weight $w_k$ use $p_k^l$ is a PNE of the weighted congestion game on $G$.

Assume an equilibrium configuration and consider a player $a$ of weight $w_k$ that uses $p_k^u$ together with another player of weight $w_k$ (if this is not the case, vertex $k$ is not included in $S$ and we apply the symmetric argument for $p_k^l$). By the equilibrium condition, the cost of player $a$ on $p_k^u$ is at most her cost on $p_k^l$, which implies that 
$$\sum_{k=1}^m\frac{2D16^k}{4^k}+\sum_{j: \{k,j\}\in A}\frac{w_{kj}(2\cdot16^k+x_j^u16^j)}{16^{k+j}} \leq \sum_{k=1}^m\frac{2D16^k}{4^k}+\sum_{j: \{k,j\}\in A}\frac{w_{kj}(2\cdot16^k+x_j^l16^j)}{16^{k+j}}\,,$$
where $x^u_j$ (resp. $x_j^l$) is either $1$ or $2$ (resp. $2$ or $1$) depending on whether, for each vertex $j$ connected to vertex $k$ in $H$, one or two players (of weight $w_j$) use $p_j^u$. Simplifying the inequality above, we obtain that: 
\begin{equation}\label{eqn:LmcToEqMain}
 \sum_{j: \{k,j\}\in A}w_{kj}(x_j^u-1) \leq 
 \sum_{j: \{k,j\}\in A}w_{kj}(x_j^l-1)
\end{equation}

Let $S=\{i\in V : x_i^u=2 \}$. By hypothesis, $k\in S$ and the left-hand side of \eqref{eqn:LmcToEqMain} corresponds to the total weight of the edges in $H$ that are incident to $k$ and do not cross the cut $S$. Similarly, the right-hand side of \eqref{eqn:LmcToEqMain} corresponds to the total weights of the edges in $H$ that are incident to $k$ and cross the cut $S$. Therefore, \eqref{eqn:LmcToEqMain} implies that we cannot increase the value of the cut $S$ by moving vertex $k$ from $S$ to $V \setminus S$. Since this or its symmetric condition holds for any vertex $k$ of $H$, the cut $(S, V\setminus S)$ is locally optimal. To conclude the proof, we argue along the same lines that any locally optimal cut of $H$ corresponds to a PNE in the weighted congestion game on $G$. 
\end{proof}

\subsection{Weighted Congestion Games with Identity Latency Functions}
\label{sec:identity}

We next prove that computing a PNE in weighted congestion games on multi-commodity networks with identity latency functions is PLS-complete. Compared to Theorem~\ref{thm:sepaPLShard}, we allow for a significantly more general strategy space, but we significantly restrict the latency functions, only allowing for the player weights to be exponentially large. 
\begin{theorem}\label{thm:mutiPLS}
Computing a pure Nash equilibrium in weighted congestion games on multi-commodity networks with identity latency functions is PLS-complete.
\end{theorem}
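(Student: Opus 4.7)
The plan has two parts. Membership in PLS is immediate from the weighted potential function recalled in Section~\ref{sec:prelim}, since identity latencies $\ell_e(x) = x$ are linear with $a_e = 1$ and $b_e = 0$. For hardness, I would reduce from \nmc{} (i.e.\ \llnmc), whose PLS-completeness is established in Section~\ref{sec:nMCoverview}. The intuition, as foreshadowed by the contributions paragraph, is that in a congestion game with identity latencies the payoff of each player on any edge is just the total weight of the users of that edge, so the differentiation between strategies of a given player $v$ is driven entirely by the weights of the other players that share edges with $v$; this is exactly the behavior of a \nmc{} vertex player, whose payoff is the total weight of her neighbors across the cut.

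Given a \nmc{} instance $G(V,E)$ with weights $(w_v)_{v \in V}$, I would construct a multi-commodity weighted congestion game with identity latencies on a directed network as follows. For each vertex $v \in V$, introduce a player $v$ of weight $w_v$ with private source $o_v$ and destination $d_v$. For each edge $\{u,v\} \in E$ and each side $b \in \{0,1\}$, introduce a shared resource $e_{uv}^b$, realized as a directed edge between two dedicated endpoint vertices $L_{uv}^b$ and $R_{uv}^b$. Finally, for each $v$ and $b$, fix an arbitrary ordering of the neighbors of $v$ and chain the $\deg(v)$ resources $\{e_{uv}^b : u \text{ neighbor of } v\}$ in that order by inserting player- and side-private directed connector edges between $o_v$ and the first resource, between consecutive resources, and between the last resource and $d_v$. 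Because the connectors are directed and used by no one other than $v$ on side $b$, the only $o_v$-$d_v$ paths in the network are exactly $p_v^0$ and $p_v^1$, so $v$'s strategy set is $\{p_v^0, p_v^1\}$, and I interpret $v$'s choice of $b$ as $v$'s side in a cut of $G$.

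For correctness, under identity latencies $v$'s cost on $p_v^b$ equals $\sum_{e \in p_v^b} s_e$. Each private connector carries only player $v$, contributing $w_v$; since $p_v^0$ and $p_v^1$ contain the same number of connectors (namely $\deg(v)+1$), these contributions cancel when $v$ compares the two strategies. Each shared resource $e_{uv}^b$ is used by $v$ when $v$ plays $b$ and by $u$ when $u$ plays $b$, so summing over the neighbors of $v$ the deviation comparison reduces, after cancellation of a common $\deg(v)\,w_v$ term, to comparing $\sum_{u \text{ neighbor of } v,\ u \text{ plays } b} w_u$ across the two values of $b$. Thus $v$'s best response is to be on the side opposite to her heavier-weighted neighbors, which is precisely the \nmc{} local-optimality condition for $v$. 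Hence PNE of the congestion game correspond bijectively to locally optimal cuts of $G$, and the recovery map $\phi_2$ simply reads $S = \{v : v \text{ plays } 1\}$ off the configuration.

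The main obstacle is the network construction: each player must have exactly two $o_v$-$d_v$ paths while the shared resources $e_{uv}^b$ remain accessible from the paths of both $u$ and $v$ on side $b$. Using directed edges throughout, and dedicating two fresh vertices as the endpoints of each shared resource (reachable only through player- and side-private connectors), handles this cleanly. A secondary bookkeeping point is that the private-connector contributions and the $\deg(v)\,w_v$ self-contribution from the shared resources must appear identically on $p_v^0$ and $p_v^1$, so that the deviation analysis genuinely collapses to the cut comparison; both facts follow from the symmetric construction above.
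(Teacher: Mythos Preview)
Your high-level plan is the same as the paper's: reduce from \lnmc, one player per vertex with the vertex weight, two ``side'' strategies per player, and for each edge $\{u,v\}$ and each side $b$ a shared resource $e_{uv}^b$ lying on both $p_u^b$ and $p_v^b$; the deviation analysis then collapses to the \nmc\ local-optimality condition, exactly as you write. That part is fine.

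The gap is in your network realization. Your assertion that ``the only $o_v$--$d_v$ paths in the network are exactly $p_v^0$ and $p_v^1$'' is false in general. The directed connectors you insert are indeed private in the sense that they appear in only one intended path, but nothing stops player $v$ from traversing another player's connector: at the vertex $R_{uv}^b$ there is \emph{both} $v$'s outgoing connector and $u$'s outgoing connector, and $v$ may follow $u$'s. Concretely, take $V=\{1,2,3,4\}$ with edges $\{12,13,14,23,34\}$, order player~1's side-$0$ resources as $e_{12},e_{13},e_{14}$, player~3's as $e_{13},e_{23},e_{34}$, and player~4's as $e_{34},e_{14}$. Then
\[
o_1 \to L_{12}^0 \to R_{12}^0 \to L_{23}^0 \to R_{23}^0 \to L_{34}^0 \to R_{34}^0 \to L_{14}^0 \to R_{14}^0 \to d_1
\]
is a simple $o_1$--$d_1$ path distinct from $p_1^0$ (it uses $e_{23}^0$ and $e_{34}^0$ in place of $e_{13}^0$, via players~2, 3, and 4's connectors). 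Such spurious paths are not dominated in general under identity latencies, so your equilibrium-to-cut correspondence breaks. The paper confronts exactly this difficulty and does \emph{not} try to make $p_v^0,p_v^1$ the only paths; instead it builds a grid network in which many $o_i$--$d_i$ paths exist but the two intended ones are strictly cheapest at any equilibrium, enforced by large ``constant'' costs on the other edges. Since only identity latencies are allowed, those constants are simulated by auxiliary single-edge commodities carrying very heavy dummy players. You would need either that device or a genuinely different graph that rules out the cross-player detours.
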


\begin{proof}[Proof sketch.]
We use a reduction from \nmc{}, which as we show in Theorem~\ref{t:main_max_cut}, is PLS-complete. Our construction draws ideas from \cite{ARV08}. Missing technical details can be found in Appendix~\ref{sec:multiPLS}. 

Let $H(V, A)$ be an instance of \lnmc{}. We construct a weighted congestion game on a multi-commodity network $G$ with identity latency functions $\ell_e(x) = x$ such that equilibria of the congestion game correspond to locally optimal cuts of $H$. 

At the conceptual level, each player $i$ of the congestion game corresponds to vertex $i \in V$ and has weight $w_i$ (i.e., equal to the weight of vertex $i$ in $H$). The key step is to construct a network $G$ (see also Figure~\ref{fig:multiPLS}) such that for every player $i \in [n]$, there are two paths, say $p^u_i$ and $p^l_i$, whose cost dominate the cost of any other path for player $i$. Therefore, in any equilibrium, player $i$ selects either $p^u_i$ or $p^l_i$ (which corresponds to vertex $i$ selecting one of the two sides of a cut). For every edge $\{i, j \} \in A$, paths $p^u_i$ and $p^u_j$ (resp. paths $p^l_i$ and $p^l_j$) have an edge $e^u_{ij}$ (resp. $e^l_{ij}$) in common. Intuitively, the cost of $p^u_i$ (resp. $p^l_i$) for player $i$ is determined by the set of players $j$, with $j$ connected to $i$ in $H$, that select path $p^u_j$ (resp. $p^l_j$). 

Let us consider any equilibrium configuration $\vec{s}$ of the weighted congestion game. By the discussion above, each player $i\in[n]$ selects either $p^u_i$ or $p^l_i$ in $\vec{s}$. Let $S=\{ i \in [n]: \mbox{player $i$ selects $p^u_i$ in $\vec{s}$}\}$. Applying the equilibrium condition, we next show that $S$ is a locally optimal cut. 

We let $V_i = \{ j: \{i, j\} \in A\}$ be the neighborhood of vertex $i$ in $H$. By the construction of $G$, the individual cost of a player $i$ on path $p^u_i$ (resp. $p^l_i$) in $\vec{s}$ is equal to $K + |V_i| w_i + \sum_{j \in S \cap V_i} w_j$ (resp. $K + |V_i| w_i + \sum_{j \in V_i \setminus S} w_j$), where $K$ is a large constant that depends on the network $G$ only. Therefore, for any player $i \in S$, equilibrium condition for $\vec{s}$ implies that
\[ K + |V_i| w_i + \sum_{j \in S \cap V_i} w_j \leq K + |V_i| w_i + \sum_{j \in V_i \setminus S} w_j 
\Rightarrow 
\sum_{j \in S \cap V_i} w_j \leq \sum_{j \in V_i \setminus S} w_j  \]
Multiplying both sides by $w_i$, we get that the total weight of the edges that are incident to $i$ and cross the cut $S$ is no less than the total weight of the edges that are incident to $i$ and do not cross the cut. By the same reasoning, we reach the same conclusion for any player $i \not\in S$. Therefore, the cut $(S, V \setminus S)$ is locally optimal for the \nmc{} instance $H(V, A)$. 

To conclude the proof, we argue along the same lines that any locally optimal cut $S$ for the \nmc{} instance $H(V, A)$ corresponds to an equilibrium in the network $G$, by letting a player $i$ select path $p^u_i$ if and only if $i \in S$. 
\end{proof}

\section{Computing Approximate Equilibria for Node-Max-Cut}
\label{s:approx-algo}

We complement our PLS-completeness proof for \nmc{}, in Section~\ref{sec:nMCoverview}, with an efficient algorithm computing $(1+\eps)$-approximate equilibria for \nmc{}, when the number of different vertex weights is a constant. We note that similar results are not known (and a similar approach fails) for \mc{}. Investigating if stronger approximation guarantees are possible for efficiently computable approximate equilibria for \nmc{} is beyond the scope of this work and an intriguing direction for further research. 

Given a vertex-weighted graph $G(V, E)$ with $n$ vertices and $m$ edges,
our algorithm, called \algo{} (Algorithm~\ref{Algorithm}), computes a $(1+\eps)^3$-approximate equilibrium for a \nmc{}, for any $\eps > 0$, in $(m/\eps)(n/\eps)^{O(D_\eps)}$ time, where $D_\eps$ is the number of different vertex weights in $G$, when the weights are rounded down to powers of $1+\eps$. We next sketch the algorithm and the proof of Theorem~\ref{thm:equilibrium}. Missing technical details can be found in Appendix~\ref{app:algo}.

For simplicity, we assume that $n/\eps$ is an integer (we use $\lceil n/\eps \rceil$ in Appendix~\ref{app:algo}) and that vertices are indexed in nondecreasing order of weight, i.e., $w_1 \leq w_2 \leq \cdots \leq w_n$. \algo{} first rounds down vertex weights to the closest power of $(1+\eps)$. Namely, each weight $w_i$ is replaced by weight $w'_i=(1+\eps)^{\lfloor\log_{1+\eps}w_i\rfloor}$. Clearly, an $(1+\eps)^2$-approximate equilibrium for the new instance $G'$ is an $(1+\eps)^3$-approximate equilibrium for the original instance $G$. The number of different weights $D_\eps$, used in the analysis, is defined wrt. the new instance $G'$. 

Then, \algo{} partitions the vertices of $G'$ into groups $g_1, g_2, \ldots$, so that the vertex weights in each group increase with the index of the group and the ratio of the maximum weight in group $g_j$ to the minimum weight in group $g_{j+1}$ is no less than $n/\eps$. This can be performed by going through the vertices, in nondecreasing order of their weights, and assign vertex $i+1$ to the same group as vertex $i$, if $w'_{i+1}/w'_i \leq n/\eps$. Otherwise, vertex $i+1$ starts a new group. The idea is that for an $(1+\eps)^2$-approximate equilibrium in $G'$, we only need to enforce the $(1+\eps)$-approximate equilibrium condition for each vertex $i$ only for $i$'s neighbors in the highest-indexed group (that includes some neighbor of $i$). To see this, let $g_j$ be the highest-indexed group that includes some neighbor of $i$ and let $\ell$ be the lowest indexed neighbor of $i$ in $g_j$. Then, the total weight of $i$'s neighbors in groups $g_1, \ldots, g_{j-1}$ is less than $\eps w'_{\ell}$. This holds because $i$ has at most $n-2$ neighbors in these groups and by definition, $w'_q \leq (\eps/n) w'_{\ell}$, for any $i$'s neighbor $q$ in groups $g_1, \ldots, g_{j-1}$. Therefore, we can ignore all neighbors of $i$ in groups $g_1, \ldots, g_{j-1}$, at the expense of one more $1+\eps$ factor in the approximate equilibrium condition (see also the proof of Lemma~\ref{lem:apxGrntAlgo}). 

Since for every vertex $i$, we need to enforce its (approximate) equilibrium condition only for $i$'s neighbors in a single group, we can scale down vertex weights in the same group uniformly (i.e., dividing all the weights in each group by the same factor), as long as we maintain the key property in the definition of groups (i.e., that the ratio of the maximum weight in group $g_j$ to the minimum weight in group $g_{j+1}$ is no less than $n/\eps$). Hence, we uniformly scale down the weights in each group so that (i) the minimum weight in group $g_1$ becomes $1$; and (ii) for each $j \geq 2$, the ratio of the maximum weight in group $g_{j-1}$ to the minimum weight in group $g_{j}$ becomes exactly $n/\eps$ (see Appendix~\ref{app:algo} for the details). This results in a new instance $G''$ where the minimum weight is $1$ and the maximum weight is $(n/\eps)^{D_\eps}$. Therefore, a $(1+\eps)$-approximate equilibrium in $G''$ can be computed, in a standard way, after at most $(m\eps)(n/\eps)^{2D_\eps}$ $\eps$-best response moves (see the proof of Lemma~\ref{lem:runTimeAlgo}). 

Putting everything together and using $\eps' = \eps/7$, so that $(1+\eps')^3 \leq 1+\eps$, for all $\eps \in (0, 1]$, we obtain the following (see Appendix~\ref{app:algo} for the formal proof). We note that the running time of \algo{} is polynomial, if $D_\eps = O(1)$ (and quasipolynomial if $D_\eps = \mathrm{poly}(\log n)$). 

\begin{theorem}\label{thm:equilibrium}
For any vertex-weighted graph $G$ with $n$ vertices and $m$ edges and any $\eps > 0$, \algo{} computes a $(1+\eps)$-approximate pure Nash equilibrium for \nmc{} on $G$ in  $(m/\eps)(n/\eps)^{O(D_\eps)}$ time, where $D_\eps$ denotes the number of different vertex weights in $G$, after rounding them down to the nearest power of $1+\eps$. 
\end{theorem}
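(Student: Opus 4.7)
The plan is to decompose \algo{} into three stages and analyze each separately: a weight-rounding preprocessing, a grouping-plus-uniform-rescaling reduction, and a standard $\eps$-best-response subroutine on the final rescaled instance $G''$. Running the whole pipeline with parameter $\eps' = \eps/7$ will yield a $(1+\eps)$-approximate equilibrium on the original graph $G$, since $(1+\eps/7)^3 \leq 1+\eps$ on $(0,1]$ and each stage contributes at most one multiplicative $(1+\eps')$ loss.

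For the rounding stage I would verify that $w'_i = (1+\eps')^{\lfloor \log_{1+\eps'} w_i \rfloor}$ satisfies $w'_i \leq w_i \leq (1+\eps') w'_i$, so every vertex utility is preserved up to a factor of $(1+\eps')$. Consequently, any $(1+\eps')^2$-approximate PNE of the rounded instance $G'$ is automatically a $(1+\eps')^3$-approximate PNE of $G$, reducing the task to producing a $(1+\eps')^2$-approximate PNE of $G'$.

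The structural heart of the argument is the group-skipping lemma. After sorting vertices by $w'_i$ and opening a new group whenever two consecutive weights differ by more than a factor of $n/\eps'$, I would fix any vertex $v$ and let $g_{j(v)}$ be the highest-indexed group meeting $v$'s neighborhood, with lightest element $\ell$ among $v$'s neighbors in $g_{j(v)}$. By the group invariant, every neighbor of $v$ in $g_1 \cup \cdots \cup g_{j(v)-1}$ has weight at most $(\eps'/n) w'_\ell$, and summing over at most $n-2$ such neighbors produces a total contribution strictly below $\eps' w'_\ell$, which is at most an $\eps'$-fraction of the contribution of the $g_{j(v)}$-neighbors. Hence enforcing the approximate-equilibrium condition only for $g_{j(v)}$-neighbors up to factor $(1+\eps')$ is enough to certify a $(1+\eps')^2$-approximate PNE in $G'$. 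Because only a single group is ever relevant per vertex, uniformly scaling every weight inside each group by its own constant cancels on both sides of every such group-restricted inequality and leaves it unchanged; this is precisely what the rescaling to $G''$ exploits, producing an instance with minimum weight $1$, maximum weight at most $(n/\eps')^{D_\eps}$, and the same relevant equilibrium structure as $G'$.

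On $G''$ I would run standard $\eps'$-best-response and bound its length with the cut-weight potential $\Phi(\vec s) = \sum_{\{u,v\}\in \delta(S(\vec s))} w''_u w''_v$, which changes by $w''_v$ times $v$'s utility gain on every flip. Any $\eps'$-improving move therefore raises $\Phi$ by at least $\eps'$: when $v$'s current utility is positive it is already at least $1$ since every scaled weight is at least $1$, so the gain is at least $\eps'$ and the $\Phi$-increase is at least $\eps' w''_v \geq \eps'$; when the current utility is zero the gain is at least $1$, which is even better. Combined with $\Phi \leq m \cdot (n/\eps')^{2 D_\eps}$, this gives at most $(m/\eps')(n/\eps')^{2 D_\eps}$ moves and matches the claimed $(m/\eps)(n/\eps)^{O(D_\eps)}$ running time after reabsorbing $\eps' = \eps/7$. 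The main obstacle I anticipate is the careful bookkeeping needed to compose the three multiplicative losses into a clean $(1+\eps)^3$ bound on $G$ while verifying that the per-group rescaling does not corrupt the group-skipping inequality when one passes between $G'$ and $G''$; tracking each utility and cut value across $G$, $G'$ and $G''$ is the job of Lemma~\ref{lem:apxGrntAlgo} and Lemma~\ref{lem:runTimeAlgo} in the appendix.
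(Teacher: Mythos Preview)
Your proposal is correct and follows essentially the same three-stage decomposition as the paper: rounding to powers of $1+\eps'$, grouping with the $n/\eps'$ gap and uniform per-group rescaling, and then $\eps'$-best-response with a potential argument, combining into a $(1+\eps')^3$ bound with $\eps'=\eps/7$. The only cosmetic differences are that you track the cut-weight potential (increasing) where the paper tracks the complementary non-cut potential (decreasing), and the paper's Lemma~\ref{lem:apxGrntAlgo} passes directly from the full $G''$ inequality to the full $G'$ inequality via a one-line scaling estimate rather than routing through the group-restricted inequality as you sketch; both arrive at the same place.
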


\section{PLS-Completeness of \lnmc{}}
\label{sec:nMCoverview}

We outline the proof of Theorem~\ref{t:main_max_cut} and the main differences of our reduction from known PLS reductions to \mc{} \cite{ET11,GS10,SY91}. The technical details can be found in Appendix~\ref{s:MC}. 

\begin{theorem}\label{t:main_max_cut}
\llnmc{} is PLS-complete.
\end{theorem}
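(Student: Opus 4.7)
My plan is to dispatch membership in PLS first---which is immediate, since for any cut $S$ one can verify in polynomial time whether some single vertex flip strictly increases $w(S)$---and then to establish PLS-hardness by reducing from \cf{}. A direct start from \llmc{} is not viable: in \nmc{} each edge weight is pinned to the product $w_u w_v$ of its endpoints' weights, whereas \llmc{} hardness reductions (such as \cite{SY91,ET11,GS10}) rely crucially on arbitrarily tunable edge weights, so I would instead rebuild the whole chain in the \nmc{} setting, combining the gadgets of \cite{SY91,ET11,GS10} with new ones tailored to the constraint $w_{uv} = w_u w_v$.

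The construction I have in mind produces a vertex-weighted graph $G$ whose vertices are organized into logical layers: a bottom layer of ``state'' vertices encoding the current input of the circuit, intermediate ``wire'' and ``gate'' layers that propagate logical values through a simulated evaluation of the circuit, and a top layer of ``improvement'' vertices whose flip in $G$ will decode to a flip in the \cf{} instance. The organizational tool is an exponentially spaced hierarchy of vertex weights $1 \ll w^{(1)} \ll w^{(2)} \ll \cdots \ll w^{(k)}$, chosen so that any vertex in layer $k$ strictly prefers its locally correct side over every possible configuration of the lighter layers; an inductive argument over this hierarchy, in the spirit of Theorem~\ref{thm:sepaPLShard} and of the weight schemes in \cite{SY91,ET11}, then shows that in any local optimum of the constructed \nmc{} instance the layers stabilize from heaviest to lightest, so that the bottom layer actually encodes a valid circuit evaluation together with an improving flip---precisely what the PLS back map $\phi_2$ needs to extract.

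Inside this scaffold, I would emulate each ``weighted edge'' of a classical gadget by routing the two logical endpoints through a small sub-gadget of auxiliary vertices whose joint contribution adds up to the intended effective weight in the target configuration and is neutralized, either by twin neighbors on the opposite side or by domination from a heavier layer, in every other configuration. On top of this I expect to import the degree-control gadget of \cite{ET11} and the combining technique of \cite{GS10}, so that the sum of neighborhood weights at every logical vertex reproduces exactly the gate or wire constraint it has to simulate.

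The main obstacle, and the reason this is not a routine modification of existing \mc{} reductions, is precisely the rigidity of \nmc{}: a vertex $v$ contributes the same weight $w_v$ to every one of its neighbors, so $v$'s influence on two different gadgets cannot be tuned independently. I plan to overcome this by splitting each logical vertex into several copies of carefully chosen weights, one copy per gadget in which it participates, and forcing all copies to take the same side in any local optimum through a synchronization sub-gadget: the copies of a given logical vertex are all attached to a heavier ``anchor'' vertex whose flip gain from any disagreement among its copies strictly dominates every other contribution to its neighborhood sum, so that in any local optimum the anchors pin their copies together. Assembling these anchors, copies, and reweighted versions of the classical gadgets inside the layered weight hierarchy, while keeping every edge weight of the form $w_u w_v$ and every auxiliary sum balanced to the intended target, is the step I expect to be the most delicate; once those calculations go through, correctness and polynomial-time computability of $\phi_1$ and $\phi_2$ will follow the standard PLS-reduction template.
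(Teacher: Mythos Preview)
Your high-level setup is right: membership in PLS is immediate, a direct reduction from \llmc{} is blocked by the $w_{uv}=w_uw_v$ constraint, and one must rebuild the chain from \cf{}. The paper does exactly this, also using gadgets from \cite{SY91,ET11,GS10} as building blocks. But your concrete technical plan has two genuine gaps.

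First, your ``copies plus heavy anchor'' idea addresses only the easy direction of the rigidity problem. Having a heavy vertex pin lighter copies to the same side is trivial in \nmc{}; the real obstruction is the \emph{reverse}. In any circuit-simulating gadget the output vertices necessarily carry much smaller weight than the inputs (this is what makes gates propagate correctly), yet in a \cf{} reduction the output of the circuit must eventually feed back and rewrite the input. So a light vertex $A$ must impose a nonzero bias on a heavy vertex $B$ while itself remaining essentially indifferent to $B$; connecting them directly makes $B$ dominate $A$, and your anchors do nothing here. Relatedly, a clean ``stabilize heaviest to lightest'' induction cannot close the feedback loop $\text{output}\to\text{input}\to\text{output}$, because that loop admits no consistent weight ordering. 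The paper resolves this with a dedicated \emph{Leverage} gadget---a chain of doubling stages placed between $A$ and $B$---that delivers bias of order $w_B/2^x$ to $B$ while $A$ sees at most the same amount back; this gadget is used at every point where information must flow ``uphill'' in weight, and the whole architecture is the two-circuit flip-flop of \cite{SY91} (two copies $C_A,C_B$, a \emph{Flag}, \emph{Copy} and \emph{Equality} gadgets) rather than a single layered pass.

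Second, simply importing the classical comparator is unsound here. At a local optimum the output vertices of the ``winning'' circuit may still be \emph{incorrect}, because they are also wired to the \emph{Copy} and \emph{Comparator} gadgets and may receive nonzero bias from them---and in \nmc{} you cannot make those connecting edges arbitrarily light. The paper's \emph{Comparator} is therefore redesigned to also read internal control variables of both circuit gadgets and is shown to output the correct comparison $\text{\emph{Real-Val}}(I_A)\le\text{\emph{Real-Val}}(I_B)$ even when some of the $Val_B$ bits are wrong. Your plan to reuse the classical gadgets ``as is'' does not account for this, and without it the back map $\phi_2$ would not be guaranteed to return a locally optimal input of $C$.
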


As discussed in Section~\ref{sec:prelim}, the local search version of \nmc{} is in PLS. To establish PLS-hardness of \nmc{}, we present a reduction from \cf{}. 

An instance of \cf~consists of a Boolean circuit $C$ with $n$ inputs and $m$ outputs (and wlog. only NOR gates). The value $C(s)$ of an $n$-bit input string $s$ is the integer corresponding to the $m$-bit output string. The neighborhood $N(s)$ of $s$ consists of all $n$-bit strings $s'$ at Hamming distance $1$ to $s$ (i.e., $s'$ is obtained from $s$ by flipping one bit of $s$). The goal is to find a locally optimal input string $s$, i.e., an $s$ with $C(s) \geq C(s')$, for all $s' \in N(s)$. \cf~was the first problem shown to be PLS-complete 
in \cite{JPY88}.

\begin{figure}[t]
\centering\includegraphics[scale = 0.5]{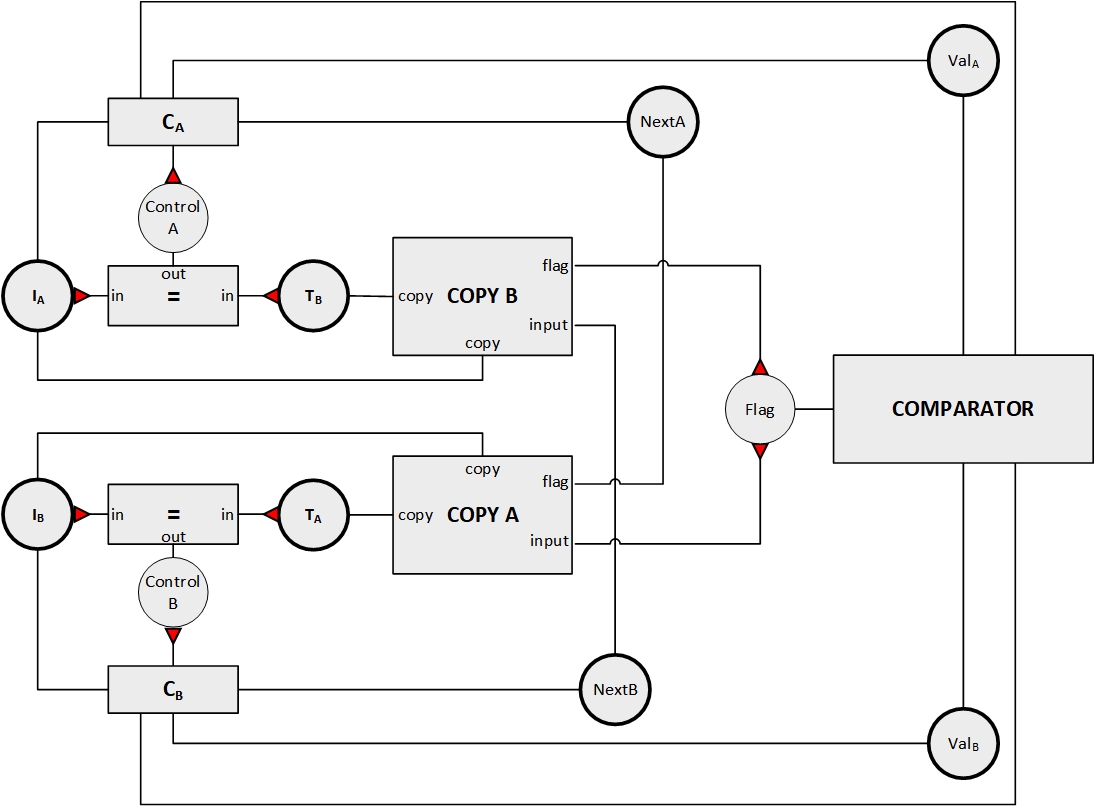}
\caption{The general structure of the \nmc{} instance constructed in the proof of Theorem~\ref{t:main_max_cut}. Rectangles denote the main gadgets, which are defined and discussed in Appendix~\ref{s:MC}, circles denote vertices that participate in multiple gadgets, and circles with bold border denote groups of $n$ such vertices. The small red triangles are used to indicate the ``information flow''. 
}
\label{f:construction}
\end{figure}

Given an instance $C$ of \cf{}, we construct below a vertex-weighted undirected graph $G(V, E)$ so that from any locally optimum cut of $G$, we can recover, in polynomial time, a locally optimal input of $C$. The graph $G$ consists of different gadgets (see Figure~\ref{f:construction}), which themselves might be regarded as smaller instances of \lnmc{}. Intuitively, each of these gadgets receives information from its ``input'' vertices, process this information, while carrying it through its internal part, and outputs the results through its ``output'' vertices. Different gadgets are glued together through their ``input'' and ``output'' vertices. 

Our construction follows the \emph{flip-flop} architecture (Figure~\ref{f:construction}), previously used e.g., in \cite{ET11,GS10,SY91}, but requires more sophisticated implementations of several gadgets, so as to conform with the very restricted weight structure of \lnmc. Next, we outline the functionality of the main gadgets and how the entire construction works (see Appendix~\ref{sec:outline}). 

Given a circuit $C$, we construct two \emph{Circuit Computing} gadgets $C_\ell$ ($\ell$ always stands for either $A$ or $B$), which are instances of \lnmc{} that simulate circuit $C$ in the following sense: Each $C_\ell$ has a set $I_\ell$ of $n$ ``input'' vertices, whose (cut) values correspond to the input string of circuit $C$, and a set $Val_\ell$ of $m$ ``output'' vertices, whose values correspond to the output string of $C$ on input $I_\ell$. There is also a set $Next\ell$ of $n$ vertices whose values correspond to a $n$-bit string in the neighborhood of $I_\ell$ of circuit value larger than that of $I_\ell$ (if the values of $Next\ell$ coincide  with the values of $I_\ell$, $I_\ell$ is locally optimal). 

The \emph{Circuit Computing} gadgets operate in two different modes, determined by  $Control_\ell$: the \emph{write mode}, when $Control_\ell = 0$, and the \emph{compute mode}, when $Control_\ell = 1$. If $C_\ell$ operates in write mode,
the input values of $I_\ell$ can be updated to the values of the complementary $Next$ set (i.e., $I_A$ is updated to $NextB$, and vice versa). When, $C_\ell$ operates in the compute mode, $C_\ell$ simulates the computation of circuit $C$ and the values of $Next\ell$ and $Val_\ell$ are updated to the corresponding values produced by $C$. Throughout the proof, we let $Real\text{-}Val(I_\ell)$ denote the output string of circuit $C$ on input $I_\ell$ (i.e., the input of $C$ takes the cut values of the vertices in $I_\ell$), and let $Real\text{-}Next(I_\ell)$ denote a neighbor of $I_\ell$ with circuit value larger than the circuit value of $I_\ell$. If $I_\ell$ is locally optimal, $Real\text{-}Next(I_\ell) = I_\ell$.

Our \emph{Circuit Computing} gadgets $C_A$ and $C_B$ are based on the gadgets of Sch\"affer and Yannakakis \cite{SY91} (see also Figure~\ref{f:CGadget} for an abstract description of them). Their detailed construction is described in Section~\ref{s:computing_gadgets} and their properties are summarized in Theorem~\ref{l:SY_gadgets}. 

The \emph{Comparator} gadget compares $Val_A$ with $Val_B$, which are intended to be $Real\text{-}Val(I_A)$ and $Real\text{-}Val(I_B)$, respectively, and outputs $1$, if $Val_A \leq Val_B$, and $0$, otherwise. The result of the \emph{Comparator} is stored in the value of the \emph{Flag} vertex. If $Flag = 1$, the \emph{Circuit Computing} gadget $C_A$ enters its write mode and the input values in $I_A$ are updated to the neighboring solution of $I_B$, currently stored in $NextB$ (everything operates symmetrically, if $Flag = 0$). Then, in the next ``cycle'', the input in $I_A$ leads $C_A$ to a $Val_A > Val_B$ (and to a better neighboring solution at $NextA$), $Flag$ becomes $0$, and the values of $I_B$ are updated to $NextA$. When we reach a local optimum, $I_A$ and $I_B$ are stabilized to the same values. 

The workflow above is implemented by the \emph{Copy} and the \emph{Equality} gadgets. The \emph{CopyB} (resp. \emph{CopyA}) gadget updates the values of $I_A$ (resp. $I_B$) to the values in $NextB$ (resp. $NextA$), if $Val_A \leq Val_B$ and $Flag = 1$ (resp. $Val_A > Val_B$ and $Flag = 0$). When $Flag = 1$, the vertices in $T_B$ take the values of the vertices in $NextB$. If the values of $I_A$ and $NextB$ are different, the \emph{Equality} gadget sets the value of $Control_A$ to $0$. Hence, the \emph{Circuit Computing} gadget $C_A$ enters its write mode and the vertices in $I_A$ take the values of the vertices in $NextB$. Next, $Control_A$ becomes $1$, because the values of $I_A$ and $NextB$ are now identical, and $C_A$ enters its compute mode. As a result, the vertices in $ValA$ and $NextA$ take the values of $Real\text{-}Val(I_A)$ and $Real\text{-}Next(I_A)$, and we proceed to the next cycle. 

A key notion used throughout the reduction is the \emph{bias} that a vertex $i$ experiences wrt. a vertex subset. The bias of vertex $i$ wrt. (or from) $V' \subseteq V$ is $\big| \sum_{j \in V_i^1  \cap V'}w_j -\sum_{j \in V_i^0 \cap V' }w_j \big|$, where $V_i^1$ (resp. $V_i^0$) denotes the set of $i$'s neighbors on the $1$ (resp. $0$) side of the cut. 

\paragr{Technical Novelty.}
Next, we briefly discuss the points where our reduction needs to deviate substantially from known PLS reductions to \mc{}. Our discussion is unavoidlably technical and assumes familiarity with (at least one of) the reductions in \cite{ET11,GS10,SY91}.

Our \emph{Circuit Computing} gadgets are based on similar gadgets used e.g., in \cite{SY91}. A key difference is that our \emph{Circuit Computing} gadgets are designed to operate with $w_{Control}$ (i.e., weight of the Control vertex) arbitrarily smaller than the weight of any other vertex in the \emph{Circuit Computing} gadget. Hence, the Control vertex can achieve a very small bias wrt. the \emph{Circuit Computing} gadget (see Case~3, in Theorem~\ref{l:SY_gadgets}), which in turn, allows us to carefully balance the weights in the \emph{Equality} gadget. The latter weights should be large enough, so as to control the write and compute modes of $C_\ell$, and at the same time, small enough, so as not to interfere with the values of the input vertices $I_\ell$. The second important reason for setting $w_{Control}$ sufficiently small is that we need the Control vertex to follow the ``output'' of the \emph{Equality} gadget, and not the other way around. 

The discussion above highlights a key difference between \mc{} and \nmc{}. Previous reductions to \mc{} (see e.g., the reduction in \cite{SY91}) implement Control's functionality using different weights for its incident edges. More specifically, Control is connected with edges of appropriately large weight to the vertices of the circuit gadget, so that it can control the gadget's mode, and with edges of appropriately small weight to vertices outside the circuit gadget, so that its does not interfere with its other neighbors. 

For \nmc{}, we need to achieve the same desiderata with a single vertex weight. We manage to do so by introducing a \emph{Leverage} gadget (see Section~\ref{s:leverage} and cases~1~and~2, in Theorem~\ref{l:SY_gadgets}). Our \emph{Leverage} gadget reduces the influence of a vertex with large weight to a vertex with small weight and is used internally in the \emph{Circuit Computing} gadget. Hence, we achieve that Control has small bias wrt. the circuit gadget and weight comparable to the weights of the circuit gadget's internal vertices. 

Another important difference concerns the implementation of the \emph{red marks} (denoting the information flow) between \emph{Flag} and the \emph{Copy} gadgets, in Figure~\ref{f:construction}. They indicate that the value of \emph{Flag} should agree with the output of the \emph{Comparator} gadget. This part of the information flow is difficult to implement, because the \emph{Comparator} gadget and the \emph{Copy} gadgets receive input from $NextA$, $NextB$, $Val_A$ and $Val_B$, where the vertex weights are comparable to the weights of the output vertices in the \emph{Circuit Computing} gadgets. As a result, the weights of the vertices inside the \emph{Comparator} gadget cannot become sufficiently larger than the weights of the vertices inside the \emph{Copy} gadgets. \cite{SY91,GS10,ET11} connect \emph{Flag} to the \emph{Copy} gadgets with edges of sufficiently small weight, which makes the bias of \emph{Flag} from the \emph{Copy} gadgets negligible compared against its bias from \emph{Comparator}. Again, the \emph{Leverage} gadget comes to rescue. We use it internally in the \emph{Copy} gadgets, in order to decrease the influence of the vertices inside the \emph{Copy} gadgets to \emph{Flag}. As a result, \emph{Flag}'s bias from the \emph{Copy} gadgets becomes much smaller than its bias from \emph{Comparator} (see Lemma~\ref{l:comparator}).

Another key technical difference concerns the design of the \emph{Comparator} gadget. As stated in Lemma~\ref{l:comparator}, the \emph{Comparator} gadget computes the result of the comparison $Real\text{-}Val(I_A) \geq Real\text{-}Val(I_A)$, even if some ``input vertices'' have incorrect values. In previous work \cite{SY91,GS10,ET11}, \emph{Comparator} guarantees correctness of the values in both $NextB$ and $Val_B$ using appropriately chosen edge weights. With the correctness of the input values guaranteed, the comparison is not hard to implement. It is not clear if this decoupled architecture of the \emph{Comparator} gadget can be implemented in \nmc{}, due to the special structure of edge weights. Instead, we implement a new \emph{all at once} \emph{Comparator}, which ensures correctness to a subset of its input values enough to perform the comparison correctly.

\section{Conclusions and Future Work}
\label{sec:concl}

In this work, we showed that equilibrium computation in linear weighted congestion games is PLS-complete either on single-commodity series-parallel networks or on multi-commodity networks with identity latency functions, where computing an equibrium for (unweighted) congestion games is known to be easy. The key step for the latter reduction is to show that local optimum computation for \nmc{}, a natural and significant restriction of \mc{}, is PLS-complete. The reductions in Section~\ref{sec:congestion} are both \emph{tight} \cite{SY91}, thus preserving the structure of the local search graph. In particular, for the first reduction, we have that (i)~there are instances of linear weighted congestion games on single-commodity series-parallel networks such that any best response sequence has exponential length; and (ii)~that the problem of computing the equilibrium reached from a given initial state is PSPACE-hard.

However, our reduction of \cf{} to \lnmc{} is not tight. Specifically, our \emph{Copy} and \emph{Equality} gadgets allow that the \emph{Circuit Computing} gadget might enter its \emph{compute} mode, before the entire input has changed. Thus, we might ``jump ahead'' and reach an equilibrium before \cf{} would allow, preventing the reduction from being tight.

Our work leaves leaves several interesting directions for further research. A natural first step is to investigate the complexity of equilibrium computation for weighted congestion games on series-parallel (or extension-parallel) networks with identity latency functions. An intriguing research direction is to investigate whether our ideas (and gadgets) in the PLS-reduction for \nmc{} could lead to PLS-hardness results for approximate equilibrium computation for standard and weighted congestion games (similarly to the results of Skopalik and V\"ocking \cite{SV08}, but for latency functions with non-negative coefficients). Finally, it would be interesting to understand better the quality of efficiently computable approximate equilibria for \nmc{} and the smoothed complexity of its local optima.

\newpage
\bibliography{references}

\begin{thebibliography}{10}

\bibitem{ARV08}
Heiner Ackermann, Heiko R{\"{o}}glin, and Berthold V{\"{o}}cking.
\newblock On the impact of combinatorial structure on congestion games.
\newblock {\em J. {ACM}}, 55(6):25:1--25:22, 2008.

\bibitem{AN17}
Omer Angel, S{\'e}bastien Bubeck, Yuval Peres, and Fan Wei.
\newblock {Local Max-Cut in Smoothed Polynomial Time}.
\newblock In {\em Proc. of the 49th {ACM SIGACT} Symposium on Theory of
  Computing (STOC 2017)}, pages 429--437, 2017.

\bibitem{BCK10}
Anand Bhalgat, Tanmoy Chakraborty, and Sanjeev Khanna.
\newblock {Approximating pure {Nash} equilibrium in cut, party affiliation, and
  satisfiability games}.
\newblock In {\em Proc. of the 11th {ACM} Conference on Electronic Commerce (EC
  2010)}, pages 73--82, 2010.

\bibitem{CF19}
Ioannis Caragiannis and Angelo Fanelli.
\newblock {On Approximate Pure Nash Equilibria in Weighted Congestion Games
  with Polynomial Latencies}.
\newblock In {\em Proc. of the 46th International Colloquium on Automata,
  Languages, and Programming (ICALP 2019)}, volume 132, pages 133:1--133:12,
  2019.

\bibitem{CFGS11}
Ioannis Caragiannis, Angelo Fanelli, Nick Gravin, and Alexander Skopalik.
\newblock {Efficient Computation of Approximate Pure Nash Equilibria in
  Congestion Games}.
\newblock In {\em Proc. of the {IEEE} 52nd Symposium on Foundations of Computer
  Science, ({FOCS} 2011)}, pages 532--541, 2011.

\bibitem{CFGS15}
Ioannis Caragiannis, Angelo Fanelli, Nick Gravin, and Alexander Skopalik.
\newblock {Approximate Pure Nash Equilibria in Weighted Congestion Games:
  Existence, Efficient Computation, and Structure}.
\newblock {\em {ACM} Transactions on Economics and Computation},
  3(1):2:1--2:32, 2015.

\bibitem{CGVYZ19}
Xi~Chen, Chenghao Guo, Emmanouil{-}Vasileios Vlatakis{-}Gkaragkounis, Mihalis
  Yannakakis, and Xinzhi Zhang.
\newblock Smoothed complexity of local max-cut and binary max-csp.
\newblock {\em CoRR}, abs/1911.10381, 2019.
\newblock URL: \url{http://arxiv.org/abs/1911.10381}.

\bibitem{CS11}
Steve Chien and Alistair Sinclair.
\newblock {Convergence to approximate Nash equilibria in congestion games}.
\newblock {\em Games and Economic Behavior}, 71(2):315--327, 2011.

\bibitem{ET11}
Robert Els{\"{a}}sser and Tobias Tscheuschner.
\newblock {Settling the Complexity of Local Max-Cut (Almost) Completely}.
\newblock In {\em Proc. of the 38th International Colloquium on Automata,
  Languages and Programming ({ICALP} 2011)}, pages 171--182, 2011.

\bibitem{ER17}
Michael Etscheid and Heiko R{\"{o}}glin.
\newblock Smoothed analysis of local search for the maximum-cut problem.
\newblock {\em {ACM} Transactions on Algorithms}, 13(2):25:1--25:12, 2017.

\bibitem{EKM03}
Eyal Even{-}Dar, Alexander Kesselman, and Yishay Mansour.
\newblock Convergence time to nash equilibria.
\newblock In {\em Proc. of the 30th International Colloquium on Automata,
  Languages and Programming ({ICALP} 2003)}, pages 502--513, 2003.

\bibitem{FPT04}
Alex Fabrikant, Christos~H. Papadimitriou, and Kunal Talwar.
\newblock {The Complexity of Pure Nash Equilibria}.
\newblock In {\em Proc. of the 36th Annual {ACM} Symposium on Theory of
  Computing (STOC 2004)}, pages 604--612, 2004.

\bibitem{FM09}
Angelo Fanelli and Luca Moscardelli.
\newblock On best response dynamics in weighted congestion games with
  polynomial delays.
\newblock In {\em Proc. of the 5th Workshop on Internet and Network Economics
  ({WINE} 2009)}, pages 55--66, 2009.

\bibitem{FGKS17}
Matthias Feldotto, Martin Gairing, Grammateia Kotsialou, and Alexander
  Skopalik.
\newblock Computing approximate pure nash equilibria in shapley value weighted
  congestion games.
\newblock In {\em Proc. of the 13th International Conference on Web and
  Internet Economics ({WINE} 2017)}, pages 191--204, 2017.

\bibitem{Fot15}
Dimitris Fotakis.
\newblock A selective tour through congestion games.
\newblock In {\em Algorithms, Probability, Networks, and Games: Scientific
  Papers and Essays Dedicated to Paul G. Spirakis on the Occasion of His 60th
  Birthday}, pages 223--241, 2015.

\bibitem{FKKMS09}
Dimitris Fotakis, Spyros~C. Kontogiannis, Elias Koutsoupias, Marios
  Mavronicolas, and Paul~G. Spirakis.
\newblock The structure and complexity of {Nash} equilibria for a selfish
  routing game.
\newblock {\em Theoretical Computer Science}, 410(36):3305--3326, 2009.

\bibitem{FKS05}
Dimitris Fotakis, Spyros~C. Kontogiannis, and Paul~G. Spirakis.
\newblock Selfish unsplittable flows.
\newblock {\em Theoretical Computer Science}, 348(2-3):226--239, 2005.

\bibitem{FKS05b}
Dimitris Fotakis, Spyros~C. Kontogiannis, and Paul~G. Spirakis.
\newblock {Symmetry in Network Congestion Games: Pure Equilibria and Anarchy
  Cost}.
\newblock In {\em Proc. of the 3rd Workshop on Approximation and Online
  Algorithms, Revised Papers (WAOA 2005)}, pages 161--175, 2005.

\bibitem{GLMM04}
Martin Gairing, Thomas L{\"{u}}cking, Marios Mavronicolas, and Burkhard Monien.
\newblock Computing {Nash} equilibria for scheduling on restricted parallel
  links.
\newblock In {\em Proc. of the 36th Annual {ACM} Symposium on Theory of
  Computing (STOC 2004)}, pages 613--622, 2004.

\bibitem{GS10}
Martin Gairing and Rahul Savani.
\newblock {Computing Stable Outcomes in Hedonic Games}.
\newblock In {\em Proc. of the 3rd Symposium on Algorithmic Game Theory ({SAGT}
  2010)}, pages 174--185, 2010.

\bibitem{GNS18}
Yiannis Giannakopoulos, Georgy Noarov, and Andreas~S. Schulz.
\newblock An improved algorithm for computing approximate equilibria in
  weighted congestion games.
\newblock {\em CoRR}, abs/1810.12806, 2018.

\bibitem{G04}
Paul~W. Goldberg.
\newblock Bounds for the convergence rate of randomized local search in a
  multiplayer load-balancing game.
\newblock In {\em Proc. of the 23rd Annual {ACM} Symposium on Principles of
  Distributed Computing ({PODC} 2004)}, pages 131--140, 2004.

\bibitem{HK10}
Tobias Harks and Max Klimm.
\newblock {On the Existence of Pure Nash Equilibria in Weighted Congestion
  Games}.
\newblock In {\em Proc. of the 37th International Colloquium on Automata,
  Languages and Programming ({ICALP} 2010)}, pages 79--89, 2010.

\bibitem{HKM11}
Tobias Harks, Max Klimm, and Rolf~H. M{\"{o}}hring.
\newblock Characterizing the existence of potential functions in weighted
  congestion games.
\newblock {\em Theory Computing Systems}, 49(1):46--70, 2011.

\bibitem{JPY88}
David~S. Johnson, Christos~H. Papadimitriou, and Mihalis Yannakakis.
\newblock How easy is local search?
\newblock {\em {Journal of Computer and System Sciences}}, 37(1):79--100, 1988.

\bibitem{Karp72}
Richard~M. Karp.
\newblock Reducibility among combinatorial problems.
\newblock In {\em Proc. of Symposium on the Complexity of Computer
  Computations}, The {IBM} Research Symposia Series, pages 85--103, 1972.

\bibitem{KS17}
Pieter Kleer and Guido Sch{\"{a}}fer.
\newblock {Potential Function Minimizers of Combinatorial Congestion Games:
  Efficiency and Computation}.
\newblock In {\em Proc. of the 2017 {ACM} Conference on Economics and
  Computation ({EC} 2017)}, pages 223--240, 2017.

\bibitem{MAK07}
Wil Michiels, Emile Aarts, and Jan Korst.
\newblock {\em {Theoretical Aspects of Local Search}}.
\newblock {EATCS Monographs in Theoretical Computer Science}. Springer, 2007.

\bibitem{MS96}
Dov Monderer and Lloyd~S. Shapley.
\newblock {Potential Games}.
\newblock {\em Games and economic behavior}, 14(1):124--143, 1996.

\bibitem{PS06}
Panagiota~N. Panagopoulou and Paul~G. Spirakis.
\newblock {Algorithms for pure Nash equilibria in weighted congestion games}.
\newblock {\em {ACM} Journal of Experimental Algorithmics}, 11, 2006.

\bibitem{Pol95}
Svatopluk Poljak.
\newblock {Integer Linear Programs and Local Search for Max-Cut}.
\newblock {\em SIAM Journal on Computing}, 21(3):450--465, 1995.

\bibitem{Ros73}
R.W. Rosenthal.
\newblock {A Class of Games Possessing Pure-Strategy Nash Equilibria}.
\newblock {\em International Journal of Game Theory}, 2:65--67, 1973.

\bibitem{SY91}
Alejandro~A. Sch{\"{a}}ffer and Mihalis Yannakakis.
\newblock {Simple Local Search Problems That are Hard to Solve}.
\newblock {\em {SIAM} Journal on Computing}, 20(1):56--87, 1991.

\bibitem{SV08}
Alexander Skopalik and Berthold V{\"{o}}cking.
\newblock {Inapproximability of Pure Nash Equilibria}.
\newblock In {\em Proc. of the 40th Annual {ACM} Symposium on Theory of
  Computing (STOC 2008)}, pages 355--364, 2008.

\bibitem{VTL82}
J.~Valdez, R.E. Tarjan, and E.L. Lawler.
\newblock {The Recognition of Series-Parallel Digraphs}.
\newblock {\em SIAM Journal on Computing}, 11(2):298--313, 1982.

\end{thebibliography}

\clearpage
\appendix

\section{The Proofs of the Theorems of Section \ref{sec:congestion}}\label{s:CG}
\subsection{The Proof of Theorem~\ref{thm:sepaPLShard}}\label{sec:singlePLS}

%
%
We will reduce from the PLS-complete problem \lmc{} and given an instance of \mc{} we will construct a network weighted network  Congestion Game for which the Nash equilibria will correspond to maximal solutions of \lmc{} and vice versa. First we give the construction and then we  prove the theorem. For the  formal PLS-reduction, which needs functions $\phi_1$ and $\phi_2$, $\phi_1$ returns the (polynomially) constructed instance described below and $\phi_2$ will be revealed later in the proof.

Let $H(V,E)$ be an edge-weighted graph of a \lmc{} instance and let $n=|V|$ and $m=|E|$. In the constructed network weighted CG  instance there will be $3n$ players which will share $n$ different weights inside the set $\{16^i:i\in [n]\}$ so that for every $i\in[n]$ there are exactly 3 players having weight $w_i=16^i$.  All  players share a common origin-destination pair $o-d$ and choose $o-d$  paths on a series-parallel graph $G$.
Graph $G$ is a parallel composition  of two identical  copies of a series-parallel graph. Call these copies $G_1$ and $G_2$. In turn, each of $G_1$ and $G_2$ is a series composition of $m$ different series-parallel graphs, each of which corresponds to the $m$ edges of $H$. For every $\{i,j\}\in E$ let $F_{ij}$ be the series-parallel graph that corresponds to $\{i,j\}$. Next we describe the construction of   $F_{ij}$, also shown in Fig. \ref{fig:sepaPLS}.


$F_{ij}$ has 3 vertices, namely $o_{ij},v_{ij}$ and $d_{ij}$ and $n+1$ edges. For any $k\in [n]$ other than $i,j$ there is an $o_{ij}-d_{ij}$ edge with latency function $\ell_k(x)=\frac{Dx}{4^k}$, where $D$ serves as a big constant to be defined later.  There are also two $o_{ij}-v_{ij}$ edges, one with latency function  $\ell_i(x)=\frac{Dx}{4^i}$ and one with latency function  $\ell_j(x)=\frac{Dx}{4^j}$. Last, there is a $v_{ij}-d_{ij}$ edge with latency function  $\ell_{ij}(x)=\frac{w_{ij}x}{w_iw_j}$, where $w_{ij}$ is the weight of edge $\{i,j\}\in E$ and $w_i$ and $w_j$ are the weights of  players $i$ and $j$, respectively, as described earlier. Note that in every $F_{ij}$ and for any  $k\in [n]$ the latency function $\ell_k(x)=\frac{Dx}{k}$  appears in exactly one edge. With $F_{ij}$ defined, an example of the structure of such a network $G$ is given in  Fig. \ref{fig:sepaExample}. 

Observe that in each of $G_1$ and $G_2$ there is a unique  path that contains all the edges with latency functions $\ell_i(x)$, for $i\in[n]$, and call these paths $p^u_i$ and $p_i^l$ for the upper ($G_1$) and lower ($G_2$) copy respectively. Note that each of $p^u_i$ and $p_i^l$ in addition to those edges, contains some edges with latency function of the form $\frac{w_{ij}x}{w_iw_j}$. These edges for path $p^u_i$ or $p_i^l$ is in one to one correspondence to the edges of vertex $i$ in $H$  and this is crucial for the proof.


We go on to prove the correspondence of Nash equilibria in $G$ to maximal cuts in $H$, i.e., solutions of \lmc. We will first show that at a Nash equilibrium, a player of weight $w_i$ chooses either $p^u_i$ or $p_i^l$. Additionally, we prove that $p^u_i$ and $p_i^l$  will have  at least one player (of weight $w_i$). 
This already provides a good structure of a Nash equilibrium and players of different weights, say $w_i$ and $w_j$, may go through the same edge in $G$ (the edge with latency function $w_{ij}x/w_iw_j$)  only if  $\{i,j\}\in E$. The correctness of the reduction lies in the fact that players in $G$ try to minimize their costs incurred by these type of edges  in the same way  one wants to  minimize the sum of the weights of the edges in each side of the cut when solving \lmc. 

To begin with, we will prove that  at equilibrium any player of weight $w_i$ chooses either $p^u_i$ or $p_i^l$ and at least one such player chooses each of $p^u_i$ and $p_i^l$. For that, we will need the following proposition as a building block, which will also reveal a suitable value for $D$.

\begin{proposition}\label{prop:PlayersToPaths}
For some $i,j\in[n]$  consider $F_{ij}$ (Fig. \ref{fig:sepaPLS}) and assume that for all $k\in[n]$,  there are either one, two or three players of weight $w_k$ that have to choose an $o_{ij}-d_{ij}$ path. At equilibrium,  all players of weight $w_k$ (for any $k\in[n]$) will go through the path 
that contains a edge  with latency function $\ell_k(x)$. 
\end{proposition}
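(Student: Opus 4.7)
The plan is to proceed by downward induction on $k$, from $k = n$ down to $k = 1$, proving at each step that at every equilibrium of $F_{ij}$, each player of weight $w_k$ is routed through the path that contains the edge with latency $\ell_k$ (the direct $o_{ij}$--$d_{ij}$ edge when $k \notin \{i,j\}$, and the two-edge path through $v_{ij}$ otherwise). The arithmetic that drives the argument is that a player of weight $w_k = 16^k$ pays a self-cost of exactly $D \cdot 4^k$ on the matched $\ell_k$ edge but $D \cdot 4^{2k - \ell}$ on any mismatched $\ell_\ell$ edge, so moving by one index in either direction swings the self-cost by a factor of $4$.

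For the base case $k = n$, I will exploit the fact that $D/4^n$ is the smallest slope in $F_{ij}$. Even when all three players of weight $w_n$ share the $\ell_n$ edge, the cost there is at most $3D \cdot 4^n$, plus an $O(w_{ij})$ contribution from $\ell_{ij}$ when $n \in \{i,j\}$. On any other edge $\ell_\ell$ with $\ell < n$, the player's own weight alone yields $D \cdot 4^{2n - \ell} \geq 4 D \cdot 4^n$, which strictly exceeds $3D \cdot 4^n$ once $D$ is chosen polynomially large enough to absorb the $\ell_{ij}$ additive term.

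For the inductive step, I will assume players of weights $w_{k+1}, \ldots, w_n$ have already been placed on their matched edges, and suppose for contradiction that some player of weight $w_k$ sits on $\ell_\ell$ with $\ell \neq k$ at equilibrium. I will then bound the cost of deviating to the $\ell_k$ path. The post-deviation load on $\ell_k$ is at most $3 w_k$ from players of weight $w_k$ plus $\sum_{\ell' < k} 3 \cdot 16^{\ell'} < w_k / 5$ from lighter players (a geometric sum), so the deviation cost on $\ell_k$ is at most $(16/5) D \cdot 4^k + O(w_{ij})$. The current cost on $\ell_\ell$, by contrast, is at least $4 D \cdot 4^k$: when $\ell > k$, the inductive hypothesis places a player of weight $w_\ell$ on $\ell_\ell$ contributing $D \cdot 4^\ell \geq 4 D \cdot 4^k$; when $\ell < k$, the player's own weight alone gives self-cost $D \cdot 4^{2k - \ell} \geq 4 D \cdot 4^k$. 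The resulting gap of $(4/5) D \cdot 4^k$ will dominate every $\ell_{ij}$ contribution for polynomially large $D$, contradicting the assumed equilibrium.

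The main obstacle I anticipate is bookkeeping the $\ell_{ij}$ edge consistently across the cases $k \in \{i,j\}$ and $k \notin \{i,j\}$, since both the matched and the deviating path may or may not traverse $\ell_{ij}$, and several players of weights $w_i$ and $w_j$ may share it. Its latency $w_{ij} x / (w_i w_j)$ however induces a total contribution of at most $w_{ij} \cdot 3(w_i + w_j) / (w_i w_j) = O(w_{ij})$ under any feasible load, independently of $D$. Choosing $D$ to be a sufficiently large polynomial in $n$ and $\max_{ij} w_{ij}$ (as the theorem allows) then makes every dominance inequality strict and closes the induction.
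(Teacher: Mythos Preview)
Your plan is correct and follows essentially the same route as the paper: downward induction on $k$, upper--bounding the cost on the matched edge $\ell_k$ (by loading it with all players of weight $\leq w_k$) and lower--bounding the cost on any mismatched edge $\ell_\ell$ (by self-load alone when $\ell<k$, and by the inductively placed $w_\ell$-player when $\ell>k$), then absorbing the $\ell_{ij}$ term by taking $D$ large.

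One quantitative slip to fix: your bound on the $\ell_{ij}$ contribution is not valid during the induction. At step $k$ you have only pinned down the players of weight $w_{k+1},\dots,w_n$; the lighter players may sit anywhere, in particular on the two-edge path through $v_{ij}$, so the load on $e_{ij}$ is not bounded by $3(w_i+w_j)$ but only by the total weight $3\sum_{l=1}^n 16^l$. The resulting cost is at most $w_{ij}\cdot 3\sum_l 16^l /(w_i w_j) \leq 16^{n+1}\max_{qr} w_{qr}$, which is exponential in $n$, not $O(w_{ij})$. Correspondingly, $D$ must be taken of that order (the paper sets $D=16^{n+1}\max_{qr} w_{qr}$); ``polynomial in $n$'' in the value sense does not suffice, though the bit-length of $D$ is of course polynomial in the input. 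With this correction the gap $(4/5)D\cdot 4^k$ still dominates for every $k$ and your argument closes exactly as intended.
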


\begin{proof}
The proof is by induction on the different weights starting from bigger weights. For any $k\in [n]$ call $e_k$ the edge of $F_{ij}$ with latency function $\ell_k(x)$ and call $e_{ij}$ the edge with latency function $\frac{w_{ij}x}{w_iw_j}$. For some $k\in[n]$ assume that for all $l>k$ all players of weight $w_l$ have chosen the path containing $e_l$ and lets prove that this is the case for players of weight $w_k$ as well. Since $D$ is going to be big enough, for the moment ignore edge $e_{ij}$ and assume that in $F_{ij}$ there are only $n$ parallel paths each consisting of a single edge. 

Let the players be at equilibrium and consider any player, say player $K$, of weight $w_k$. The cost she computes on $e_k$ is upper bounded by the cost of $e_k$  if all players with weight up to $w_k$ are on $e_k$, since by induction players with weight $>w_k$ are not on $e_k$ at equilibrium. This cost is upper bounded by  $c^k=\frac{D(3\sum_{l=1}^{k}16^l)}{4^k}=\frac{3D\frac{16^{k+1}-1}{16-1}}{4^k}$.


For any edge $e_l$ for $l<k$, the cost that $K$ computes is  lower bounded by $c^<=\frac{D16^k}{4^{k-1}}$ 
since she must include herself in the load of $e_l$ and the edge with the smallest slope in its latency function is $e_{k-1}$. But then $c^k<c^<$, since 
\[
c^k  < c^<  \Leftrightarrow
\frac{3D\frac{16^{k+1}-1}{16-1}}{4^k} <
\frac{D16^k}{4^{k-1}}\Leftrightarrow
48\cdot16^{k}-3 < 60\cdot16^k
\]

Thus, at equilibrium  players of weight $w_k$ cannot be on any of the $e_l$'s for all $l<k$. On the other hand, the cost that $K$ computes for $e_l$ for $l>k$ is at least $c_l^>=\frac{D(16^l+16^k)}{4^l}$, since by induction $e_l$ is already chosen by at least one player of weight $w_l$. But then $c^k<c^>_l$ since
\begin{align*}
\frac{3D\frac{16^{k+1}-1}{16-1}}{4^k} &<\frac{D(16^l+16^k)}{4^l}\Leftrightarrow \\
48\cdot16^{k}-3 & <15\frac{16^{l}+16^k}{4^{l-k}}\Leftrightarrow \\
48\cdot4^{l-k}16^k &< 15\cdot16^l=15\cdot4^{l-k}4^{l-k}16^k \,.
\end{align*}
Thus, at equilibrium  players of weight $w_k$ cannot be on any of the $e_l$'s for all $l>k$. 

This completes the induction for the simplified case where we ignored the existence of $e_{ij}$, but lets go on to include it and define $D$ so that the same analysis goes through.
By the above, $c^<-c^k=\frac{D16^k}{4^{k-1}}-\frac{3D\frac{16^{k+1}-1}{16-1}}{4^k}>D$ and also for any $l>k$ it is $c^>_l-c^k=\frac{D(16^l+16^k)}{4^l}-\frac{3D\frac{16^{k+1}-1}{16-1}}{4^k}>D$ (this difference is minimized for $l=k+1$). On the other hand the maximum cost that edge $e_{ij}$ may have is bounded above by $c^{ij}=\frac{w_{ij}3\sum_{l=1}^n16^l}{w_iw_j}$, as $e_{ij}$ can be chosen by at most all of the players and note that $c^{ij}\leq 16^{n+1}\max_{q,r\in [n]}w_{qr}$. Thus, one can choose a big value for $D$, namely $D=16^{n+1}\max_{q,r\in [n]}w_{qr}$, so that even if  a player with weight $w_k$ has to add the cost of $e_{ij}$  when computing her path cost, it still is $c^{ij}+c^k<c^<$ (since $c^<-c^k>D\geq c_{ij}$) and for all $l>k$: $c^{ij}+c^k<c^>_l$ (since $c^>_l-c^k>D\geq c^{ij}$), implying that at equilibrium all players of weight $w_k$ may only choose the path through $e_k$. 
\end{proof}

Other than revealing a value for $D$, the proof of Porposition \ref{prop:PlayersToPaths}  reveals a crucial property: a player of weight $w_k$ in $F_{ij}$ strictly prefers the path containing $e_k$ to the path containing $e_l$ for any $l<k$, independent to whether players of weight $>w_k$ are present in the game or not. With this in mind we go back to prove that at equilibrium any player of weight $w_i$ chooses either $p^u_i$ or $p_i^l$ and at least one such player chooses each of $p^u_i$ and $p_i^l$. The proof is by induction, starting from  bigger weights.

Assume that by the inductive hypothesis for every $i>k$, players with weights  $w_i$ have chosen paths $p^u_i$ or $p_i^l$ and at least one such player chooses each of $p^u_i$ and $p_i^l$. Consider a player of weight $w_k$, and,  wlog, let her have chosen an $o-d$ path through $G_1$. Since at least one player for every bigger weight is by induction already in the paths of $G_1$ (each in her corresponding $p_i^u$), Proposition \ref{prop:PlayersToPaths} and the remark after its proof give that in each of the $F_{ij}$'s the player of weight $w_k$ has chosen the subpath of $p_k^u$, and this may happen only if her chosen path is $p_k^u$. It remains to show that there is another player of weight $w_k$ that goes through $G_2$, which, with an argument similar to the previous one, is equivalent to this player choosing path $p_k^l$.

To reach a contradiction, let  $p_k^u$ be chosen by all three players of weight $w_k$, which leaves $p_k^l$ empty. Since all players of bigger weights are by induction  settled in paths completely disjoint to $p_k^l$, the load on this path if we include a player of weight $w_k$ is upper bounded by  the sum of  all players of weight $<w_k$ plus $w_k$, i.e., $16^k+3\sum_{t=1}^{k-1}16^t=16^k+3\frac{16^k-1}{16-1}$, which is less than the lower bound on the load of  $p_k^u$, i.e., $3\cdot16^k$ (since $p_k^u$ carries 3 players of weight $16^k$). This already is a contradiction to the equilibrium property, since $p_k^u$ and  $p_k^l$ share the exact same latency functions on their edges which, given the above inequality on the loads, makes $p_k^u$ more costly than $p_k^l$ for a player of weight $w_k$. To summarize, we have the following.

\begin{proposition}\label{prop:playersToPaths2}
At equilibrium, for every $i\in[n]$ a player of weight $w_i$ chooses either $p_i^u$ or $p_i^l$. Additionally, each of $p_i^u$ and  $p_i^l$ have been chosen by at least one player (of weight $w_i$).
\end{proposition}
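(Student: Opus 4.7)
The plan is a strong downward induction on $k$ from $n$ to $1$. The inductive hypothesis at step $k$ is that for every $i > k$, all three players of weight $w_i$ pick paths in $\{p_i^u, p_i^l\}$, and each of these two paths hosts at least one such player; this is vacuous at $k=n$, so the base case is subsumed by the inductive step.

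For the first statement at step $k$, I would fix a weight-$w_k$ player and, without loss of generality, assume her $o$-$d$ path traverses $G_1$. Within every component $F_{ij}$, the inductive hypothesis places at least one player of every heavier weight class on the $G_1$-side subpath of $F_{ij}$, so the assumptions of Proposition~\ref{prop:PlayersToPaths} are satisfied for the heavy weights. That proposition, together with the remark following its proof (which crucially states that the preference for the $\ell_k$-edge over any $\ell_l$-edge with $l<k$ is independent of what heavier or lighter players do), forces our weight-$w_k$ player to take the $\ell_k$-edge inside every $F_{ij}$ along her path. The unique $o$-$d$ path in $G_1$ that threads every $\ell_k$-edge is $p_k^u$, so she plays $p_k^u$; the symmetric argument through $G_2$ gives $p_k^l$.

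For the second statement, I would argue by contradiction: suppose all three weight-$w_k$ players occupy $p_k^u$, the other case being symmetric. Because $G_1$ and $G_2$ are identical copies, $p_k^u$ and $p_k^l$ carry the same sequence of latency functions edge by edge, so it suffices to compare the would-be switcher's per-edge costs. On the private $\ell_k$-edges of $p_k^u$ the load is exactly $3\cdot 16^k$ (only weight-$w_k$ players use $\ell_k$-edges, by Proposition~\ref{prop:PlayersToPaths}), while after the hypothetical switch the corresponding $\ell_k$-edge of $p_k^l$ carries load $16^k$; the per-edge saving scales with the parameter $D$, and there is one such edge inside every $F_{ij}$. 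The only other edges on $p_k^u$ and $p_k^l$ are the $\ell_{kj}$-edges with $\{k,j\}\in E$, shared with $p_j^u$ and $p_j^l$ respectively; by the inductive hypothesis each of $p_j^u, p_j^l$ already hosts at least one weight-$w_j$ player, so the worst-case adverse load shift on any shared edge is bounded in absolute terms, independently of $D$. Choosing $D$ as large as in Proposition~\ref{prop:PlayersToPaths} therefore makes the total $D$-scaled saving on the private $\ell_k$-edges strictly dominate the total $D$-independent loss on the $\ell_{kj}$-edges, so the switch is profitable and the configuration cannot be an equilibrium.

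The main obstacle is the second statement, where one must cleanly separate a large, definite saving on the private $\ell_k$-edges from a small, potentially adverse shift on the shared $\ell_{kj}$-edges; the separation of the two scales is exactly what the choice of $D$ made in Proposition~\ref{prop:PlayersToPaths} is calibrated for, and any weaker choice would prevent the argument from closing.
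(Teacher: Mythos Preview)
Your overall induction and the argument for the first claim match the paper exactly. For the second claim your decomposition into a $D$-scaled saving on the $\ell_k$-edges versus a $D$-independent loss on the $\ell_{kj}$-edges is sound, but note a small imprecision: at step $k$ you cannot yet assert that the load on the $\ell_k$-edges is \emph{exactly} $3\cdot16^k$ (respectively $16^k$), since the lighter players' positions are not constrained by the inductive hypothesis and they may sit on those edges too. This is harmless---their total weight is below $16^k/5$, so the saving remains $\Theta(D\cdot4^k)$---but the word ``exactly'' and the appeal to Proposition~\ref{prop:PlayersToPaths} for it are unjustified. The paper takes a cleaner route that sidesteps $D$ altogether: since $p_k^u$ and $p_k^l$ are edge-by-edge copies with identical linear latencies, it suffices to compare loads edge by edge. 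Every edge of $p_k^u$ carries at least the three weight-$w_k$ players (load $\geq 3\cdot16^k$), while after the hypothetical switch every edge of $p_k^l$ carries at most the switcher plus all lighter players (load $\leq 16^k + 3\sum_{t<k}16^t < 3\cdot16^k$), so each edge is strictly cheaper on $p_k^l$. This avoids any appeal to the size of $D$ or to who shares the $\ell_{kj}$-edges, and in particular does not need the clause ``by the inductive hypothesis each of $p_j^u, p_j^l$ already hosts at least one weight-$w_j$ player'', which in your version is only valid for $j>k$.
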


Finally, we prove that every equilibrium of the constructed instance corresponds to a maximal solution of \lmc{} and vice versa. Given a maximal solution $S$ of \lmc{} we will show that  the configuration $Q$ that for every $k\in S$ routes $2$ players through $p_k^u$ and $1$ player through $p_k^l$ and for every $k\in V\setminus S$ routes $1$ player through $p_k^u$ and $2$ players through $p_k^l$ is an equilibrium. Conversely, given an equilibrium $Q$ the cut $S=\{k\in V:\mbox{2 players have chosen $p_k^u$ at $Q$}\}$  is a maximal solution of \lmc. 

Assume that we are at equilibrium and consider a player of weight $w_k$ that has chosen $p_k^u$ and wlog $p_k^u$ is chosen by two players (of weight $w_k$). By the equilibrium conditions the cost she computes for $p_k^u$ is at most the cost she computes for $p_k^l$,  which, given Proposition \ref{prop:playersToPaths2}, implies
$$\sum_{i=1}^m\frac{2D16^k}{4^k}+\sum_{\{k,j\}\in E}\frac{w_{kj}(2\cdot16^k+x_j^u16^j)}{16^k16^j} \leq \sum_{i=1}^m\frac{2D16^k}{4^k}+\sum_{\{k,j\}\in E}\frac{w_{kj}(2\cdot16^k+x_j^l16^j)}{16^k16^j}$$
where $x^u_j$ (resp. $x_j^l$) is either $1$ or $2$ (resp. $2$ or $1$) depending whether, for any $j:\{k,j\}\in E$, one or two players (of weight $w_j$) respectively have chosen path $p_j^u$. By canceling out terms, the above implies 
\begin{equation}\label{eqn:LmcToEqCorrespondence}
\sum_{\{k,j\}\in E}w_{kj}x_j^u \leq \sum_{\{k,j\}\in E}w_{kj}x_j^l\Leftrightarrow \sum_{\{k,j\}\in E}w_{kj}(x_j^u-1) \leq \sum_{\{k,j\}\in E}w_{kj}(x_j^l-1)
\end{equation}

Define $S=\{i\in V:x_i^u=2\}$. By our assumption it is  $k\in S$ and the left side  of (\ref{eqn:LmcToEqCorrespondence}), i.e., $\sum_{\{k,j\}\in E}w_{kj}(x_j^u-1)$, is the sum of the weights of the edges of $H$ with one of its vertices being $k$ and the other belonging in $S$. Similarly, the right side of of (\ref{eqn:LmcToEqCorrespondence}), i.e., $\sum_{\{k,j\}\in E}w_{kj}(x_j^l-1)$ is the  sum of the weights of the edges with one of its vertices being $k$ and the other belonging in $V\setminus S$. But then  (\ref{eqn:LmcToEqCorrespondence}) directly implies  that for the (neighboring) cut $S'$ where  $k$  goes from $S$ to $V\setminus S$ it holds $w(S)\geq w(S')$. Since $k$ was arbitrary (given the symmetry of the problem), this holds for every $k\in [n]$ and thus for every $S'\in N(S)$ it is  $w(S)\geq w(S')$ proving one direction of the claim. Observing that the argument works backwards we complete the proof.   For the formal proof, to define function $\phi_2$, given the constructed instance and one of its solutions, say $s'$, $\phi_2$ returns  solution $s=\{k\in V:\mbox{2 players have chosen $p_k^u$ at $s'$}\}$. \qed 

\subsection{The Proof of Theorem~\ref{thm:mutiPLS}}
\label{sec:multiPLS}

We will reduce from the PLS-complete problem \lnmc{}. Our construction draws ideas from Ackermann et al. \cite{ARV08}. For an instance of \lnmc{} we will construct a multi-commodity network CG where every equilibrium will correspond to a maximal solution of \lnmc{}  and vice versa. For the  formal PLS-reduction, which needs functions $\phi_1$ and $\phi_2$, $\phi_1$ returns the (polynomially) constructed instance described below and $\phi_2$ will be revealed later in the proof.

We will use only the identity function as the  latency function of every edge, but for ease of presentation
we will first prove our claim assuming we can use constant latency functions on the edges. Then we will describe how we can drop this assumption and  use only the identity function on all edges, and have the proof still going through.

Let $H(V,E)$ be the vertex-weighted graph of an instance of \lnmc{} with $n=|V|$ vertices and $m=|E|$ edges. The network weighted congestion game has $n$ players, with player $i$ having her own origin destination $o_i-d_i$ pair and weight $w_i$ equal to the weight of vertex $i\in V$. In the constructed network there will be many $o_i-d_i$ paths for every player $i$ but there will be exactly two paths that cost-wise dominate all others. At equilibrium, every player will choose one of these two paths that correspond to her. This choice for player $i$ will be equivalent to picking the side of the cut that vertex $i$ should lie in order to get a maximal solution of \lnmc{}.

\begin{figure}
    \centering
    \includegraphics[scale=0.5]{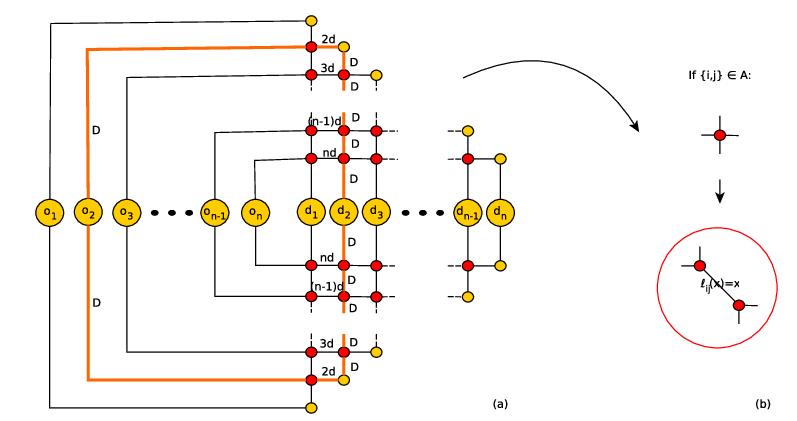}
    \caption{(a) The construction of the reduction of Theorem \ref{thm:mutiPLS}. As an example, in orange are  the least costly $o_2-d_2$ paths $p_2^u$ (up) and $p_2^l$ (down), each with cost equal to $2D+2d+(n-2)D$. (b) The replacement of the red vertex at the $i$-th row and $j$-th column of the upper half-grid whenever edge $\{i,j\}\in E$. A symmetric replacement happens in the lower half-grid.}
    \label{fig:multiPLS}
\end{figure}

The initial network construction is shown in Fig. \ref{fig:multiPLS}. It has $n$ origins and $n$ destinations. The rest of the vertices lie either on  the lower-left half (including the diagonal) of a  $n\times n$ grid, which we call the upper part, or the upper-left half   of another $n\times n$ grid, which we call the lower part. Other than the edges of the two half-grids that are all present,  there are edges connecting the origins and the destinations to the two parts. For $i\in [n]$, origin $o_i$ in each of the upper and lower parts connects  to the first (from left to right) vertex  of the row that has $i$ vertices in total. For $i\in [n]$, destination $d_i$ in each of the upper and lower parts connects to the   $i$-th vertex of the row that has $n$ vertices in total. To define the (constant) latency functions, we will need 2 big constants, say $d$ and  $D=n^3d$, 
and note that $D\gg d$. 

All edges that connect to an origin or a destination 
and all the vertical edges of the half-grids will have  constant $D$ as their latency function, and any horizontal edge that lies on a row with $i$ vertices  will have  constant $i\cdot d$ as its latency function. 
To finalize the construction we will do some small changes but note that, as it is now, player $i$ has two shortest paths that are far less costly (at least by $d$) than all other paths. These two paths are path $p^u_i$ that starts at $o_i$, continues horizontally through the upper part for as much as it can and then continues vertically to reach $d_i$, and path $p^l_i$ which does the exact same thing through the lower part (for an example see Fig. \ref{fig:multiPLS}a). Each of $p^u_i$ and $p^l_i$ costs equal to  $c_i=2D+i(i-1)d+(n-i)D$. To verify this claim simply note that (i) if a path  tries to go through another origin or moves vertically away from $d_i$ in order to reach less costly horizontal edges, then it will have to pass through at least $(2+i-1)+2$ vertical edges of cost $D$ and  its cost from such edges compared to $p^u_i$'s and  $p^l_i$'s costs increases  by at least $2D=2n^3d$,  which is already more than paying all horizontal edges;  and (ii) if it moves vertically towards $d_i$ earlier than $p^u_i$ or $p^l_i$ then its cost increases by at least $d$, since it moves towards more costly horizontal edges.

To complete the construction if $\{i,j\}\in E$ (with wlog $i<j$) we replace the (red) vertex at position $i,j$ of the upper and the lower half-grid ($1,1$ is top left for the upper half-grid and lower left for the lower half-grid) with two vertices connected with an edge, say $e^u_{ij}$ and $e_{ij}^l$ respectively, with latency function $\ell_{ij}(x)=x$, where the first vertex connects with the vertices at positions $i,j-1$ and $i-1,j$ of the grid and the second vertex connects to the vertices at positions $i+1,j$ and $i,j+1$ (see also Fig. \ref{fig:multiPLS}b). Note that if we take $d\gg \sum_{k\in [n]}w_k$, then,  for any $i\in [n]$, paths $p^u_i$ and $p^l_i$  still have significantly lower  costs  than all other $o_i-d_i$ paths. Additionally, if $\{i,j\}\in E$ then $p^u_i$ and $p^u_j$ have a single common edge and $p^l_i$ and $p^l_j$ have a single common edge, namely $e^u_{ij}$ and $e_{ij}^l$ respectively,  which add some extra cost to the paths (added to $c_i$ defined above).

Assume we are at equilibrium. By the above discussion player $i\in[n]$ may only have chosen $p^u_i$ or $p^l_i$. Let $S=\{i\in [n]: \mbox{player $i$ has chosen $p^u_i$}\}$. We will prove that $S$ is a solution to \lnmc{}. By the equilibrium conditions for every $i\in S$ the cost of $p_i^u$, say $c_i^u$, is less than or equal to the cost of $p^l_i$, say $c_i^l$. Given the choices of the rest of the players, and by defining $S_i$ to be the neighbors of $i$ in $S$, i.e. $S_i=\{j\in S:\{i,j\}\in E\}$, and $V_i$ be the neighbors of  $i$ in $V$, i.e. $V_i=\{j\in V:\{i,j\}\in E\}$,  $c_i^u\leq c_i^l$ translates to 
\begin{multline*}
\Big(2D+i(i-1)d+(n-i)D\Big)+\Big(\sum_{j\in V_i}w_i+\sum_{j\in S_i}w_j\Big)\\
\leq
\Big(2D+i(i-1)d+(n-i)D\Big)+\Big(\sum_{j\in V_i}w_i+\sum_{j\in V_i\setminus S_i}w_j\Big)\,,
\end{multline*}
with the costs in the second and fourth parenthesis coming from the $e_{ij}$'s for the different $j$'s. This equivalently gives $$\sum_{j\in S_i}w_j\leq\sum_{j\in V_i\setminus S_i}w_j\Leftrightarrow \sum_{j\in S_i}w_iw_j\leq\sum_{j\in V_i\setminus S_i}w_iw_j.$$ 

The right side of the last inequality equals to the weight of the edges with $i$ as an endpoint that cross  cut $S$. The left side equals to the weight of the edges with $i$ as an endpoint that cross the cut $S'$, where $S'$ is obtained by moving $i$ from $S$ to $V\setminus S$. Thus for $S$ and $S'$ it is $w(S')\leq w(S)$. A similar argument (or just symmetry) shows that if $i\in V\setminus S$ and we send $i$ from $V\setminus S$ to  $S$ to form a cut $S'$ it would again be  $w(S')\leq w(S)$. Thus, for any $S'\in N(S)$ it is $w(S)\geq w(S')$ showing that  $S$ is a solution to \lnmc{}. Observing that the argument works backwards we have that from an arbitrary solution of \lnmc{} we may get an equilibrium for the constructed weighted CG  instance. For the formal part, to define function $\phi_2$, given the constructed instance and one of its solutions, $\phi_2$ returns  solution $s=\{i\in [n]: \mbox{player $i$ has chosen $p^u_i$}\}$.

What remains to show is how we can almost simulate the constant latency functions so that we use only the identity function on all edges and, for every $i\in [n]$, player $i$ still may only choose paths $p^u_i$ or $p^l_i$ at equilibrium.  Observe that,  
since we have a multi-commodity instance we can simulate (exponentially large) constants by replacing an edge $\{j,k\}$ with a three edge path $j-o_{jk}-d_{jk}-k$, adding a complementary player with origin $o_{jk}$ and destination $d_{jk}$ and weight equal to the desired constant. Depending on the rest of the structure we may additionally  have to make sure (by suitably defining latency functions) that this player prefers going through  edge $\{o_{jk},d_{jk}\}$ at equilibrium. 

To begin with, consider any horizontal edge $\{j,k\}$ with latency function $i\cdot d$ (for some $i\in [n]$) and replace it with a three edge path $j-o_{jk}-d_{jk}-k$. Add a player  with origin $o_{jk}$ and destination $d_{jk}$ with weight equal to $i n^3 w$, where $w=\sum_{i\in [n]}w_i$,  and let all edges have the identity function. At equilibrium no matter the sum of the weights of the players that choose this three edge path, the $o_{jk}-d_{jk}$ player prefers to use the direct $o_{jk}-d_{jk}$ edge or else she pays at least double the cost (middle edge vs first and third edges). Thus the above replacement is (at equilibrium) equivalent to having edge $\{j,k\}$ with latency function $3x+in^3w=3x+i\cdot d$, for $d=n^3w$.

 Similarly, consider any  edge $\{j,k\}$ with latency function $D$  and replace it with a three edge path $j-o_{jk}-d_{jk}-k$. Add a player  with origin $o_{jk}$ and destination $d_{jk}$ with weight equal to $n^3d$  and let all edges have the identity function. Similar to above, 
 this replacement is (at equilibrium) equivalent to having edge $\{j,k\}$ with latency function $3x+n^3d=3x+D$, for $D=n^3d$. 

With these definitions, at equilibrium, all complementary players will go through the correct edges and, due to the complementary players, all edges that connect to an origin or a destination will have cost $\approx D$, all vertical edges of the half-grids will cost  $\approx D$, and any horizontal edge that lies on a row with $i$ vertices will cost  $\approx i\cdot d$, where ``$\approx$'' means at most within $\pm 3w=\pm\frac{3d}{n^3}$ (note that $w$ is the maximum weight that the $o_i-d_i$ players can add to each of the three edge paths). Additionally, for every $i\in [n]$,  $p^u_i$ and  $p^l_i$ are structurally identical, i.e., they have the same structure, identical complementary players on their edges and share the same latency functions. All the above make the analysis go through in the same way as in the simplified construction. \qed

\section{Missing Technical Details from the Analysis of~\algo{}}
\label{app:algo}

In this section, we present an algorithm that computes approximate equilibria for \nmc{}. Let $G(V,E)$ be vertex-weighted graph with $n$ vertices and $m$ edges, and consider any $\eps>0$. The algorithm, called \algo{} and formally presented in Algorithm \ref{Algorithm}, returns a $(1+\eps)^3$-approximate equilibrium (Lemma~\ref{lem:apxGrntAlgo}) for $G$ in time $O(\frac{m}{\eps}\lceil\frac{n}{\eps}\rceil^{2D_\eps})$ (Lemma~\ref{lem:runTimeAlgo}), where $D_\eps$ is the number of different rounded weights, i.e., the  weights produced by rounding down each of the original weights to its closest power of $(1+\eps)$. To get a $(1+\eps$)-approximate equilibrium, for $\eps < 1$,  it suffices to run the algorithm with $\eps'=\frac{\eps}{7}$.

\paragr{Description of the Algorithm.}
\algo{}  first creates an instance $G'$ with weights rounded down  to their closest power of $(1+\eps)$, i.e., weight $w_i$ is replaced by weight $w'_i=(1+\eps)^{\lfloor\log_{1+\eps}w_i\rfloor}$ in  $G'$, and then computes a  $(1+\eps)^2$-approximate equilibrium for $G'$. Observe that  any $(1+\eps)^2$-approximate equilibrium for $G'$ is a $(1+\eps)^3$-approximate equilibrium for $G$, since 
\[\sum_{j \in V_i : s_i = s_j}w_j\leq(1+\eps)\sum_{j \in V_i: s_i = s_j}w'_j \leq (1+\eps)^3 \sum_{j \in V_i : s_i \neq s_j}w'_j\leq (1+\eps)^3 \sum_{j \in V_i: s_i \neq s_j}w_j,\]
where $V_i$ denotes the set of vertices that share an edge with vertex $i$, with the first and  third inequalities following from the rounding and the second one following from the  equilibrium condition for $G'$. 

To compute a $(1+\eps)^2$-approximate equilibrium, \algo{} first sorts the vertices in increasing weight order and note that, wlog, we may assume that $w'_{1}=1$, as we may simply divide all weights by $w'_1$. Then, it groups the vertices so that the fraction of the weights of consecutive vertices in the same group is bounded above by $\lceil n/\eps\rceil$, i.e., for any $i$,  vertices $i$ and $i+1$ belong in the same group if and only if $\frac{w'_{i+1}}{w'_{i}}\leq\lceil\frac{n}{\eps}\rceil$. This way, groups $g_j$ are formed on which we assume an increasing order, i.e., for any $j$, the vertices in $g_j$ have smaller weights than those in $g_{j+1}$.

The next step is to bring the groups closer together using the following process which will generate weights $w''_i$. For all $j$,  all the weights of vertices on heavier groups, i.e., groups $g_{j+1},g_{j+2},\ldots$, are divided by  $d_j=\frac{1}{\lceil n/\eps\rceil}\frac{w_{j+1}^{min}}{w_j^{max}}$ so that $\frac{w_{j+1}^{min}/d_j}{w_j^{max}}=\lceil\frac{n}{\eps}\rceil$, where $w_{j+1}^{min}$ is the smallest weight  in $g_{j+1}$ and $w_j^{max}$ is the biggest weight in $g_{j}$. For vertex $i$, let the resulting weight be $w''_i$, i.e., $w''_i=\frac{w'_i}{\Pi_{j\in I_i}d_j}$, where $I_i$ contains the indexes of groups below i's group, and keep the increasing order on the vertex weights. Observe that by the above process for any $i$: $\frac{w''_{i+1}}{w''_{i}}\leq \lceil\frac{n}{\eps}\rceil$, either because $i$ and $i+1$ are in the same group or because the groups are brought closer together. Additionally if $i$ and $i+1$ belong in different groups then $\frac{w''_{i+1}}{w''_{i}}= \lceil\frac{n}{\eps}\rceil$, implying that for vertices $i,i'$ in different groups with $w''_{i'}>w''_i$ it is $\frac{w''_{i'}}{w''_{i}}\geq \lceil\frac{n}{\eps}\rceil$.  Thus, if we let $D_\eps$ be the number of different weights in $G'$, i.e., $D_{\eps}=|\{w'_i:i\,\, vertex \,\,of\,\,G'\}|$, then the maximum weight $w''_n$ 
is $w''_{n}
=\frac{w''_{n}}{w''_{n-1}}\frac{w''_{n-1}}{w''_{n-2}}\ldots\frac{w''_{2}}{w''_{1}}\leq \lceil\frac{n}{\eps}\rceil^{D_\eps}$


In a last step, using the $w''$ weights, \algo{} starts from an arbitrary configuration (a 0-1 vector) and lets the vertices play $\eps$-best response moves, i.e., as long as there is an index $i$ of the vector violating the $(1+\eps)$-approximate equilibrium condition, \algo{} flips its bit. When there is no such index  \algo{}  ends and returns the resulting configuration.

\begin{figure}[ht]
\begin{algorithm}[H]
\caption{\algo{}, computing  $(1+\eps)^3$-approximate equilibria\label{Algorithm}}

\KwIn{A \nmc{} instance $G(V, E)$ with $n$ vertices and weights  $\{w_i\}_{i\in [n]}$ sorted increasingly with $w_1=1$, and an $\eps>0$.}
\KwOut{A vector $\vec{s}\in\{0,1\}^n$ partitioning  the vertices in two sets.}

~\\
\nl \textbf{for}  $i\in[n]$ \textbf{do} $w_i:=(1+\eps)^{\lfloor\log_{1+\eps}w_i\rfloor}$

\nl groups:= 1;

\textbf{insert} $w_1$ \textbf{into} $g_{groups}$; \Comment{Assign the weights into groups $\{g_j\}_{j\in [groups]}$}

\For{$i\in \{2,...,n\}$}{

\textbf{if} $\frac{w_i}{w_{i-1}}> \lceil \frac{n}{\epsilon}\rceil$ \textbf{then} groups++;

\textbf{insert} $w_i$ \textbf{into} $g_{groups}$;
}

\nl \For {$j\in \{2,...,groups\}$}{
$w_{j}^{min}:=$ minimum weight of group $g_j$;  \Comment{Bring the groups $\lceil \frac{n}{\eps}\rceil$ close}

$w_{j-1}^{max}:=$ maximum weight of group $g_{j-1}$;

$d_j=\frac{1}{\lceil n/\eps\rceil}\frac{w_{j+1}^{min}}{w_j^{max}}$ 

\textbf{for} $w_i\in g_j\cup\ldots\cup g_{group}$ \textbf{do} $w_i:=w_i/d_j$;
}

\nl $\vec{s}$:= an arbitrary $\{0,1\}^n$ vector;

\nl For all $i$, let $V_i = \{ j : \{ i, j \} \in E \}$ be the neighborhood of $i$ in $G$;

\vspace{1mm}
\While{$\exists i:\sum_{j \in  V_i: s_i = s_j}w_j > (1+\eps)\sum_{j \in  V_i: s_i \neq s_j}w_j$}{\vspace{1mm}$s_i:=1-s_i$;\Comment{Moves towards equilibrium}}

\nl \Return $\vec{s}$.
\end{algorithm}
\end{figure}

\begin{lemma}\label{lem:runTimeAlgo}
For any $\eps > 0$, \algo{} terminates in time $O(\frac{m}{\eps}\lceil\frac{n}{\eps}\rceil^{2D_\eps})$
\end{lemma}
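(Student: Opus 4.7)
The plan is to introduce a potential function $\Phi$ defined on the rescaled instance $G''$, show that each $\eps$-best response flip performed in the while loop strictly increases $\Phi$ by at least $\eps$, and bound $\Phi$ from above by $m\lceil n/\eps\rceil^{2D_\eps}$. Dividing the two bounds controls the number of flips, and maintaining, for every vertex $i$, the running sums $\sum_{j\in V_i:\,s_i=s_j}w''_j$ and $\sum_{j\in V_i:\,s_i\neq s_j}w''_j$, updated in $O(\deg(i))$ time whenever $i$ flips, gives amortized constant work per flip and yields the claimed overall running time.

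The natural choice of potential is the weight of the cut in $G''$, namely
\[
\Phi(\vec{s}) \;=\; \sum_{\{i,j\}\in E:\,s_i\neq s_j} w''_i\, w''_j .
\]
The upper bound $\Phi(\vec{s}) \le m (w''_n)^2 \le m \lceil n/\eps \rceil^{2D_\eps}$ follows directly from the telescoping estimate $w''_n \le \lceil n/\eps \rceil^{D_\eps}$ derived in the description of \algo{}, together with the trivial observation that the cut involves at most $m$ edges.

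For the per-flip increment I will write $A_i = \sum_{j\in V_i:\,s_i=s_j} w''_j$ and $B_i = \sum_{j\in V_i:\,s_i\neq s_j} w''_j$. Flipping the bit of $i$ swaps which of its incident edges cross the cut, so a direct accounting gives $\Delta\Phi = w''_i (A_i - B_i)$. The while-loop condition is exactly $A_i > (1+\eps) B_i$, hence $\Delta\Phi > \eps\, w''_i B_i$. Because the rescaling only divides weights of heavier groups while setting $w''_1 = 1$, every $w''_i \ge 1$. Consequently, whenever $B_i > 0$ the sum $B_i \ge 1$ and $\Delta\Phi > \eps$. The corner case $B_i = 0$, which I expect to be the only delicate point, is handled by observing that the while-loop condition then forces $A_i > 0$ and therefore $A_i \ge 1$, so $\Delta\Phi = w''_i A_i \ge 1 \ge \eps$ (we may assume $\eps \le 1$ after the reduction $\eps' = \eps/7$). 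Combining these ingredients, the number of flips is at most $m\lceil n/\eps\rceil^{2D_\eps}/\eps$, which together with the amortized-constant per-flip bookkeeping yields the stated running time.
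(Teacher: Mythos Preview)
Your argument is correct and essentially matches the paper's: both use a potential-function argument on the rescaled instance $G''$, bound the potential by $m\lceil n/\eps\rceil^{2D_\eps}$ via $w''_n\le\lceil n/\eps\rceil^{D_\eps}$, and show each $\eps$-best-response flip improves it by at least $\eps$ using $w''_i\ge 1$. The only cosmetic difference is that the paper tracks the complementary quantity (total weight of \emph{non}-cut edges, which decreases), and your treatment of the $B_i=0$ corner case is in fact more explicit than the paper's; your ``amortized constant per flip'' remark is a slight overstatement (it is $O(\deg(i))$ per flip), but the paper likewise equates running time with the number of moves.
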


\begin{proof}
We are going to show the claimed bound for the last step of \algo{} since all previous steps  can be (naively) implemented to end in $O(n^2)$ time.

The proof relies on a potential function argument.
For any $\vec{s}\in\{0,1\}^n$, let $$\Phi(\vec{s})=\frac{1}{2}\sum_{i\in V}\sum_{j\in V_i :s_i = s_j}w''_i
w''_j .$$
%
Since for the maximum weight $w''_{n}$ it is $w_{n}''\leq\lceil\frac{n}{\eps}\rceil^{D_\eps}$, it follows that $\Phi(\vec{s})\leq m\lceil\frac{n}{\eps}\rceil^{2D_\eps}$. On the other hand whenever an $\eps$-best response move is made by \algo{}  producing $\vec{s}\hspace{0.5mm}'$ from some $\vec{s}$,  $\Phi$ decreases by at least $\eps$, i.e., $\Phi(\vec{s})-\Phi(\vec{s}\hspace{0.5mm}')\geq\eps$. This is because, if $i$ is the index flipping bit from $\vec{s}$ to $\vec{s}\hspace{0.5mm}'$, then  by the violation of the $(1+\eps)$-equilibrium condition \[w''_i\sum_{j \in V_i: s_i = s_j}w''_j \geq w''_i(1+\eps)\sum_{j \in V_i: s_i \neq s_j}w''_j\Rightarrow \sum_{j \in V_i: s_i = s_j}w''_iw''_j-\sum_{j \in V_i: s'_i = s'_j}w''_iw''_j \geq \eps,\]
since $\sum_{j \in V_i: s'_i = s'_j}w''_iw''_j\geq1$, and 
\[\Phi(\vec{s})-\Phi(\vec{s}\hspace{0.5mm}')=\sum_{j \in V_i: s_i = s_j}w''_iw''_j-\sum_{j \in V_i: s'_i = s'_j}w''_i
w''_j\geq\eps.\]
Consequently, the last step of the algorithm will do at most $\frac{m\lceil\frac{n}{\eps}\rceil^{2D_\eps}}{\eps}$ $\eps$-best response moves. 
\end{proof}

\begin{lemma}\label{lem:apxGrntAlgo}
For any vertex-weight graph $G$ and any $\eps>0$, \algo{} returns a $(1+\eps)^3$-approximate equilibrium for \nmc{} in $G$.
\end{lemma}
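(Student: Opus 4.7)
The plan is to establish the $(1+\eps)^3$-bound in the original weights $w$ by composing two approximation losses: a $(1+\eps)$-factor from rounding $w\to w'$, and a $(1+\eps)^2$-factor in passing from the $(1+\eps)$-equilibrium that \algo{} enforces in the scaled weights $w''$ back to the unscaled weights $w'$. The rounding step is already proven in the description of \algo{} via the displayed chain of inequalities relating the same-side and opposite-side sums in $w$ and $w'$, so the lemma reduces to showing that the terminating configuration $\vec{s}$ is a $(1+\eps)^2$-approximate equilibrium with respect to $w'$.

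To prove this, I would fix a vertex $i$, let $g_j$ be the highest-indexed group that contains a neighbor of $i$, set $V_i^{<j}=V_i\cap(g_1\cup\cdots\cup g_{j-1})$, and let $\ell\in V_i\cap g_j$ be the neighbor of smallest weight. The first inequalities I would record come from the grouping and the subsequent scaling: since consecutive groups are separated by a factor at least $\lceil n/\eps\rceil$ in both $w'$ and $w''$, each $q\in V_i^{<j}$ satisfies $w'_q\leq w'_\ell\cdot\eps/n$ and $w''_q\leq w''_\ell\cdot\eps/n$, and hence $\sum_{q\in V_i^{<j}} w'_q\leq \eps w'_\ell$ and $\sum_{q\in V_i^{<j}} w''_q\leq \eps w''_\ell$. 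The second ingredient is the scaling relation: if $D_j$ denotes the cumulative divisor applied to group $g_j$, then $w'_q=D_j w''_q$ for $q\in g_j$ while $w'_q\leq D_j w''_q$ for $q$ in any lower group. Starting from the $w''$-termination condition $\sum_{V_i:s_i=s_q} w''_q\leq(1+\eps)\sum_{V_i:s_i\neq s_q} w''_q$, I would multiply through by $D_j$, split each side into $g_j$-contributions and $V_i^{<j}$-contributions, use the scaling relation to lower-bound the left-hand side by $\sum_{V_i:s_i=s_q}w'_q$, and use the grouping inequality $\sum_{V_i^{<j}:s_i\neq s_q}D_j w''_q\leq \eps w'_\ell$ to upper-bound the right-hand side by $\sum_{V_i\cap g_j:s_i\neq s_q}w'_q+\eps w'_\ell$. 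Combining these yields $\sum_{V_i:s_i=s_q}w'_q\leq(1+\eps)\bigl(\sum_{V_i\cap g_j:s_i\neq s_q}w'_q+\eps w'_\ell\bigr)$.

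The final ingredient I need is $w'_\ell\leq\sum_{V_i\cap g_j:s_i\neq s_q}w'_q$, i.e., that at least one neighbor of $i$ in $g_j$ sits on the opposite side of the cut. I would derive this from the termination condition by a short contradiction: if every $q\in V_i\cap g_j$ satisfied $s_q=s_i$, the $w''$-condition would force $w''_\ell\leq(1+\eps)\eps\, w''_\ell$, impossible in the regime $(1+\eps)\eps<1$, which covers the parameter range used by Theorem~\ref{thm:equilibrium} (where \algo{} is invoked with parameter $\eps/7$). With this in hand the bound above collapses to $(1+\eps)^2\sum_{V_i\cap g_j:s_i\neq s_q}w'_q\leq(1+\eps)^2\sum_{V_i:s_i\neq s_q}w'_q$, which is the desired $(1+\eps)^2$-approximate equilibrium in $w'$. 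The main obstacle is the first weight-manipulation step: the scalings $D_k$ differ across groups, so the $w''$-condition does not translate cleanly to a $w'$-condition by rescaling; the trick is that, by design of the grouping, only the $g_j$-neighbors contribute non-negligibly, so one can afford to upper-bound the left-hand side by replacing $w'_q$ with $D_j w''_q$ on lower-group vertices while controlling the corresponding right-hand-side term by a clean $\eps w'_\ell$.
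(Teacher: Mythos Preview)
Your proof is correct and follows essentially the same route as the paper: reduce to showing a $(1+\eps)^2$-equilibrium in the rounded weights $w'$, multiply the $w''$-termination condition by the cumulative scaling factor of the heaviest neighbor-group, and use the $\lceil n/\eps\rceil$-gap between groups to absorb all lower-group contributions into an extra $(1+\eps)$ factor. You are in fact more careful than the paper on one point: the paper's inequality (the analogue of your ``final ingredient'') tacitly assumes that some heaviest-group neighbor lies on the opposite side, whereas you explicitly rule out the contrary case via $(1+\eps)\eps<1$ and correctly note that this covers the parameter range actually used in Theorem~\ref{thm:equilibrium}.
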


\begin{proof}
Clearly, \algo{} terminates with a vector $\vec{s}$ that is a  $(1+\eps)$-approximate equilibrium for the instance with the $w''$ weights. It suffices to show that $\vec{s}$ is a $(1+\eps)^2$-approximate equilibrium for $G'$, i.e., the instance with the $w'$ weights, since this will directly imply that $\vec{s}$ is a $(1+\eps)^3$-approximate equilibrium for $G$, as already discussed at the beginning of the description of the algorithm.

Consider any index $i$ and let $V_i^{h}$ be the neighbors of $i$ that belong in the heaviest group among the neighbors of $i$. By the $(1+\eps)$-approximate equilibrium condition it is  
\begin{equation}\label{eq:alg1}
\sum_{j \in V_i\setminus V_i^{h}: s_i = s_j}w''_j+\sum_{j \in  V_i^{h}: s_i = s_j}w''_j \leq (1+\eps)\big(\sum_{j \in V_i\setminus V_i^{h}: s_i \neq s_j}w''_j+\sum_{j \in  V_i^{h}: s_i \neq s_j}w''_j\big).
\end{equation}
Recalling that for every $j$, $w''_j=\frac{w'_j}{\Pi_{k\in I_j}d_k}$, where $I_j$ contains the indexes of groups below j's group, and letting $D=\Pi_{k\in I_j}d_k$,  for a $j\in V_i^{h}$, gives
 \begin{equation}\label{eq:alg2}
 \sum_{j \in V_i\setminus V_i^{h}: s_i = s_j}w'_j+\sum_{j \in  V_i^{h}: s_i = s_j}w'_j \leq D\big( \sum_{j \in V_i\setminus V_i^{h}: s_i = s_j}w''_j+\sum_{j \in  V_i^{h}: s_i = 
 s_j}w''_j \big)
 \end{equation}

 On the other hand for any $j$ and $j'$, if $j'$ belongs in a group lighter than $j$ then by construction $\frac{w''_{j}}{w''_{j'}}\geq \frac{n}{\eps}$ (recall the way the groups were brought closer), which gives $nw''_{j'}\leq\eps w''_j$, yielding
 \begin{equation}\label{eq:alg3}
 \sum_{j \in V_i\setminus V_i^{h}: s_i \neq s_j}w''_j+\sum_{j \in  V_i^{h}: s_i \neq s_j}w''_j\leq (1+\eps)\sum_{j \in  V_i^{h}: s_i \neq s_j}w''_j
 \end{equation}
 Using equations (\ref{eq:alg2}), (\ref{eq:alg1}) and (\ref{eq:alg3}), in this order, and that $D\sum_{j \in  V_i^{h}: s_i \neq  s_j}w''_j=\sum_{j \in  V_i^{h}: s_i \neq  s_j}w'_j \leq \sum_{j \in  V_i: s_i \neq s_j}w'_j  $,
 we get
 \[\sum_{j \in  V_i: s_i = s_j}w'_j \leq (1+\eps)^2\sum_{j \in V_i: s_i \neq s_j}w'_j\]
 as needed.
\end{proof}

\begin{remark}
We observe the following trade off: we can get a $(1+\eps)^2$-approximate equilibrium if we skip the rounding step at the beginning of the algorithm but then the number of different weights $D_\eps$ and thus the running time of the algorithm may increase. Also, if $\Delta$ is the maximum degree among the vertices of $G$, then replacing $\frac{n}{\eps}$ with $\frac{\Delta}{\eps}$ in the algorithm and following a similar analysis gives $O(\frac{m}{\eps}\lceil\frac{\Delta}{\eps}\rceil^{2D_\eps})$ running time.
\end{remark}

\section{Missing Technical Details from the Proof of Theorem~\ref{t:main_max_cut}}
\label{s:MC}

In the following sections, we present all the technical details for the proof of Theorem~\ref{t:main_max_cut}. Recall that our \nmc{} instance is composed of the following gadgets:

\begin{enumerate}
    \item \emph{Leverage} gadgets that are used to transmit nonzero bias to vertices of high weight.
    \item Two \emph{Circuit Computing} gadgets that calculate the values and next neighbors of solutions.
    \item A \emph{Comparator} gadget.
    \item Two \emph{Copy} gadgets that transfer the solution of one circuit to the other, and vice versa.
    \item Two \emph{Control} gadgets that determine the (write or compute) mode in which the circuit operates. 
\end{enumerate}

Note that whenever we wish to have a vertex of higher weight that dominates all other vertices of lower weight, we multiply its weight with $2^{kN}$ for some constant $k$. We then choose $N$ sufficiently large so that, for all $k$, vertices of weight $2^{kN}$ dominate all vertices of weight $2^{(k-1)N}$. Henceforth, we will assume $N$ has been chosen sufficiently large.

In what follows, when we refer to the value of the circuit $C$, we mean the value that the underlying \cf{} instance $C$ would output, given the same input. Moreover, when we refer to the value of a vertex, we mean the side of the cut the vertex lies on. There are two values, $0$ and $1$, one for each side of the cut. 

As usual in previous work, we assume two \emph{supervertices}, a $1$-vertex and a $0$-vertex that share an edge and have a huge weight, which dominates the weight of any other vertex, e.g., $2^{1000N}$. As a result, at any local optimum, these vertices take complementary values. When we write, e.g., that $Flag = 1$ or $Control = 1$, we mean that \emph{Flag} or \emph{Control} takes the same value as the $1$-vertex. This convention is used throughout this section, always with the same interpretation. Our construction assumes that certain vertices always take a specific value, either $0$ or $1$. We can achieve this by connecting such vertices to the \emph{supervertex} of complementary value.

\subsection{Outline of the Proof of Theorem~\ref{t:main_max_cut}}
\label{sec:outline}

We start with a proof-sketch of our reduction, where we outline the key properties of our gadgets and the main technical claims used to establish the correctness of the reduction. 

At a high level, the proof of Theorem~\ref{t:main_max_cut} boils down to showing that at any local optimum of
the \lnmc{} instance of Figure~\ref{f:construction}, 
in which $\text{\emph{Flag}} =1$, the following hold:

\begin{enumerate}
    \item $\text{\emph{I}}_A= \text{\emph{Next}}B$
    \item $\text{\emph{Next}}B = \text{\emph{Real-Next}}(I_B)$
    \item $\text{\emph{Real-Val}}(I_B) \geq \text{\emph{Real-Val}}(I_A)$
\end{enumerate}

Once these claims are established, we can be sure that
the string defined by the values of vertices in $I_B$ defines a locally optimal solution for \cf{}. This is because the claims above directly imply that
$\text{\emph{Real-Val(I}}_B) \geq \text{\emph{Real-Val}} \lp(  \text{\emph{Real-Next}}\lp(I_B\rp)\rp)$, which means that there is no neighboring solution of $I_B$ of strictly better value. Obviously, we establish symmetrically the above claims when $\text{\emph{Flag}} =0$.

In the remainder of this section, we discuss the main technical claims required for the proof of Theorem~\ref{t:main_max_cut}. To do so, we follow a three step approach.  We first discuss the behavior of the \emph{Circuit Computing} gadgets $C_A$ and $C_B$. We then reason why $I_A = \text{\emph{Next}}B$, which we refer to as the \emph{Feedback Problem}. Finally we establish the last two claims, which we refer to as \emph{Correctness of the Outputs}.

\paragr{Circuit-Computing Gadgets.}
The \emph{Circuit Computing} gadgets $C_A$ and $C_B$ are
the basic primitives of our reduction. They are based on the gadgets introduced by Sch\"affer and Yannakakis \cite{SY91} to establish
PLS-completeness of \lmc{}. This type of gadgets can be constructed so as to simulate any Boolean circuit $C$.

The most important vertices are those corresponding to the input and the output of the simulated circuit $C$ and are denoted as $I,O$. Another important vertex is \emph{Control}, which allows the gadget to switch between the write and the compute mode of operation. Figure~\ref{f:CGadget} is an abstract depiction of the \emph{Circuit Computing} gadgets. 

The main properties of the gadget are described in Theorem~\ref{l:SY_gadgets}. Its proof is presented in
Section~\ref{s:computing_gadgets}, where the exact construction of the gadget is presented. We recall here the definition of bias, first discussed in Section~\ref{sec:nMCoverview}.

\begin{figure}[t]
    \centering
    \includegraphics[scale =0.7]{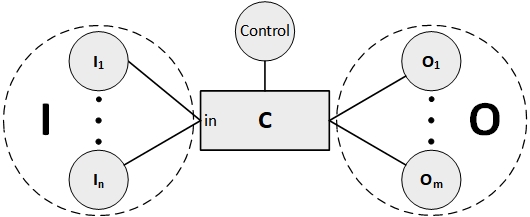}
    \caption{\emph{Circuit Computing} gadgets. The dashed circles, called $I$ and $O$, represent all input and output vertices, respectively. This type of (``hyper''-)vertex is represented in the rest of the figures with a bold border.}
\label{f:CGadget}
\end{figure}

\begin{definition}[Bias]\label{d:bias}
The \emph{bias} that a vertex $i$ experiences with respect to $V' \subseteq V$ is
\[\lp|\sum_{j \in V_i^1  \cap V'}w_j -\sum_{j \in V_i^0 \cap V' }w_j \rp|\,,\]
where $V_i^0$ (resp. $V_i^1$) is the set of neighbors of vertex $i$ on the $0$ (resp. $1$) side of the cut. 
\end{definition}

Bias is a key notion in the subsequent analysis. The gadgets presented in Figure~\ref{f:construction}
are a subset of the vertices of the overall instance. Each gadget is composed by the ``input vertices'', the internal vertices and the ``output vertices''. Moreover as we have seen
each gadget stands for a ``circuit'' with some specific functionality (computing, comparing, copying e.t.c.). Each gadget is specifically constructed so as at any local optimum of the overall instance, the output vertices of the gadget experience some bias towards some values that depend on the values of the input vertices of the gadget.
Since the output vertices of a gadget may also participate as input vertices at some other gadgets, it is important to quantify the bias of each gadget in order to prove consistency in our instance. Ideally, we would like to prove that at any local optimum the bias that a vertex experiences from a gadget in which it is an output vertex, is greater than the sum of the biases of the gadgets in which it participates as input vertex.

Theorem~\ref{l:SY_gadgets} describes the local optimum behavior of the input vertices $I_\ell$ and output vertices $Next\ell,Val_\ell$ of the $Circuit Computing$ gadgets $C_\ell$.

\begin{restatable}{theorem}{sygadgets}
\label{l:SY_gadgets}
\noindent
At any local optimum of the \lnmc{} of Figure~\ref{f:construction}.
\begin{enumerate}
    \item If $Control_\ell = 1$ and the vertices of $Next\ell$,$Val_\ell$ experience $0$ bias from any other gadget beyond $C_\ell$ then:
    \begin{itemize}
        \item $\text{\emph{Next}}\ell = \text{\emph{Real-Next}}(I_\ell)$
        \item $\text{\emph{Val}}_\ell = \text{\emph{Real-Val}}(I_\ell)$
     \end{itemize}
    \item If  $Control_\ell = 0$ then each vertex in $I_\ell$ experiences $0$ bias from the internal vertices of $C_\ell$.
    \item $Control_\ell$ experiences $w_{Control\ell}$ bias from the internal vertices of $C_\ell$.
\end{enumerate}
\end{restatable}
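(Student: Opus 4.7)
The plan is to first specify a detailed construction of $C_\ell$ by adapting the Sch\"affer--Yannakakis circuit gadget to the vertex-weighted setting, with two crucial modifications. First, each NOR gate of the underlying circuit $C$ is simulated by a constant-size subgraph whose vertex weights are chosen in a geometric tower: gates at depth $d$ in $C$ get weight roughly $M^d$ for a sufficiently large $M$, so that at any local optimum the value of each gate's output vertex is determined by the values of its input vertices, irrespective of any lower-weight interactions elsewhere. Second, the \emph{Control} vertex is attached to the gate vertices through copies of the \emph{Leverage} gadget of Section~\ref{s:leverage}, so that Control, whose own weight $w_{Control\ell}$ can be chosen very small relative to $M$, can still impose the correct bias on the gates to switch between the write and compute modes.

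With this setup, Claim 2 can be attacked as follows. When $Control_\ell = 0$, the Leverage gadgets internally arrange that, within $C_\ell$, the neighbors of each vertex $i \in I_\ell$ can be partitioned into pairs whose contributions to $i$'s bias are equal in magnitude and opposite in sign. The pairing is built into the gadget by design: for each ``positive'' internal neighbor of $i$ introduced to simulate a gate connected to $i$, there is a matching ``negative'' internal neighbor, and when the Control signal that normally breaks this symmetry is switched off, inductive analysis of the local-optimum conditions on the internal vertices forces these paired neighbors to land on opposite sides of the cut. Claim 3 would then be established by a direct calculation inside a single copy of the Leverage gadget: its internal weights are tuned so that, summing over all gates, the net bias contributed to $Control_\ell$ equals exactly $w_{Control\ell}$, independently of which side of the cut $Control_\ell$ lies on. This is the key property enabling Control to be a ``soft'' switch.

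Claim 1 is the most delicate. I would use induction on the topological depth of $C$. When $Control_\ell = 1$, the Leverage gadgets enforce a definite bias on each gate vertex in favor of taking the NOR of its input vertices, so a depth-ordered induction shows that, at any local optimum of the overall instance, the value of every gate vertex in $C_\ell$ equals the corresponding bit that circuit $C$ would compute on input $I_\ell$; in particular, $Val_\ell = Real\text{-}Val(I_\ell)$. The $Next\ell$ vertices would be handled analogously but through $n$ parallel circuit copies, one per candidate single-bit flip of $I_\ell$, together with an internal comparator cascade that picks the flip whose circuit value is strictly largest (and defaults to $I_\ell$ itself if no flip improves the value). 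The zero-external-bias hypothesis on $Next\ell$ and $Val_\ell$ is essential: it lets us conclude that the local optimality of the internal circuit vertices is determined purely by $C_\ell$, so the inductive step at each gate is clean.

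The main obstacle I expect is the interplay of weight scales inside $C_\ell$: the internal gate weights must be large enough that the inductive simulation in Claim 1 is rigid against perturbations, yet $w_{Control\ell}$ must be small enough that downstream gadgets (\emph{Equality}, \emph{Copy}) can override Control when needed. The Leverage gadget is precisely what decouples these two scales, and the technical heart of the proof is verifying that a single choice of internal weights simultaneously achieves (i) per-gate correctness in the compute mode of Claim 1, (ii) the perfect symmetry needed for the zero-bias condition of Claim 2 in the write mode, and (iii) the prescribed net bias of exactly $w_{Control\ell}$ on Control in Claim 3.
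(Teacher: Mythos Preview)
Your high-level plan shares the right architecture with the paper (SY-style NOR gadgets, a Control vertex acting through Leverage gadgets), but it misses the central mechanism and, as stated, the argument for Claim~1 would not go through.

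The tension you do not resolve is this: for the depth-ordered induction in Claim~1 to work, each gate's output vertex must be \emph{rigidly} forced to the NOR of its inputs at any local optimum. But if gate outputs are rigid, then the input vertices of that gate experience a large, definite bias from the gate's internals, which is incompatible with the zero-bias requirement of Claim~2. Conversely, if the gate is designed so that inputs can be made indifferent (as Claim~2 demands), then the output cannot be rigidly determined by the inputs, and a straightforward induction on depth fails: the only thing pushing an output vertex toward its correct NOR value is an auxiliary bias of weight far smaller than the weights of the downstream gate vertices it is connected to. The paper resolves this by introducing, for every gate $G_i$, a set of local control vertices $y_i^1,y_i^2,y_i^3,z_i^1,z_i^2,z_i^3$ chained across gates. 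When these hold their ``natural'' values the gate is correctness-detecting (an incorrect gate forces $z_i^2=1$), and when they hold their ``unnatural'' values the gate makes its inputs indifferent. $Control_\ell$ does not act on gate outputs at all; it merely biases every $y_i,z_i$ toward natural or unnatural via Leverage. The proof of Claim~1 is then not an induction but a contradiction: if $Control_\ell=1$ yet some $y,z$ is unnatural, the unnaturalness propagates down the chain, making downstream gate outputs free to follow the tiny auxiliary NOR bias, which forces the supposedly incorrect gate to be correct after all. Your ``symmetry-breaking'' remark hints at a two-mode design, but without the per-gate $y,z$ layer and the chain/detection lemmas, neither Claim~1 nor Claim~2 can be established.

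Two smaller points. Your reading of Claim~3 is off: the $w_{Control\ell}$ bias does not arise from delicately tuned Leverage contributions summing to exactly $w_{Control\ell}$; in the paper it is simply the weight of a single companion vertex $NotControl_\ell$ (of weight $2^{7N}=w_{Control\ell}$), with the Leverage gadgets contributing only lower-order terms of size $O(2^{6N})$. And for $Next\ell$, the paper does not build $n$ parallel circuit copies with a comparator cascade; $Next\ell$ and $Val_\ell$ are both outputs of one circuit $C$, and no separate argument is needed once correctness of all gates is established.
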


\medskip
Case~$1$ of Theorem~\ref{l:SY_gadgets}
describes the \emph{compute mode} of the \emph{Circuit Computing} gadgets.
At any local optimum with $ControlA =1$, and with the output vertices of $C_A$ being indifferent with respect to other gadgets, then $C_A$ computes its output correctly. Note that because the vertices in \emph{NextA},\emph{ValA} are also connected with internal vertices of other gadgets
(\emph{CopyA} and \emph{Comparator} gadgets)
that may create bias towards the opposite value, the second condition is indispensable. Case~$2$ of Theorem~\ref{l:SY_gadgets} describes the \emph{write mode}. If at a local optimum $ControlA = 0$ then the vertices in $I_A$ have $0$ bias from the $C_A$ gadget and as a result their value is determined by the biases of the \emph{CopyB} gadget and the \emph{Equality} gadget. Case~$3$ of Theorem~\ref{l:SY_gadgets} describes the minimum bias that the equality gadget must pose to the Control vertices
so as to make the computing gadget flip from one mode to the other. As we shall see, the weights $w_{ControlA} = w_{ControlB} = w_{Control}$ are selected much smaller than the bias the $Control\ell$ vertices experience due to
the \emph{Equality} gadgets, meaning that the \emph{Equality} gadgets control the \emph{write mode} and the \emph{compute mode} of the \emph{Circuit Computing} gadgets no matter the values of the vertices in $C_\ell$ gadgets.


\paragr{Solving the Feedback problem.}
Next, we establish the first of the claims above, i.e., at any local optimum of \lnmc{}
instance of Figure~\ref{f:construction}
in which $\text{\emph{Flag}} = 1$,
\emph{NextB} is written to $I_A$ and vice versa when $\text{\emph{Flag}} = 0$. This is formally stated in Theorem~\ref{l:write}.

\begin{theorem}
\label{l:write}
Let a local optimum of the \lnmc{} instance 
in Figure~\ref{f:construction}.
\begin{itemize}
    \item If $\text{\emph{Flag}} = 1$ then $I_A = \text{\emph{NextB}}$
    \item If $\text{\emph{Flag}} = 0$ then $I_B = \text{\emph{NextA}}$
\end{itemize}
\end{theorem}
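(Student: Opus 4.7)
The plan is to argue only the case $Flag = 1$; the case $Flag = 0$ follows by an identical argument with the roles of $A$ and $B$ swapped. The heart of the reduction lives at the triple interface between the \emph{Circuit Computing} gadget $C_A$, the \emph{CopyB} gadget, and the \emph{Equality} gadget, which all meet at the vertex sets $I_A$ and $\{Control_A\}$. My strategy is to suppose, for contradiction, that at some local optimum we have $Flag = 1$ yet $I_A \neq NextB$, and then exhibit a vertex whose local-optimality condition is violated.

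First, I would use the assumption $Flag = 1$ to pin down the intermediate set $T_B$ inside \emph{CopyB}: by the design of \emph{CopyB} the vertices in $T_B$ must coordinatewise coincide with $NextB$ at any local optimum, so the \emph{Equality} gadget effectively compares $I_A$ against $NextB$. Next I would show that the disagreement $I_A \neq NextB$ forces $Control_A = 0$. The \emph{Equality} gadget is engineered so that any such disagreement produces a bias on $Control_A$ toward $0$ strictly larger than $w_{Control}$; combined with Case~3 of Theorem~\ref{l:SY_gadgets}, which bounds the bias of $Control_A$ from the internal vertices of $C_A$ by exactly $w_{Control}$, local optimality of $Control_A$ forces $Control_A = 0$.

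Once $Control_A = 0$, I would invoke Case~2 of Theorem~\ref{l:SY_gadgets}: each vertex of $I_A$ experiences zero bias from the internal part of $C_A$. Consequently, the local-optimality condition for each $I_A$-vertex depends only on its biases from \emph{CopyB} and from \emph{Equality}. The crucial weight hierarchy now kicks in: on every coordinate of $I_A$, \emph{CopyB} must produce a bias toward the matching value in $NextB$ that strictly exceeds the opposing bias coming from \emph{Equality}. Under this inequality, each coordinate of $I_A$ is forced to agree with the corresponding coordinate of $NextB$, contradicting the assumption $I_A \neq NextB$ and completing the argument for $Flag = 1$.

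The main obstacle I anticipate is the bookkeeping of this weight hierarchy. Vertices in $I_A$, $NextB$, $T_B$, $Control_A$, and the internal \emph{Equality} construction all perturb one another, so weights must be tuned so that simultaneously (i)~\emph{Equality} dominates $C_A$ at $Control_A$, (ii)~\emph{CopyB} dominates \emph{Equality} at every coordinate of $I_A$, and (iii)~none of these couplings disturbs the output guarantees of $C_A$, $C_B$, or the \emph{Comparator}. This is precisely what the \emph{Leverage} gadget of Section~\ref{s:leverage} is built for, and the proof will cite its quantitative bias bounds together with the three cases of Theorem~\ref{l:SY_gadgets} to close the loop. Once the dominations in (i)--(iii) are verified, the contradiction above yields $I_A = NextB$ whenever $Flag = 1$, and the symmetric derivation gives $I_B = NextA$ whenever $Flag = 0$.
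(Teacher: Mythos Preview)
Your proposal is correct and follows essentially the same contradiction argument as the paper: assume $Flag=1$ but $I_A\neq NextB$, use $Flag=1$ to get $T_B=NextB$, deduce $Control_A=0$ from the \emph{Equality} gadget, and then use write mode plus the dominance of \emph{CopyB} over \emph{Equality} on $I_A$ to force $I_A=NextB$. The only cosmetic difference is packaging: the paper encapsulates your steps~(i) and~(ii) of the weight hierarchy into Lemma~\ref{l:consistent1} and Lemma~\ref{l:copy2} (whose proofs carry out exactly the bias comparisons you sketch), so its proof of Theorem~\ref{l:write} is a three-line invocation of those lemmas rather than a direct appeal to Theorem~\ref{l:SY_gadgets}.
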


We next present the necessary lemmas for proving Theorem~\ref{l:write}.

\begin{restatable}{lemma}{consistent}
\label{l:consistent1}
Let a local optimum of the \lnmc{} instance in Figure~\ref{f:construction}. Then,
\begin{itemize}
    \item $ControlA = (I_A  =  T_B)$
    \item $ControlB = (I_B  =  T_A)$
\end{itemize}
\end{restatable}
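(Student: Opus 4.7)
The proof reduces to understanding the behaviour of the \emph{Equality} gadget, whose sole purpose is to translate the pointwise comparison of $I_A$ with $T_B$ (resp.\ $I_B$ with $T_A$) into the side of the cut taken by $ControlA$ (resp.\ $ControlB$). As the two statements are completely symmetric, I will only sketch the argument for the first one.

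The starting point is Case~$3$ of Theorem~\ref{l:SY_gadgets}, which guarantees that at any local optimum the bias that the internal vertices of $C_A$ exert on $ControlA$ has magnitude at most $w_{Control}$. Since the only other neighbours of $ControlA$ in the construction lie in the \emph{Equality} gadget, it is enough to show that this gadget contributes to $ControlA$ a bias of magnitude strictly greater than $w_{Control}$, directed towards side $1$ when $I_A=T_B$ and towards side $0$ otherwise; the local-optimality condition on $ControlA$ then pins its value to the desired one.

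I plan to carry out this analysis coordinate by coordinate. For each $k\in[n]$, the \emph{Equality} gadget contains a small sub-structure whose internal vertices are forced, by local-optimality conditions that are insensitive to the rest of the instance, into a configuration contributing positive bias to $ControlA$ on side $1$ when the corresponding vertices $i_k\in I_A$ and $t_k\in T_B$ lie on the same side of the cut, and on side $0$ when they lie on opposite sides. The magnitudes of these per-coordinate contributions will be arranged in a geometric progression in $k$, with ratio at least $n+1$, so that the highest-indexed disagreement (if any) dominates the sum of all other contributions; this realises the intended ``all pairs equal'' semantics. The base scale of the gadget's weights is then chosen large enough that even a single coordinate already produces a bias on $ControlA$ exceeding $w_{Control}$, so that the comparison to the bound from Case~$3$ is clean.

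The main technical obstacle is establishing that the internal vertices of the \emph{Equality} gadget themselves stabilise as described, since they also receive bias from $I_A$, $T_B$, and from one another. The argument will lean on two weight separations baked into the reduction: the internal weights of the \emph{Equality} gadget are chosen much smaller than those of $I_A$ and $T_B$ (so that the gadget cannot perturb the values of its own inputs at any local optimum), and much larger than $w_{Control}$ (so that $ControlA$ cannot perturb the gadget in return). With these separations in place, a direct case analysis on each coordinate yields the claimed per-coordinate behaviour and, summing up, the lemma.
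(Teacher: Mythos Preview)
Your high-level strategy is the paper's: isolate the bias that the \emph{Equality} gadget exerts on $Control_\ell$, compare it against the bound $w_{Control}$ from Case~3 of Theorem~\ref{l:SY_gadgets}, and conclude by the local-optimality condition on $Control_\ell$. The weight separations you invoke (gadget weights small relative to $I_\ell$ and $T_\ell$, large relative to $w_{Control}$) are also the ones the construction uses.

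There is, however, a concrete gap in your per-coordinate aggregation. You propose that coordinate $k$ contributes a positive bias towards $1$ when $i_k$ and $t_k$ agree and towards $0$ when they disagree, with magnitudes in geometric progression so that ``the highest-indexed disagreement (if any) dominates''. But a geometric progression makes the highest-indexed \emph{coordinate} dominate, not the highest-indexed \emph{disagreement}. If the top coordinate agrees while some lower coordinate disagrees, your scheme biases $ControlA$ towards $1$, i.e.\ reports equality, which is exactly the wrong answer. A geometric weighting realises a most-significant-bit comparator (as in the \emph{Comparator} gadget), not an OR of per-bit disagreement flags, which is what the equality test needs.

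The paper avoids this by a threshold construction rather than a geometric one. Each per-bit sub-gadget outputs a vertex $R_{i}$ of the \emph{same} weight $2^{9N}$, with $R_i=0$ iff the $i$-th bits agree; all $R_i$'s are connected to $Control$, together with a single constant-$1$ vertex of weight $(n-1)\cdot 2^{9N}$. If all $R_i=0$, the $n$ agreements outweigh the constant vertex and $Control=1$; if any $R_j=1$, that vertex together with the constant-$1$ already outweighs the remaining $n-1$ outputs and forces $Control=0$. The per-bit analysis (walking through the internal vertices $e^1,\ldots,e^5$ to pin $R_i$) is a straightforward case split using the weight hierarchy $2^{30N}\gg 2^{25N}\gg 2^{20N}\gg 2^{9N}$, and the final comparison to the $C_\ell$-side bias uses that $2^{9N}$ dominates the $2^{7N}$ weight of $NotControl$ and the $\approx 2^{6N}$ contributions through the \emph{Leverage} gadgets. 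If you replace your geometric aggregation by this equal-weight-plus-threshold OR, the rest of your outline goes through.
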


\medskip
In Section~\ref{s:control}, we present the construction of the \emph{Equality} gadget. This gadget is specifically designed so that at any local optimum, its internal vertices generate bias to
$ControlA$ towards the value of the predicate $(I_A  =  T_B)$.
Notice that if we multiply
all the internal vertices of the equality gadget with a positive constant, the bias $ControlA$ experiences
towards value $(I_A  =  T_B)$ is multiplied by the same constant (see Definition~\ref{d:bias}). Lemma~\ref{l:consistent1} is established
by multiplying these weights
with a  sufficiently large
constant so as to make this bias larger than $w_{ControlA}$.
We remind that by Theorem~\ref{l:SY_gadgets}, the bias that $ControlA$ experiences from $C_A$ is $w_{ControlA}$. As a result, the local optimum value of $ControlA$ is $(I_A  =  NextB)$ no matter the values of $ControlA$ 's neighbors in the $C_1$ gadget.
The \emph{red mark} between $ControlA$ and the $C_A$ gadget in Figure~\ref{f:construction} denotes the ``indifference'' of $ControlA$ towards the values of the $C_A$ gadget (respectively for $ControlB$).

In the high level description of the
\lnmc{} instance of Figure~\ref{f:construction}, when $\text{\emph{Flag}} = 1$
the values of \emph{NextB} is copied to $I_A$ as follows: At first $T_B$ takes the value of \emph{NextB}. If $I_A \neq T_B$ then $ControlA = 0$
and the $C_A$ gadget switches to \emph{write mode}. Then the vertices in $I_A$ takes the values of the vertices in \emph{NextB}. This is formally stated in Lemma~\ref{l:copy2}.

\begin{restatable}{lemma}{copyB}
\label{l:copy2}
At any local optimum point of the \lnmc{} instance of Figure~\ref{f:construction}:
\begin{itemize}
\item If $\text{\emph{Flag}} = 1$, i.e. \emph{NextB} \emph{writes} on $I_A$, then
\begin{enumerate}
\item $T_B = \text{\emph{NextB}}$
\item If $ControlA = 0$ then $I_A = T_B = \text{\emph{NextB}}$
\end{enumerate}

\item If $\text{\emph{Flag}} = 0$ i.e. \emph{NextA} \emph{writes} on $I_B$, then
\begin{enumerate}
\item $T_A = \text{\emph{NextA}}$
\item If $ControlB = 0$ then $I_B = T_A = \text{\emph{NextA}}$
\end{enumerate}
\end{itemize}
\end{restatable}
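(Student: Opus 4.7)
The plan is to handle the $\text{\emph{Flag}} = 1$ case, since the $\text{\emph{Flag}} = 0$ case is completely symmetric. The two claims must be proved in this order: first that $T_B = \text{\emph{NextB}}$, then that $I_A = T_B$ whenever $ControlA = 0$. The composition gives $I_A = T_B = \text{\emph{NextB}}$.

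For the first claim, I would analyze each vertex $t \in T_B$ in isolation via its local-optimality condition. The \emph{CopyB} gadget is to be constructed so that $t$ is linked, internally, to its matching vertex $v \in \text{\emph{NextB}}$ and to \emph{Flag} in such a way that when $\text{\emph{Flag}} = 1$, the bias that $t$ receives from $v$ (moderated by \emph{Flag}) points towards the value of $v$. To establish the lemma unconditionally, this bias must dominate every other bias that $t$ experiences, most notably the pull coming from the \emph{Equality} gadget that reads $T_B$ together with $I_A$ in order to feed $ControlA$. The standard route is to scale up the internal weights of \emph{CopyB} by a sufficiently large power of $2^N$ so as to outweigh those of the \emph{Equality} gadget; a flip of $t$ then strictly decreases the cut, violating local optimality unless $t$ agrees with $v$.

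For the second claim, I would invoke Case~2 of Theorem~\ref{l:SY_gadgets}: once $ControlA = 0$, every vertex in $I_A$ experiences zero net bias from the internal vertices of $C_A$. Hence the local value of any $v \in I_A$ is dictated solely by biases coming from \emph{CopyB} (through the already-pinned $T_B$ vertices) and from \emph{Equality}. Since the first part has given $T_B = \text{\emph{NextB}}$, it suffices to argue that \emph{CopyB} couples each $v \in I_A$ to the corresponding vertex of $T_B$ with weight large enough to dominate the bias induced by \emph{Equality} on $v$. A local-optimality check then pins $v$ to its $T_B$ partner, yielding $I_A = T_B = \text{\emph{NextB}}$.

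The main obstacle is the delicate weight balancing dictated by the node-weighted structure of \nmc{}: one cannot tune individual edge weights freely, because every edge weight is the product of its endpoints. We must therefore arrange a multi-level hierarchy of magnitudes in which (i) the \emph{Flag}-to-$T_B$ bias beats the Equality gadget's pull on $T_B$; (ii) the Equality gadget still dominates $ControlA$ (as required by Lemma~\ref{l:consistent1}, using that the $C_A$ bias on $ControlA$ is only $w_{ControlA}$ by Case~3 of Theorem~\ref{l:SY_gadgets}); and (iii) the \emph{CopyB}-to-$I_A$ bias overpowers the Equality gadget's bias on $I_A$. Achieving all three constraints simultaneously, given that $I_A$ is also the input of $C_A$ and therefore constrained in weight, is presumably where the \emph{Leverage} gadget from Section~\ref{sec:nMCoverview} is needed, so that a heavy bias can be funneled into a light vertex without polluting the rest of the construction. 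Once this hierarchy of scales is fixed, the local-optimality arguments for each $t \in T_B$ and each $v \in I_A$ reduce to routine case analysis.
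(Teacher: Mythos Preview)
Your plan matches the paper's approach: handle $\text{\emph{Flag}}=1$ only, establish $T_B=\text{\emph{NextB}}$ by a weight-hierarchy argument in the \emph{CopyB} gadget, then invoke Case~2 of Theorem~\ref{l:SY_gadgets} to free $I_A$ from $C_A$ and let the \emph{CopyB} bias win. You also correctly anticipate that \emph{Leverage} gadgets are what make the multi-scale hierarchy feasible in the node-weighted setting.

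One detail differs from the actual construction. You assume that \emph{CopyB} biases $I_A$ \emph{through} the already-pinned $T_B$ vertices. In the paper's gadget (Figure~\ref{fig:feedbackA}), both $T_{i,B}$ and $I_{i,A}$ are biased in parallel from a common intermediate vertex $\eta_{i,B}$: once $\text{\emph{Flag}}=1$ forces $F_{i,B}=0$, the vertex $\eta_{i,B}$ is dominated by $\text{\emph{NextB}}_i$ (weight $2^{100N}$) and takes $\neg\text{\emph{NextB}}_i$; then $T_{i,B}$ is dominated directly by $\eta_{i,B}$ (weight $2^{40N}$ versus $2^{7N}$ from \emph{Equality}), while $I_{i,A}$ is biased by $\eta_{i,B}$ through a \emph{Leverage} gadget (bias $\approx 2^{10N}$). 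So the second claim does not use $T_B$ as an intermediary; it uses $\eta_{i,B}$ directly. This does not affect the logical structure of your argument, only the wiring you guessed for the gadget.
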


\medskip 
In Section~\ref{s:copy}, we present the construction of the \emph{Copy} gadgets. At a local optimum where $\text{\emph{Flag}} = 1$, this gadget creates bias to the vertices in $I_A,T_B$ vertices towards adopting the values of \emph{NextB}. Since $I_A,T_B$
also participate in the \emph{Equality} gadget in order to establish Lemma~\ref{l:copy2} we want to make the bias of the \emph{CopyB} gadget larger than the bias of the \emph{Equality} gadget. This is done by again by multiplying the weights of the internal vertices of \emph{CopyB} with a sufficiently large constant. The ``indifference'' of the vertices in $I_A,T_B$ with respect to the values of the internal vertices of the \emph{Equality} gadget is denoted
in Figure~\ref{f:construction}
by the \emph{red marks} between the vertices in $I_A,T_B$ and the \emph{Equality} gadget.

In Case~$2$ of Lemma~\ref{l:copy2} the additional condition $ControlA = 0$ is necessary to ensure that $I_A = \text{\emph{NextB}}$. The reason is that the bias of the \emph{Copy} gadget to the vertices in $I_A$ is sufficiently larger than the bias of the \emph{Equality} gadget to the vertices in $I_A$, but not necessarily to the bias of the $C_A$ gadget. The condition $ControlA = 0 $ ensures $0$ bias of the $C_A$ gadget to the vertices $I_A$, by Theorem~\ref{l:SY_gadgets}. As a result the values of the vertices in $I_A$ are determined by the values of their neighbors in the \emph{CopyB} gadget.

\begin{proof}[Proof of Theorem~\ref{l:write}]
Let a local optimum in which $\text{\emph{Flag}} = 1$.
Let us assume that $I_A \neq \text{\emph{NextB}}$.
By Case~$1$ of Lemma~\ref{l:copy2}, $T_B = \text{\emph{NextB}}$. As a result, $I_A \neq T_B$, implying that $ControlA = 0$ (Lemma~\ref{l:consistent1}). Now, by Case~$2$ of Lemma~\ref{l:copy2} we have that $I_A = \text{\emph{NextB}}$, which is a contradiction. The exact same analysis holds when $\text{\emph{Flag}} = 0$.
\end{proof}

\paragr{Correctness of the Output Vertices.}
In the previous section we discussed how the \emph{Feedback problem} ($I_A = \text{\emph{NextB}}$ when $\text{\emph{Flag}} = 1$) is solved in our reduction.
We now exhibit how the two last cases of our initial claim are established.

\begin{theorem}\label{t:2}
At any local optimum of the instance of \lnmc{} of
Figure~\ref{f:construction}:
\begin{itemize}
\item If $\text{\emph{Flag}} = 1$
    \begin{enumerate}
    \item $\text{\emph{Real-Val}}(I_A) \leq \text{\emph{Real-Val}}(I_B)$
    \item $\text{\emph{NextB}} = \text{\emph{Real-Next}}(I_B)$
\end{enumerate}

\item If $\text{\emph{Flag}} = 0$
    \begin{enumerate}
    \item $\text{\emph{Real-Val}}(I_B) \leq \text{\emph{Real-Val}}(I_A)$
    \item $\text{\emph{NextA}} = \text{\emph{Real-Next}}(I_A)$
\end{enumerate}
\end{itemize}
\end{theorem}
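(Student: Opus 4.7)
The plan is to focus on the case $\text{\emph{Flag}} = 1$, since the $\text{\emph{Flag}} = 0$ statement follows by the same argument after swapping the roles of $A$ and $B$. The strategy for $\text{\emph{Flag}} = 1$ is to first show that both \emph{Circuit Computing} gadgets operate in compute mode at any local optimum (i.e.\ $ControlA = ControlB = 1$), then invoke Case~1 of Theorem~\ref{l:SY_gadgets} to deduce that $ValA, ValB, NextA, NextB$ store the true circuit outputs, and finally to read off the ordering $\text{\emph{Real-Val}}(I_A) \leq \text{\emph{Real-Val}}(I_B)$ from the guarantee of the \emph{Comparator} gadget together with the value of \emph{Flag}.

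For $ControlA = 1$, we stitch the three preceding lemmas together: Theorem~\ref{l:write} yields $I_A = \text{\emph{NextB}}$, Case~1 of Lemma~\ref{l:copy2} yields $T_B = \text{\emph{NextB}}$, hence $I_A = T_B$, and Lemma~\ref{l:consistent1} gives $ControlA = 1$. The harder direction is $ControlB = 1$: Lemma~\ref{l:copy2} only describes the ``active'' \emph{CopyB} gadget when $\text{\emph{Flag}} = 1$ and says nothing about $T_A$, so we must argue that the ``inactive'' \emph{CopyA} gadget still enforces $T_A = I_B$ at a local optimum. I expect this to be the main obstacle: one has to exploit the residual bias that \emph{CopyA}'s internal vertices place on $T_A$ when its \emph{Flag}-controlled switch is off (the dual of the enforcement that \emph{CopyB} performs actively). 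Once $I_B = T_A$ is established, Lemma~\ref{l:consistent1} again yields $ControlB = 1$.

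With both controls at $1$, I would next verify the zero-bias hypothesis of Case~1 of Theorem~\ref{l:SY_gadgets} for the output vertices of $C_A$ and $C_B$. These vertices have neighbors in the \emph{Comparator} and in the \emph{Copy} gadgets, so their external bias is a priori nonzero; however, by the attenuation property of the \emph{Leverage} gadget (exploited in cases~1--2 of Theorem~\ref{l:SY_gadgets}) and by the weight hierarchy controlled by the parameter $N$, this external bias is dominated by the internal bias of each $C_\ell$ and can be treated as zero. Applying Case~1 then gives $ValA = \text{\emph{Real-Val}}(I_A)$, $ValB = \text{\emph{Real-Val}}(I_B)$, $NextA = \text{\emph{Real-Next}}(I_A)$, and $NextB = \text{\emph{Real-Next}}(I_B)$; the last of these is exactly the second conclusion of the theorem.

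Finally, by Lemma~\ref{l:comparator}, at any local optimum \emph{Flag} agrees with the outcome of the comparison between $ValA$ and $ValB$; so $\text{\emph{Flag}} = 1$ forces $ValA \leq ValB$, and substituting the equalities from the previous paragraph yields $\text{\emph{Real-Val}}(I_A) \leq \text{\emph{Real-Val}}(I_B)$, completing the $\text{\emph{Flag}} = 1$ case. The $\text{\emph{Flag}} = 0$ case is then handled symmetrically.
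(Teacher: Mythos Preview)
Your plan has a genuine gap at the step ``$ControlB = 1$ when $\text{\emph{Flag}} = 1$'', and a second one at the zero-bias verification for $C_B$. Neither can be repaired along the lines you sketch, and the paper's proof proceeds quite differently.

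First, the inactive \emph{CopyA} gadget does \emph{not} enforce $T_A = I_B$ when $\text{\emph{Flag}} = 1$. By the argument in the proof of Lemma~\ref{l:copy_unbias}, when $\text{\emph{Flag}} = 1$ the internal vertex $\eta_{i,A}$ is driven to the fixed value $1$ regardless of $I_B$ or $NextA$; this in turn biases $T_{i,A}$ to a fixed value, not to $I_{i,B}$. So the Equality gadget need not see $I_B = T_A$, and indeed the paper states explicitly (just after Theorem~\ref{t:2}) that ``if $\text{\emph{Flag}} = 1$ we know nothing about the value of $ControlB$''. Second, even granting $ControlB = 1$, Case~1 of Theorem~\ref{l:SY_gadgets} still would not apply to $C_B$: when $\text{\emph{Flag}} = 1$ the \emph{CopyB} gadget is active and its $\eta_{i,B}$ vertices, together with the \emph{Comparator}, place nonzero bias on $NextB$ and $Val_B$. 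This bias is not negligible relative to the tiny $2^{-500N}$ bias that the NOR gadgets use to set their outputs, so the ``attenuation/weight hierarchy'' handwave does not go through. The paper in fact says that $Val_B = \text{\emph{Real-Val}}(I_B)$ ``cannot be guaranteed in our construction''.

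The paper's route avoids both issues. It never claims $ControlB = 1$ or $Val_B$ correct. Instead, (i) Lemma~\ref{l:comparator_correctness} shows that the \emph{Comparator} itself forces $NextB = \text{\emph{Real-Next}}(I_B)$ whenever $\text{\emph{Flag}} = 1$ (via the control vertices $y^3_{m+1,A}, z^3_{m+1,B}$ wired into \emph{Flag}); (ii) Lemma~\ref{l:correctness1} (using $ControlA = 1$ together with Lemmas~\ref{l:copy_unbias} and~\ref{l:comparator_unbias}) gives $NextA, Val_A$ correct; and (iii) Lemma~\ref{l:comparator} is stated and proved so that it yields $\text{\emph{Real-Val}}(I_A) \leq \text{\emph{Real-Val}}(I_B)$ from these hypotheses alone, \emph{without} assuming $Val_B$ correct. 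Your reading of Lemma~\ref{l:comparator} as ``\emph{Flag} agrees with the comparison of $Val_A$ and $Val_B$'' misses precisely this robustness, which is the whole point of the gadget.
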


At first we briefly explain the difficulties in establishing Theorem~\ref{t:2}. In the following discussion we assume that $\text{\emph{Flag}} = 1$, since everything we mention holds symmetrically for $\text{\emph{Flag}} = 0$. Observe
that if $\text{\emph{Flag}} = 1$ we know nothing about the value of $ControlB$ and as a result we cannot guarantee that
$\text{\emph{NextB}}  = \text{\emph{Real-Next}} (I_B)$ or $\text{\emph{Val}}_B = \text{\emph{Real-Val}}(I_B)$. But even in the case of $C_A$ where $ControlA = 1$ due to Theorem~\ref{l:write},
the correctness of the vertices in $\text{\emph{NextA}}$ or $\text{\emph{Val}}_A$ cannot be guaranteed.
The reason is that in order to apply Theorem~\ref{l:SY_gadgets}, \emph{NextA} and $\text{\emph{Val}}_A$ should experience $0$ bias with respect to
any other gadget they are connected to. But at a local optimum, these vertices may select their values according to the values of their \emph{heavily weighted neighbors} in the \emph{CopyA} and the \emph{Comparator} gadget.

The correctness of the values of the output vertices, i.e. $\text{\emph{NextA}} = \text{\emph{Real-Next}}(I_A)$ and $\text{\emph{Val}}_A = \text{\emph{Real-Val}}(I_A)$, is ensured
by the design of the \emph{CopyA} and the \emph{Comparator} gadgets. Apart from their primary role these gadgets are specifically designed to
cause $0$ bias to the output vertices of the \emph{Circuit Computing} gadget to which
the better neighbor solution is written. In other words at any local optimum in which $\text{\emph{Flag}} = 1$ and any vertex in $C_A$:
the total weight of its neighbors (belonging in the \emph{CopyA} or the \emph{Comparator} gadget) with value $1$
equals the total weight of its neighbors (belonging in the \emph{CopyA} or the \emph{Comparator} gadget) with value $0$.

The latter fact is denoted by the \emph{green marks} in Figure~\ref{f:construction_1} and permits the application of Case~$1$ of Theorem~\ref{l:SY_gadgets}.
Lemma~\ref{l:copy_unbias} and ~\ref{l:comparator_unbias} formally state these ``green marks''.

\begin{figure}
    \centering
    \includegraphics[scale= 0.5]{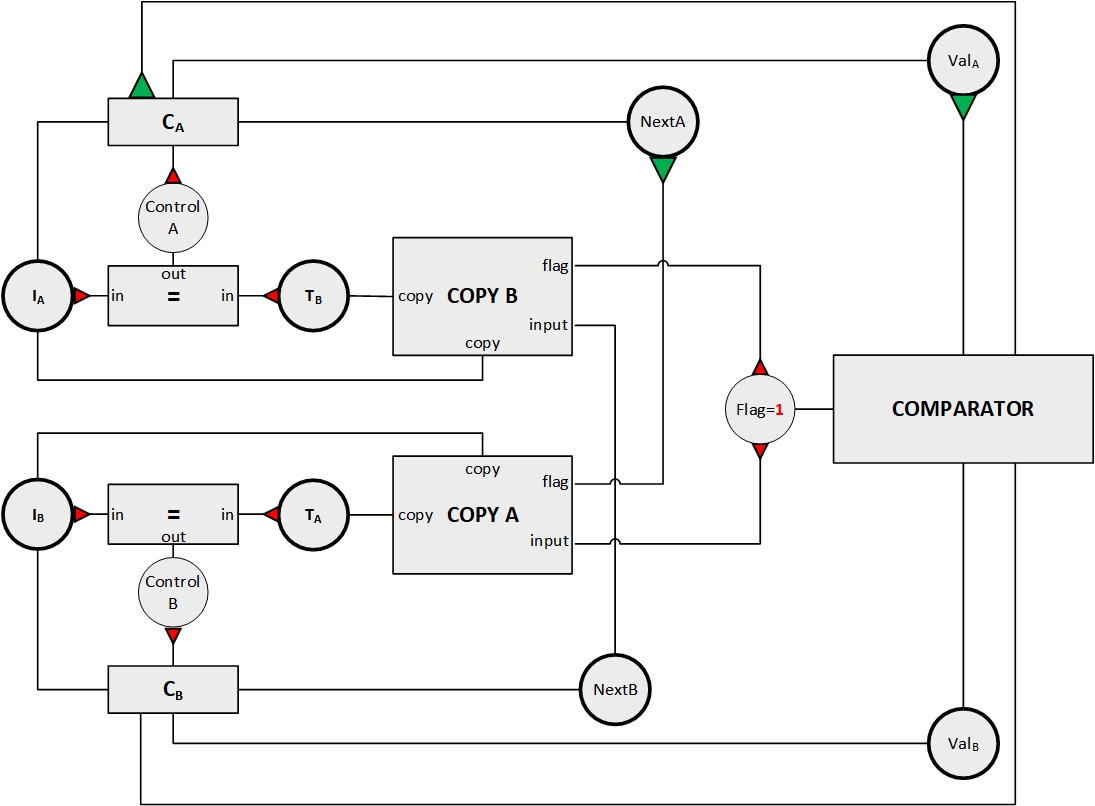}
    \caption{Since $Flag = 1$, any internal vertex of the $C_B$ gadget has $0$ bias with respect to all the other gadgets. As a result, Theorem~\ref{l:SY_gadgets} applies.}
\label{f:construction_1}
\end{figure}

\begin{restatable}{lemma}{copyUnbias}
\label{l:copy_unbias}
At any local optimum of the \lnmc{} instance of Figure~\ref{f:construction}:
\begin{itemize}
    \item If $\text{\emph{Flag}} = 1$, then any vertex in \emph{NextA} experience $0$ bias with respect to the \emph{CopyA} gadget.
\item If $\text{\emph{Flag}} = 0$ then
then any vertex in \emph{NextB} experience $0$ bias wrt. the \emph{CopyB} gadget.
\end{itemize}
\end{restatable}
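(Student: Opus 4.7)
The plan is to exploit the internal symmetry that the \emph{CopyA} gadget is engineered to exhibit when $\text{\emph{Flag}} = 1$. Recall that \emph{CopyA} is the gadget whose primary role, active only when $\text{\emph{Flag}} = 0$, is to push the values of \emph{NextA} onto $T_A$ (and ultimately onto $I_B$). When $\text{\emph{Flag}} = 1$, \emph{CopyA} is supposed to be ``switched off'', meaning that its internal vertices should not create any net pull on any $v \in \text{\emph{NextA}}$, even though they are neighbors of $v$. Concretely, I would fix an arbitrary $v \in \text{\emph{NextA}}$, isolate the set $N_{\emph{CopyA}}(v)$ of $v$'s neighbors inside \emph{CopyA}, and argue that at any local optimum with $\text{\emph{Flag}} = 1$ these vertices split into two subsets of equal total weight, one on each side of the cut, so that $v$'s bias from $N_{\emph{CopyA}}(v)$ vanishes by Definition~\ref{d:bias}.

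The argument would proceed in three steps. First, I would identify, inside \emph{CopyA}, the ``twin'' structure that the gadget is designed to have: for each input $v \in \text{\emph{NextA}}$, the neighbors $N_{\emph{CopyA}}(v)$ come in pairs of equal weight, each pair being connected to \emph{Flag} and to a corresponding pair of internal vertices in a mirror-symmetric way. Second, I would show that, with $\text{\emph{Flag}} = 1$, the dominant influence on each $u \in N_{\emph{CopyA}}(v)$ comes from its heavy neighbors inside \emph{CopyA} together with \emph{Flag}, and that the configuration forced by local optimality places the two members of each twin pair on opposite sides of the cut. Third, summing contributions across pairs, the total weight of $N_{\emph{CopyA}}(v)$ on the $1$-side equals the total weight on the $0$-side, which, combined with $w_v$ being a common factor, gives
\[
\sum_{u \in N_{\emph{CopyA}}(v),\, s_u = 1} w_u \;=\; \sum_{u \in N_{\emph{CopyA}}(v),\, s_u = 0} w_u,
\]
i.e.\ zero bias on $v$ from \emph{CopyA}. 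The case $\text{\emph{Flag}} = 0$ for \emph{NextB} follows by the same argument with $A$ and $B$ interchanged.

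The main obstacle, as already signalled in Section~\ref{sec:nMCoverview}, is calibrating the weights inside \emph{CopyA} so that step two of the above plan goes through under the rigid product-of-endpoints weight structure of \nmc{}. In ordinary \mc{} one would simply make the edges from $u$ to \emph{Flag} much heavier than the edges from $u$ to $v \in \text{\emph{NextA}}$, forcing $u$'s local decision to be governed by \emph{Flag}; this is not available to us, because once $w_u$ is fixed, every edge incident to $u$ scales by $w_u$. I expect that one has to insert a \emph{Leverage} sub-gadget between \emph{Flag} and each $u \in N_{\emph{CopyA}}(v)$, so that \emph{Flag}'s bias on $u$ is boosted while the weight $w_u$ that $u$ contributes back to $v$ stays small enough to remain subdominant to the bias that $v$ receives from the \emph{Circuit Computing} gadget $C_A$. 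Once this leveraged twin structure is in place, the pairing argument reduces to a direct weight computation, and the lemma follows. The symmetric claim for \emph{NextB} when $\text{\emph{Flag}} = 0$ is then immediate by swapping the roles of $A$ and $B$ throughout.
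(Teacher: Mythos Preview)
Your plan is correct and matches the paper's approach. In the actual construction, each $\text{\emph{NextA}}_i$ has exactly two neighbors in \emph{CopyA}: the internal vertex $\eta_{i,A}$ and a constant-$0$ vertex of the same weight; when $\text{\emph{Flag}}=1$, a \emph{Leverage} gadget forces $F_{i,A}=0$, which together with another constant-$0$ vertex dominates $\eta_{i,A}$ and drives it to $1$, so the pair $(\eta_{i,A},\,\text{constant }0)$ sits on opposite sides and the bias on $\text{\emph{NextA}}_i$ cancels exactly. The only minor deviation from your sketch is that one member of each ``twin'' is a fixed constant vertex rather than a second dynamic vertex symmetrically wired to \emph{Flag}, which makes step two a one-vertex computation rather than a two-vertex symmetry argument.
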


\begin{restatable}{lemma}{comparatorUnbias}
\label{l:comparator_unbias}
Let a local optimum of the instance of \lnmc{} of Figure~\ref{f:construction}:
\begin{itemize}
    \item If \emph{Flag} = 1 then all vertices of $C_A$ experience $0$ bias wrt. the \emph{Comparator} gadget.

    \item If \emph{Flag} = 0 then all vertices of $C_B$ experience $0$ bias wrt. the \emph{Comparator} gadget.
\end{itemize}
\end{restatable}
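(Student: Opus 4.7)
The plan is to exploit the specific design of the \emph{Comparator} gadget to show that, when $\text{\emph{Flag}} = 1$, the neighborhoods inside the \emph{Comparator} of each vertex of $C_A$ split into pieces whose cut contributions cancel, yielding $0$ net bias. By symmetry, the case $\text{\emph{Flag}} = 0$ follows by swapping the roles of $A$ and $B$.

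First, I would reduce to the only nontrivial subcase. Among the vertices of $C_A$, only those in $\text{\emph{Val}}_A$ are adjacent to the \emph{Comparator} gadget: the input vertices $I_A$, the auxiliary vertices $\text{\emph{NextA}}$, the $Control_A$ vertex, and the internal simulation vertices of the circuit-computing gadget have no \emph{Comparator} neighbors at all, so for them the zero-bias claim is vacuous. Hence it suffices to verify the cancellation for every $v \in \text{\emph{Val}}_A$.

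Next, I would invoke the construction of the \emph{Comparator} (from the dedicated appendix section) to pin down the relevant local-optimum configuration of its internal vertices. The \emph{Comparator} is an ``all-at-once'' bit-comparison gadget that receives $\text{\emph{Val}}_A$ and $\text{\emph{Val}}_B$ and pushes \emph{Flag} toward the Boolean value of $\text{\emph{Val}}_A \leq \text{\emph{Val}}_B$; its internal weights are chosen (see Lemma~\ref{l:comparator}) so that the bias on \emph{Flag} from the \emph{Comparator} dominates any competing bias. Conditioning on $\text{\emph{Flag}} = 1$ therefore forces the internal \emph{Comparator} vertices into a canonical pattern determined by $\text{\emph{Val}}_A$ and $\text{\emph{Val}}_B$. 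In this pattern, the key observation is that every $v \in \text{\emph{Val}}_A$ has its \emph{Comparator} neighbors grouped into matched pairs of equal weight that take opposite sides of the cut; this pairing is baked into the gadget for precisely this purpose, and corresponds to the ``green mark'' from $C_A$ to the \emph{Comparator} in Figure~\ref{f:construction_1}.

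Finally, I would verify the cancellation bit by bit. For each $v \in \text{\emph{Val}}_A$, I would enumerate the bit-comparison sub-gadgets in which $v$ participates, read off the values of $v$'s neighbors inside each such sub-gadget from $\text{\emph{Flag}} = 1$ together with local optimality, and check that the $1$-valued and $0$-valued neighbors of $v$ inside that sub-gadget contribute the same total weight. Summing over sub-gadgets gives $0$ net bias from the whole \emph{Comparator} on $v$, which is the claim. The main obstacle is the subcase in which some bits of $\text{\emph{Val}}_A$ or $\text{\emph{Val}}_B$ are not equal to $\text{\emph{Real-Val}}(I_A)$ or $\text{\emph{Real-Val}}(I_B)$: the authors stress that their all-at-once \emph{Comparator} still produces the correct comparison using only a subset of its inputs being correct, so the pairing-cancellation argument must be carried out without assuming input correctness. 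I expect this to require a careful case split over which bit positions are ``active'' in the local-optimum configuration, and this is where the argument must depart most sharply from the more decoupled \emph{Comparator} designs used in~\cite{SY91,GS10,ET11}.
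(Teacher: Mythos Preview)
Your reduction to ``only $\text{\emph{Val}}_A$ vertices'' is wrong. As the paper makes explicit in the Remark immediately following the statement of the lemma, internal control vertices of $C_A$ (specifically the $y^3_{i,A}$ vertices for $i \leq m+1$) are wired directly into the \emph{Comparator} gadget; this is precisely why the lemma is stated for ``all vertices of $C_A$'' rather than just for $\text{\emph{Val}}_A$. So the vacuous case you dismiss is not vacuous.

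More importantly, your cancellation mechanism is off. You propose that the internal \emph{Comparator} vertices, at a local optimum with $\text{\emph{Flag}}=1$, settle into a canonical pattern determined by $\text{\emph{Val}}_A$ and $\text{\emph{Val}}_B$, and that the $C_A$-neighbors then pair up into equal-weight opposite-side pairs within each bit sub-gadget. That is not how the gadget is built. The \emph{Comparator} has essentially no internal vertices between the $C_A$ vertices and \emph{Flag}: each $C_A$ vertex that touches the \emph{Comparator} (value-output bit or $y^3$ control vertex) is connected directly to \emph{Flag} and to a constant-$0$ supervertex of the same weight as \emph{Flag}. When $\text{\emph{Flag}}=1$, these two neighbors have equal weight and opposite cut values, so the bias is $0$ on the nose. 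Symmetrically, each relevant $C_B$ vertex is connected to \emph{Flag} and to a constant-$1$ vertex of equal weight, which cancel when $\text{\emph{Flag}}=0$. The argument is one line and does not depend on $\text{\emph{Val}}_A$, $\text{\emph{Val}}_B$, or any local-optimum configuration of internal \emph{Comparator} vertices; your proposed case analysis over ``active bit positions'' and possibly incorrect inputs is addressing the wrong lemma (those concerns belong to Lemma~\ref{l:comparator}, not here).
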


\begin{remark}
The reason that in Lemma~\ref{l:comparator_unbias} we refer to all vertices of $C_A$ (respectively $C_B$) and not just to the vertices in $Val_A$ (respectively $Val_B$) is that in the constructed instance of \lnmc{} of Figure~\ref{f:construction}, we connect internal vertices of the $C_A$ gadget with internal vertices of the \emph{Comparator} gadget. This is the only point in our construction where internal vertices of different gadgets share an edge and is denoted in Figure~\ref{f:construction} and~\ref{f:construction_1} with the direct edge between the $C_A$ gadget and the \emph{Comparator} gadget.
\end{remark}

Now using Lemma~\ref{l:copy_unbias} and Lemma~\ref{l:comparator_unbias} we can prove the correctness of the output vertices $\text{\emph{NextA}}, \text{\emph{Val}}_A$ when $\text{\emph{Flag}} =1$ i.e.
$\text{\emph{NextA}} = \text{\emph{Real-Next}}(I_A)$ and $\text{\emph{Val}}_A = \text{\emph{Real-Val}}(I_B)$ (symmetrically
for the vertices in
$\text{\emph{NextB}}, \text{\emph{Val}}_B$ when
$\text{\emph{Flag}} = 0$).

\begin{restatable}{lemma}{correctnessA}
\label{l:correctness1}
Let a local optimum of the instance of \lnmc{} of Figure~\ref{f:construction}:
\begin{itemize}
    \item If $\text{\emph{Flag}} = 1$ then $\text{\emph{NextA}} = \text{\emph{Real-Next}}(I_A)$, $\text{\emph{Val}}_A =  \text{\emph{Real-Val}}(I_A)$.
    \item If $\text{\emph{Flag}} = 0$ then $\text{\emph{NextB}} = \text{\emph{Real-Next}}(I_B)$, $\text{\emph{Val}}_B =  \text{\emph{Real-Val}}(I_B)$.
\end{itemize}
\end{restatable}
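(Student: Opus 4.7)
The plan is to prove the $\text{\emph{Flag}} = 1$ case; the $\text{\emph{Flag}}=0$ case follows by the symmetric argument between $C_A$ and $C_B$. The strategy is to set up the two preconditions of Case~1 of Theorem~\ref{l:SY_gadgets} applied to $C_A$: namely, that $Control_A = 1$, and that the output vertices $\text{\emph{NextA}}$ and $\text{\emph{Val}}_A$ experience $0$ bias from every gadget other than $C_A$.

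First I would establish $Control_A = 1$ by chaining the results already proved. By Theorem~\ref{l:write}, $\text{\emph{Flag}} = 1$ forces $I_A = \text{\emph{NextB}}$ at any local optimum. By Case~1 of Lemma~\ref{l:copy2}, we also have $T_B = \text{\emph{NextB}}$, so $I_A = T_B$. Lemma~\ref{l:consistent1} then gives $Control_A = (I_A = T_B) = 1$, as required.

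Next I would verify the $0$-bias condition on the output vertices of $C_A$. By inspection of Figure~\ref{f:construction}, the vertices in $\text{\emph{NextA}}$ only neighbor gadgets outside $C_A$ through the \emph{CopyA} gadget, and the vertices in $\text{\emph{Val}}_A$ only neighbor gadgets outside $C_A$ through the \emph{Comparator} gadget. Lemma~\ref{l:copy_unbias} tells us that when $\text{\emph{Flag}} = 1$, every vertex of $\text{\emph{NextA}}$ has zero bias with respect to \emph{CopyA}, while Lemma~\ref{l:comparator_unbias} tells us that every vertex of $C_A$ (hence of $\text{\emph{Val}}_A$ and $\text{\emph{NextA}}$ in particular) has zero bias with respect to \emph{Comparator}. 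Combining these, the only nonzero contribution to the bias of any vertex in $\text{\emph{NextA}} \cup \text{\emph{Val}}_A$ comes from inside $C_A$. Having checked both preconditions, Case~1 of Theorem~\ref{l:SY_gadgets} applied to $\ell = A$ yields $\text{\emph{NextA}} = \text{\emph{Real-Next}}(I_A)$ and $\text{\emph{Val}}_A = \text{\emph{Real-Val}}(I_A)$.

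The main subtlety, and what I would be careful about, is the accounting of all external neighbors of the output vertices: if some vertex in $\text{\emph{NextA}}$ or $\text{\emph{Val}}_A$ were connected to a further gadget not covered by Lemmas~\ref{l:copy_unbias} and~\ref{l:comparator_unbias}, an additional unbias argument would be needed. So the proof essentially consists of citing the architecture in Figure~\ref{f:construction} to enumerate these external edges, then invoking the two unbias lemmas one after the other. The $\text{\emph{Flag}} = 0$ case is literally the same argument with $A$ and $B$ swapped, using the second bullets of Theorem~\ref{l:write}, Lemma~\ref{l:copy2}, Lemma~\ref{l:copy_unbias} and Lemma~\ref{l:comparator_unbias}.
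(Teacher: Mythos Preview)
Your proposal is correct and follows essentially the same approach as the paper's own proof: establish $Control_A = 1$ via Theorem~\ref{l:write}, Lemma~\ref{l:copy2}, and Lemma~\ref{l:consistent1}, then invoke Lemmas~\ref{l:copy_unbias} and~\ref{l:comparator_unbias} for the zero-bias condition, and conclude with Case~1 of Theorem~\ref{l:SY_gadgets}. Your added remark about checking that no other gadgets neighbor the output vertices is a reasonable bit of care that the paper leaves implicit.
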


\begin{proof}
We assume that $\text{\emph{Flag}} = 1$ (for $\text{\emph{Flag}} = 0$ the exact same arguments hold). By Theorem~\ref{l:write} we have $I_A = \text{\emph{NextB}}$ and by Lemma~\ref{l:copy2} we have that $T_B = \text{\emph{NextB}}$. As a result, $I_A = T_B$ and by Lemma~\ref{l:consistent1}
$ControlA = 1$.
Lemma~\ref{l:copy_unbias} and Lemma~\ref{l:comparator_unbias} guarantee that the vertices in $\text{\emph{NextA}},\text{\emph{Val}}_A$ of $C_A$ experience $0$ bias towards all the other gadgets of the construction and since $ControlA = 1$,
we can apply Case~$1$ of Theorem~\ref{l:SY_gadgets} i.e.
$\text{\emph{Val}}_A = \text{\emph{Real-Val}}(I_A)$ and
$\text{\emph{NextA}}= \text{\emph{Real-Next}}(I_A)$.
\end{proof}

Up next we deal with the correctness of the values of the output vertices in $\text{\emph{Val}}_B$ and $\text{\emph{NextB}}$ when $\text{\emph{Flag}} = 1$.
We remind again that, even if at a local optimum $ControlB=1$, we could not be sure about the correctness
of the values of these output vertices due to the bias their neighbors in the \emph{CopyB} and the \emph{Comparator} gadget (Theorem~\ref{l:SY_gadgets} does not apply).
The \emph{Comparator} gadget plays a crucial role in solving this last problem.
Namely, it also checks whether the output vertices in \emph{NextB} have correct values with respect to the input $I_B$ and
if it detects incorrectness it outputs $0$. This is done by
the connection of some specific internal vertices
of the $C_A,C_B$ gadgets with the internal vertices of the \emph{Comparator} gadget (Figure~\ref{f:construction}: edges between $C_A,C_B$ and \emph{Comparator}).

\begin{restatable}{lemma}{comparatorCorrectness}
\label{l:comparator_correctness}
At any local optimum of the \nmc{} instance of Figure~\ref{f:construction}:

\begin{itemize}
    \item If $\text{\emph{Flag}} = 1$ then $\text{\emph{NextB}} = \text{\emph{Real-Next}}(I_B)$

    \item If $\text{\emph{Flag}} = 0$
     then $\text{\emph{NextA}} = \text{\emph{Real-Next}}(I_A)$
\end{itemize}
\end{restatable}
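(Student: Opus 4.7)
The plan is to prove the claim by contradiction, exploiting the ``all at once'' design of the Comparator gadget discussed in Section~\ref{sec:nMCoverview}. I will treat the case $\text{\emph{Flag}} = 1$ in detail; the case $\text{\emph{Flag}} = 0$ is symmetric, obtained by exchanging the roles of the two Circuit-Computing gadgets. So suppose toward contradiction that $\text{\emph{Flag}} = 1$ at a local optimum of the \lnmc{} instance of Figure~\ref{f:construction}, and yet $\text{\emph{NextB}} \neq \text{\emph{Real-Next}}(I_B)$.

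First, I would collect the facts about the ``$A$ side'' that are already available. By Theorem~\ref{l:write}, $I_A = \text{\emph{NextB}}$, and by Lemma~\ref{l:copy2}, $T_B = \text{\emph{NextB}}$ as well; hence $I_A = T_B$ and Lemma~\ref{l:consistent1} yields $ControlA = 1$. Lemma~\ref{l:correctness1} then applies to $C_A$, giving $\text{\emph{NextA}} = \text{\emph{Real-Next}}(I_A)$ and $\text{\emph{Val}}_A = \text{\emph{Real-Val}}(I_A)$. Consequently, when analysing the Comparator gadget I may regard $\text{\emph{NextA}}, \text{\emph{Val}}_A$ as trusted inputs, while $I_B, \text{\emph{NextB}}, \text{\emph{Val}}_B$ are the inputs whose correctness is in question.

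Next, I would invoke the internal structure of the Comparator. As stressed in the ``Technical Novelty'' part of Section~\ref{sec:nMCoverview}, the Comparator is designed as an all-at-once gadget: it is connected not only to the output vertices of $C_A$ and $C_B$ but also to specific internal vertices of these gadgets (the explicit edges between $C_A, C_B$ and the Comparator in Figure~\ref{f:construction}). These extra connections let the Comparator perform, on its $B$-side inputs, a local check that $\text{\emph{NextB}}$ is a genuine improving neighbor of $I_B$ in the simulated circuit $C$. The formal property I would invoke is: at any local optimum, if $\text{\emph{NextB}} \neq \text{\emph{Real-Next}}(I_B)$, then some collection of internal Comparator vertices is forced into a configuration that produces a net bias on $Flag$ towards $0$, whose magnitude is strictly larger than the bias from the actual value comparison between $\text{\emph{Val}}_A$ and $\text{\emph{Val}}_B$. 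This is arranged by the usual $2^{kN}$ weight scaling described at the start of Section~\ref{s:MC}, which lets us make the ``correctness-check'' bias dominate the ``comparison'' bias.

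Finally, I would close the argument by balancing biases at the $Flag$ vertex. Besides the Comparator, $Flag$ interacts only with the two Copy gadgets; by the Leverage-gadget construction deployed there (cf.~Lemma~\ref{l:comparator} and the discussion in Section~\ref{sec:nMCoverview}), the bias $Flag$ experiences from each Copy gadget is negligible compared to the Comparator bias. Hence, under the assumption $\text{\emph{NextB}} \neq \text{\emph{Real-Next}}(I_B)$, the dominant contribution forces $Flag = 0$ at any local optimum, contradicting $\text{\emph{Flag}} = 1$. The main obstacle is the weight bookkeeping: one has to verify that the ``correctness-check'' bias from the Comparator's internal vertices (which itself depends on the values of internal vertices of $C_B$ that are not controlled by Lemma~\ref{l:correctness1}) cleanly outweighs both the opposing bias from $\text{\emph{Val}}_A, \text{\emph{Val}}_B$ and the Leverage-suppressed residual bias coming through the Copy gadgets. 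This is precisely the delicate balancing enabled by the Leverage gadget and the new all-at-once Comparator design, which is why \nmc{}, unlike \mc{}, requires a genuinely new Comparator construction here rather than the decoupled one of \cite{SY91,GS10,ET11}.
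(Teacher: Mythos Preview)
Your approach is correct and matches the paper's: argue by contradiction from $\text{\emph{Flag}}=1$, establish that the $A$ side is clean ($ControlA=1$, outputs of $C_A$ correct), and then show that the assumed $B$-side error forces \emph{Flag} toward $0$.

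Where you hand-wave, the paper is concrete. The ``collection of internal Comparator vertices'' you invoke is exactly the pair of control-chain vertices $y^3_{m+1,A}$ and $z^3_{m+1,B}$ from Figure~\ref{fig:detection}. If $\text{\emph{NextB}}\neq\text{\emph{Real-Next}}(I_B)$, some NOR gate in $C_B$ is incorrect, and the error-propagation Lemma~\ref{l:SY_detection} forces $z^3_{m+1,B}=1$; on the other side, $ControlA=1$ together with Lemma~\ref{l:full_from_bias} gives $y^3_{m+1,A}=1$. These two vertices, both connected to \emph{Flag} with weight of order $2^{100N}\cdot 2^{N+m+1}$, bias \emph{Flag} toward $0$ by an amount that dominates the $2^{90N}$-scale value-bit contributions and everything else, yielding the contradiction. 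So the ``correctness check'' is not computed inside the Comparator as you describe; it is read off from $C_B$'s own control chain, and the Comparator's role is simply to wire $z^3_{m+1,B}$ (and $y^3_{m+1,A}$) to \emph{Flag} at the right weight scale. Your worry about Copy-gadget bias on \emph{Flag} is unnecessary once you see the $2^{100N}$ versus $2^{90N}$ separation.
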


\medskip
We highlight that the correctness of values of the output vertices \emph{NextB}, i.e. $\text{\emph{NextB}} = \text{\emph{Real-Next}}(I_B)$, is not guaranteed by application of Theorem~\ref{l:SY_gadgets} (as in the case of correctness of $NextA,Val_A$), but from the construction of the \emph{Comparator} gadget. Lemma~\ref{l:comparator_correctness} is proven in Section~\ref{s:comparison_gadget} where the exact construction of this gadget is presented. Notice that Lemma~\ref{l:comparator_correctness} says nothing about the correctness of the values of the output vertices in $Val_B$. As we latter explain this cannot be guaranteed in our construction. Surprisingly enough, the \emph{Comparator}
outputs the right outcome of the predicate $\lp(\text{\emph{Real-Val}}(I_A) \leq \text{\emph{Real-Val}}(I_B)\rp)$ even if
$\text{\emph{Val}}_B \neq \text{\emph{Real-Val}}(I_B)$. The latter is one of our main technical contributions in the reduction that reveals the difficulty of \lnmc{}. The crucial differences between our \emph{Comparator} and the \emph{Comparator} of the previous reductions \cite{SY91,GS10,ET11} are discussed in the end of the section. Lemma~\ref{l:comparator} formally states the robustness of the outcome of the \emph{Comparator} even with ``wrong values''
in the vertices of $Val_B$ and is proven in Section~\ref{s:comparison_gadget}.

\begin{restatable}{lemma}{comparator}
\label{l:comparator}
At any local optimum of the \lnmc{} instance of Figure~\ref{f:construction}:
\begin{itemize}
    \item If $Flag = 1$, $\text{\emph{NextA}} = \text{\emph{Real-Next}}(I_A)$,
    $\text{\emph{Val}}_A = \text{\emph{Real-Val}}(I_A)$
    and $\text{\emph{NextB}} = \text{\emph{Real-Next}}(I_B)$
    then $Real\text{-}Val(I_A) \leq Real\text{-}Val(I_B)$.
    \item If $Flag = 0$, $\text{\emph{NextB}} = \text{\emph{Real-Next}}(I_B)$,
    $\text{\emph{Val}}_B = \text{\emph{Real-Val}}(I_B)$
    and $\text{\emph{NextA}} = \text{\emph{Real-Next}}(I_A)$
    then $Real\text{-}Val(I_B) \leq Real\text{-}Val(I_A)$.
\end{itemize}
\end{restatable}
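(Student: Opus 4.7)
The plan is to prove the first bullet (the $Flag=0$ case is symmetric). Fix a local optimum satisfying the three hypotheses: $Flag=1$, $\text{\emph{NextA}} = \text{\emph{Real-Next}}(I_A)$, $\text{\emph{Val}}_A = \text{\emph{Real-Val}}(I_A)$, and $\text{\emph{NextB}} = \text{\emph{Real-Next}}(I_B)$. The key observation driving the proof is that $\text{\emph{Val}}_A$ is entirely correct, so the Comparator only needs to certify a \emph{prefix} of $\text{\emph{Val}}_B$ long enough to decide the comparison, rather than the full string.

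First I would unpack the structure of the Comparator (whose detailed construction lives in Section~\ref{s:comparison_gadget}). Abstractly, picture its internal vertices as processing the bit pairs $(\text{\emph{Val}}_A[i],\text{\emph{Val}}_B[i])$ from the most significant bit downward, maintaining a ``decided'' signal that flips on as soon as a bit position witnesses a strict inequality. Once the decided signal is on, the subsequent bit positions are filtered out and do not influence \emph{Flag}. Internally, Leverage subgadgets are used so that the bias the Comparator exerts on \emph{Flag} dominates the bias \emph{Flag} gets from the \emph{Copy} gadgets, as already noted in the paragraph preceding Lemma~\ref{l:comparator}.

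Second, the main technical step: show that the bits of $\text{\emph{Val}}_B$ that lie in the ``active prefix'' (the positions from the MSB up to and including the first deciding bit) are forced, at a local optimum, to agree with the corresponding bits of $\text{\emph{Real-Val}}(I_B)$. This is achieved through the direct edges between internal vertices of $C_B$ and internal vertices of the Comparator, highlighted in Figure~\ref{f:construction_1}. Along the active prefix, these edges let the Comparator re-derive each relevant bit of $\text{\emph{Real-Val}}(I_B)$ gate-by-gate from $I_B$, exactly as $C_B$ would in compute mode, but without invoking $Control_B$ (which we do not control when $Flag=1$). Crucially, once the decided signal turns on, the corresponding auxiliary vertices shut off the re-derivation for deeper bits, so the rest of $\text{\emph{Val}}_B$ is allowed to disagree with $\text{\emph{Real-Val}}(I_B)$ without affecting the verdict.

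Combining these two steps: since $\text{\emph{Val}}_A = \text{\emph{Real-Val}}(I_A)$ and the active prefix of $\text{\emph{Val}}_B$ equals the corresponding prefix of $\text{\emph{Real-Val}}(I_B)$, the Comparator's output correctly reflects the lexicographic (hence numerical) comparison $\text{\emph{Real-Val}}(I_A) \leq \text{\emph{Real-Val}}(I_B)$, and the assumption $Flag=1$ yields the conclusion. The hard part will be step two: choosing, by iterated use of the Leverage gadget together with the $2^{kN}$ weight hierarchy, weights that make the Comparator dominate the bits of $C_B$ along the active prefix, while simultaneously (i) not cascading into bits of $\text{\emph{Val}}_B$ past the decision point, (ii) not perturbing the $0$-bias conditions on $\text{\emph{NextA}}$, $\text{\emph{Val}}_A$, $\text{\emph{NextB}}$ established in Lemmas~\ref{l:copy_unbias} and~\ref{l:comparator_unbias}, and (iii) not interfering with the Feedback Problem on $I_B$ that was resolved via Theorem~\ref{l:write}. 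This forces a delicate bit-by-bit calibration of the internal Comparator weights, which is where essentially all of the technical work of the \emph{all at once} Comparator design is concentrated.
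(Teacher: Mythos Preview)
Your proposal sketches a plausible-sounding Comparator design, but it does not match the gadget actually built in Section~\ref{s:comparison_gadget}, and so cannot serve as a proof of Lemma~\ref{l:comparator}. There is no ``decided signal'' that propagates through bit positions, and the Comparator does not ``re-derive each relevant bit of $\text{\emph{Real-Val}}(I_B)$ gate-by-gate from $I_B$''. The actual construction is flat: each value-bit vertex of $C_A$ and $C_B$ is wired directly to \emph{Flag} with weight of order $2^{90N}\cdot 2^{N+k}$ (Figure~\ref{fig:comparator}), and in addition the internal \emph{control vertices} $y^3_{k,A}$, $z^3_{k,B}$ of the NOR gadgets are also wired to \emph{Flag} (Figure~\ref{fig:detection}). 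The direct edges between $C_B$ and the Comparator that you noticed carry error-detection information, not re-computed bits: by Lemma~\ref{l:SY_detection}, $z^2_{k,B}$ (and hence the chained $z^3$'s) becomes unnatural precisely when gate $k$ of $C_B$ is miscomputing.

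Accordingly, the paper's proof has a different shape. From $Flag=1$ one first gets $ControlA=1$ via Lemmas~\ref{l:copy2} and~\ref{l:consistent1}, so by Lemma~\ref{l:full_from_bias} all $y,z$ control vertices of $C_A$ are natural; the argument of Lemma~\ref{l:comparator_correctness} similarly forces the weight-$2^{100N}$ control vertices of $C_B$ (those with $i>m$) to be natural. These are exactly the hypotheses of the key technical Lemma~\ref{l:super_comparison}, which then does a single case analysis on the triple $(A_k,B_k,z^2_{k,B})$ at the highest nontrivial bit $k$: every ``bad'' case in which $\text{\emph{Real}}(A_k)$ could exceed $\text{\emph{Real}}(B_k)$ produces a net bias on \emph{Flag} toward $0$ of order $2^{90N}\cdot(2^{N+k}-50)$, overwhelming all lower bits and contradicting $Flag=1$. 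No prefix of $\text{\emph{Val}}_B$ is ever certified correct; incorrect bits are \emph{allowed}, but their accompanying unnatural $z$-vertices add enough weight on the $0$-side of \emph{Flag} to rule them out. Your re-derivation idea would, if implementable, amount to duplicating the relevant portion of $C_B$ inside the Comparator, which the construction does not do.
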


\medskip
We are now ready to prove Theorem~\ref{t:2}.

\begin{proof}[Proof of Theorem~\ref{t:2}]
Let a local optimum of the instance of Figure~\ref{f:construction} with $\text{\emph{Flag}} = 1$ (respectively for $\text{\emph{Flag}} = 0$).
By Lemma~\ref{l:comparator_correctness}, $\text{\emph{NextB}} = \text{\emph{Real-Next}}(I_B)$ and thus Case~$1$ is established.
Moreover by Lemma~\ref{l:correctness1}, $\text{\emph{NextA}} = \text{\emph{Real-Next}}(I_A)$ and $\text{\emph{Val}}_A = \text{\emph{Real-Val}}(I_A)$. As a result,
Lemma~\ref{l:comparator} applies and $\text{\emph{Real-Val}}(I_A) \leq \text{\emph{Real-Val}}(I_B)$ (Case~$2$ of Theorem~\ref{t:2})
\end{proof}

\paragr{Putting Everything Together.}
Having established Theorem~\ref{l:write} and \ref{t:2}, the $PLS$-completeness of \lnmc{} follows easily. For the sake of completeness, we 
show how we can put everything together and conclude the proof Theorem~\ref{t:main_max_cut}.

\begin{proof}[Proof of Theorem~\ref{t:main_max_cut}]
For a given circuit $C$ of the \cf{}, we can construct in polynomial time the instance of \lnmc{} of Figure~\ref{f:construction}.
Let a local optimum of this instance. Without loss of generality, we assume that $Flag = 1$.
Then, by Theorem~\ref{l:write} and Theorem~\ref{t:2}, $I_A = NextB$, $NextB = Real-Next(I_B)$ and $Real-Val(I_A) \leq Real-Val(I_B)$. Hence, we have that
\begin{eqnarray*}
Real-Val(I_B) &\geq& Real-Val(I_A)\\
&=& Real-Val(NextB)\\
&=& Real-Val\lp( Real-Next(I_B) \rp)
\end{eqnarray*}
But if $I_B \neq Real-Next(I_B)$, then $Real-Val(I_B)  > Real-Val\lp( Real-Next(I_B) \rp)$ which is a contradiction.
Thus $I_B = Real-Next(I_B) = I_B$, meaning that the string defined by the values of $I_B$ is a locally optimal solution for the \cf{} problem.
\end{proof}

In the following sections, we present the gadget constructions in detail and all the formal proofs missing from this section.

\subsection{The Leverage Gadget} \label{s:leverage}

The \emph{Leverage} gadget is a basic construction in the PLS completeness proof. 
This gadget solves a basic problem in the reduction. Suppose that we have a vertex with 
relatively small weight $A$ and we want to bias a vertex with large weight $B$. For example,
the large vertex might be indifferent towards its other neighbors, which would allow even a
small bias from the small vertex to change its state. We would also like to ensure that
the large vertex does not bias the smaller one with very large weight, in order for the 
smaller to retain its value. 

This problem arises in various parts of the PLS proof. For example, we would like the outputs of
a circuit to be fed back to the inputs of the other one. The outputs have very small weight 
compared to the inputs, since the weights drop exponentially in the \emph{Circuit Computing} gadget. 
We would like the inputs of circuit $B$ to change according to the outputs of circuit $A$
and not the other way around. Another example involves the \emph{Equality} Gadget, which influences
the $\text{\emph{Control}}_\ell$ of the \emph{Circuit Computing} gadget. The vertices of the Gadget 
have weights of the order of
$2^{10N}$, while the control vertices of the \emph{Circuit Computing} gadget are of the order of $2^{100N}$. 
We would like the output of the gadget to bias the $\text{\emph{Control}}_\ell$ vertices, while also
remaining independent from them. 

Let's get back to the original problem. A naive solution would be to connect vertex $A$ directly 
with vertex $B$. However, this would result in vertex $B$ biasing vertex $A$ due to the larger
weight it possesses. For example, if we connected $Control_A$ with the control variables of
circuit $B$, then they would always bias $Control_A$ with a very large weight, rendering
the entire \emph{Equality} gadget useless. We would like to ensure that vertex $A$ biases $B$ with
a relatively small weight, while also experiencing a small bias from it. 

The solution we propose is a \emph{Leveraging} gadget that is connected between vertices
$A$ and $B$. It's construction will depend on the weights $A$ and $B$, as well as the 
bias that we would like $B$ to experience from $A$. Before describing the construction, we
discuss it's functionality on a high level. 

As shown in Figure \ref{fig:leverage}, we place the gadget between the vertices $A$ and $B$. We use two parameters
$x$, $\eps$ in the construction. 
We first want to ensure that vertex $A$ experiences a small bias from the gadget. This is why
we put vertices $L_{1,1},L_{1,2}$ at the start with weight $B/2^{x+1} + \eps$, which puts a 
relatively small bias. We want these vertices to be dominated by $A$. This is why vertices
$L_{1,3}$, $L_{1,4}$ have combined weight less than $A$. However, these vertices cannot directly 
influence $B$, since it's weight dominates the weights of $L_{1,1}$, $L_{1.2}$. For this reason,
we repeat this construction $x+1$ times, until vertices $L_{x,1}$, $L_{x,2}$, whose combined weight
is slightly larger than $B$. This means that vertices $L_{x,3}$, $L_{x,4}$ are not dominated by
$B$ and can therefore be connected directly with it. The details of the proof are given below.

\begin{figure}[t]
    \centering
    \includegraphics[width=\textwidth]{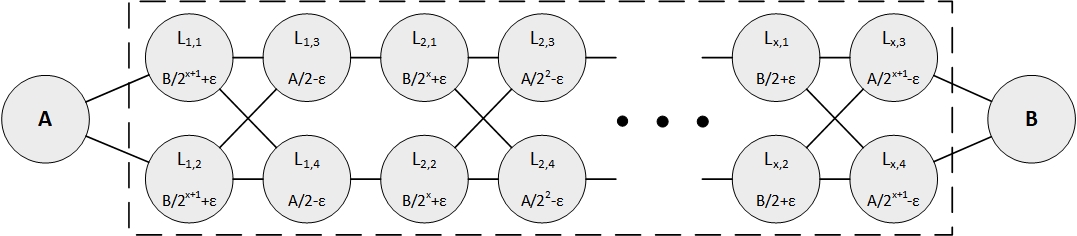}
    \caption{The \emph{Leverage} Gadget.}
\label{fig:leverage}
\end{figure}

\begin{restatable}{lemma}{leverage}
\label{l:leverage}
If the input vertex $A$ of a leverage gadget with output vertex $B$, parameters $x,\eps$, has value 1, then the output vertex experiences bias $w_A/{2^x}+2\eps$ towards 0, while the input vertex $A$ experiences bias $w_B/{2^x}-2\eps$ towards 1. If $A$ has value 0, then $B$ experiences the same bias towards 1, while $A$ is biased towards 0.   
\end{restatable}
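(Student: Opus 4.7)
The plan is to prove Lemma~\ref{l:leverage} by induction on the levels $i = 1, \ldots, x$ of the chain, showing that in any local optimum of the overall instance the values of all vertices inside the gadget are completely determined by the value of $A$, through an alternating pattern transmitted along the chain. The base case handles the pair at the $A$-end of the chain: since $w_A$ exceeds the relevant combined weight at level $1$, flipping either vertex of that pair against $A$ strictly increases the cut, so that pair is forced to take the value opposite to $A$ in any local optimum. The inductive step relies on the geometric doubling of weights between consecutive levels: each ``internal'' pair has its value pinned down by the combined weight of the heavier pair it is connected to, with the additive $\eps$-slacks in the weights ensuring that all dominance inequalities are strict. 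A short parity argument then shows that the pair adjacent to $B$ ends up at the value opposite to $B$'s ``preferred'' value under $A$, i.e.\ the value $A$ itself takes.

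Once every gadget vertex is pinned down in every local optimum, the two biases stated in the lemma follow by direct summation. The bias on $B$ equals the combined weight of $B$'s neighbors inside the gadget, which sit at the far end of the chain; by the geometric doubling starting from weight $w_A/2^{x+1}+\eps$ per vertex, this combined weight telescopes to $w_A/2^x + 2\eps$, and these neighbors agree with $A$ by the induction, yielding bias $w_A/2^x + 2\eps$ on $B$ towards the side opposite to $A$. Symmetrically, $A$'s only neighbors inside the gadget are $L_{1,1}$ and $L_{1,2}$, whose combined weight telescopes from the $B$-end to approximately $w_B/2^x$; these two vertices take the value opposite to $A$, producing a bias on $A$ towards its current value. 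The stated $-2\eps$ correction accounts for the $\eps$-slack at the $A$-end that is absorbed out of the naive summation.

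The main obstacle is the careful placement of $\eps$-slacks across all $x$ levels of the chain. The construction must set the $\eps$-corrections so that (i) every dominance inequality used in the inductive step is strict, (ii) the inductive step goes through uniformly at every level, and (iii) the total error in the two final biases remains $O(\eps)$ rather than accumulating to $O(x\eps)$. This is achieved by localizing the $\eps$-adjustments to the endpoint pairs of the chain; the intermediate levels then transmit the orientation exactly, contributing nothing to the slack, and the final biases differ from $w_A/2^x$ and $w_B/2^x$ by exactly $\pm 2\eps$ as stated. The fine point that must be verified is that the dominance at each internal level survives the small ``back-reaction'' contributed by the level below, which is the reason the weights are chosen with the specific additive form $w_B/2^{x+1}+\eps$ at the $A$-end rather than with multiplicative perturbations.
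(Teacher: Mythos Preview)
Your proposal is correct and follows essentially the same approach as the paper: a level-by-level induction showing that each pair of gadget vertices is dominated by its heavier neighbors (starting with $L_{1,1},L_{1,2}$ being dominated by $A$, then $L_{1,3},L_{1,4}$ by $L_{1,1},L_{1,2}$, and so on), after which the two claimed biases are read off directly from the weights of the two vertices adjacent to $B$ and the two adjacent to $A$.

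One remark: your third paragraph over-complicates matters. There is no ``telescoping'' and no risk of $O(x\eps)$ accumulation, because the bias on $B$ (resp.\ $A$) is simply the sum of the weights of $B$'s (resp.\ $A$'s) two neighbors in the gadget---nothing from the interior levels contributes. The $\eps$-slacks at each level are there only to make each local dominance strict; they do not propagate into the endpoint biases. So once the induction pins down the values of $L_{x,3},L_{x,4}$ and $L_{1,1},L_{1,2}$, the stated biases $w_A/2^x+2\eps$ and $w_B/2^x-2\eps$ follow immediately by adding two weights, with no further bookkeeping needed.
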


\begin{proof}
We first consider the vertices $L_{1,1},L_{1,2}$. They both experience bias $w_A$ towards the opposite value of $A$, which is greater than the remaining weight of their neighbors $2w_A-2\eps$, and hence they are both dominated to take the opposite value of $A$. Similarly, the vertices $L_{1,3},L_{1,4}$ are now biased to take the opposite values of $L_{2,1},L_{2,2}$ with bias at least $w_B/2^{x}+2\eps$, which is greater than the remaining neighbors of $w_B/2^{x}+\eps$. Hence, both $L_{1,3},L_{1,4}$ have the same value as $A$ in any local optimum. In a similar way, we can prove that, in any local optimum $L_{i,3}=L_{i,4}=A$, and therefore $B$ experiences bias $w_A/{2^x}+2\eps$ towards the opposite value of A, while $A$ experiences bias at most $w_B/{2^x}-2\eps$ from this gadget.  
\end{proof}

Note that the above lemma works for any value of $\eps$. This means that we can make the bias that $B$ experiences arbitrarily close to $w_B/{2^x}$. For all cases where such a \emph{Leverage} gadget is used, it is implied that $\eps = 2^{-1000N}$ which is smaller than all other weights in the construction. Hence, we only explicitly specify the $x$ parameter and, for simplicity, such a \emph{Leverage} gadget is denoted as below schematically.

\begin{figure}[t]
    \centering
    \includegraphics[scale=0.7]{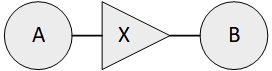}
    \caption{\emph{Leveraging Gadget} notation}
\label{fig:leveragenotation}
\end{figure}

\subsection{The Circuit Computing Gadget}\label{s:computing_gadgets}

Each of the two computing circuits is meant to both calculate the value of the underlying \cf{} instance, as well as the best neighboring solution. For technical reasons one of the two circuits will need to output the complement of the value instead of the value itself, so that comparison can be achieved later with a single vertex.

In this section we present the gadgets that implement the above circuits in a \lnmc{} instance. The construction below is similar to the constructions of Sch\"affer and Yannakakis used to prove \lmc{} PLS-complete \cite{SY91}. Since NOR is functionally complete we can implement any circuit with a combination of NOR gates. In particular, each NOR gate is composed of the gadgets below. Each such gadget is parameterized by a variable $n$, and a NOR gadget with parameter $n$ is denoted NOR($n$). Since we wish for earlier gates to dominate later gates we order the gates in reverse topological order, so as to never have a higher numbered gate depend on a lower numbered gate. The $i$th gate in this ordering corresponds to a gadget $NOR(2^{N+i})$. Note that the first gates of the circuit have high indices, while the final gates have the least indices.  

We take care to number the gates so that the gates that each output the final bit of the value of the circuit are numbered with the $n$ lowest indexes, i.e. the gate of the $k$th bit of the value corresponds to a gate $NOR(2^{N+k})$. This is necessary so that their output vertices can be used for comparing the binary values of the outputs.

The input vertices of these gadgets are either an input vertex to the whole circuit or they are the output vertex of another NOR gate, in which case they have the weight prescribed by the previous NOR gate. The input vertices of the entire circuit (which are not the output vertices of any NOR gate) are given weight $2^{5N}$. 

\begin{figure}[t]
    \centering
    \includegraphics[scale=0.8]{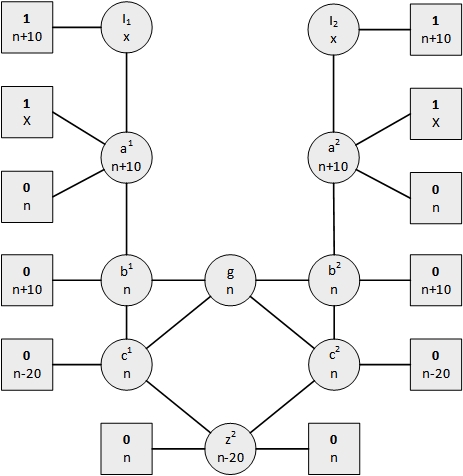}
    \,\,\,\,\,\,\,\,\,\,\,
    \includegraphics[scale=0.8]{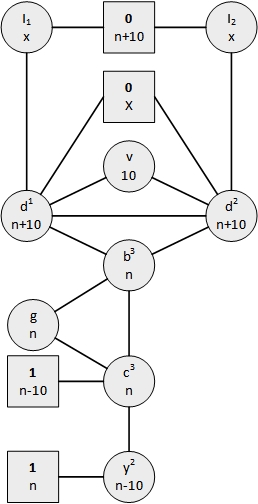}
    \caption{The \nmc{} instance implementing a NOR($n$) gadget.}
\label{f:SY_gadgets}
\end{figure}

Moreover, we have $y_i^1$,$z_i^1$ vertices which are meant to bias the internal vertices of each gadget and determine its functionality. Specifically, $a_i^1,a_i^2,c_i^1,c_i^2,v_i,b_i^3$ are biased to have the same value as $y_i^1$, while $b_i^1,b_i^2,d_i^1,d_i^2,c_i^3$ are biased to have the same value as $z_i^1$. This is achieved by auxiliary vertices of weight $2^{-200N}$, shown in Figure~\ref{f:localbias}.

\begin{figure}[t]
    \centering
    \includegraphics[scale=0.7]{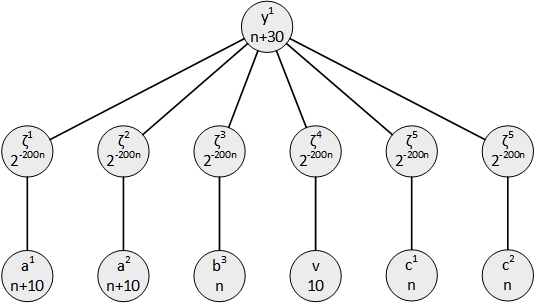}
    \,\,\,\,\,\,\,\,\,\,\,
    \includegraphics[scale=0.7]{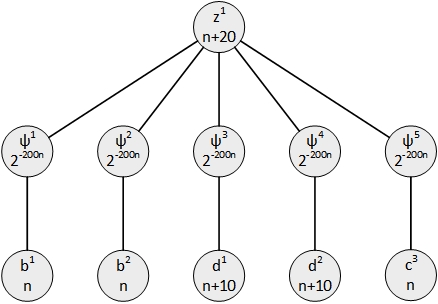}
    \caption{Local bias to internal vertices from $y_i^1,z_i^1$}
\label{f:localbias}
\end{figure}

We also have auxiliary vertices $\rho$ of weight $2^{-500N}$ that bias the output vertex $g_i$ to the correct NOR output value. Note that these vertices have the lowest weight in the entire construction.

\begin{figure}[t]
    \centering
    \includegraphics[scale=0.86]{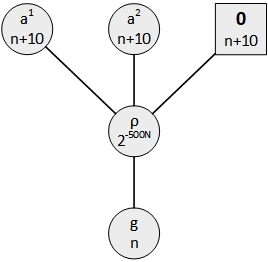}
    \caption{Extremely small bias to NOR output value.}
\label{f:norbias}
\end{figure}

These control vertices, $y^1,z^1,y^2,z^2$ are meant to decide the functionality of the gadget. We say that the $y,z$ vertices have their natural value when $y=1$ and $z=0$. We say they have their unnatural value when $y=0$ or $z=0$. In general, when these vertices all have their natural values the NOR gadget is calculating correctly and when they have their unnatural values the circuit's inputs are indifferent to the gadget. 

Unlike Sch\"affer and Yannakakis \cite{SY91}, we add two extra control variable vertices $y^3,z^3$ to each such NOR gadget, both of weight $n-50$. The reason is to ascertain that in case of incorrect calculation at least one $y$ variable will have its unnatural value. Otherwise, it would be possible, for example, to have an incorrect calculation with only $z^2$ being in an unnatural state.  

These NOR gadgets are not used in isolation, but instead compose a larger computing circuit. As Sch\"affer and Yannakakis do (\cite{SY91}), we connect each of the control variables $z^i,y^i$ of the above construction so as to propagate their natural or unnatural values depending on the situation. The connection of these gadgets is done according to the ordering we established earlier. Recall that the last $m$ gates correspond to gadgets calculating the value bits, the $n$ gates before them correspond to the output gates of the next neighbor, and the rest are internal gates of the circuit. 

\begin{figure}[t]
    \centering
    \includegraphics[width=\textwidth]{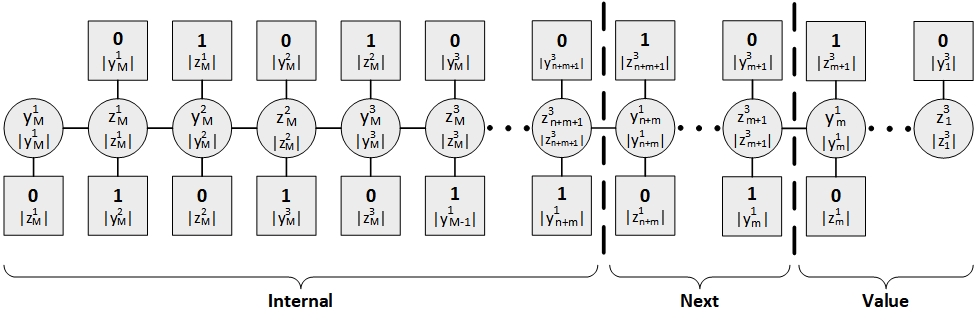}
    \caption{Connecting the control vertices of the NOR gadgets. Recall that $M$ is the number of total gates in the circuit, $n$ is the number of solution bits and $m$ is the number of value bits. Note that the gates are ordered in reverse, i.e the first gates have highest index.}
\label{f:chain}
\end{figure}

These gadgets' function is twofold. Firstly, they detect a potential error in a NOR calculation and propagate it to further gates, if the control variables have their unnatural values. Second,if the control variables have their unnatural values, they insulate the inputs so that they are indifferent to the gadget and can be changed by any external slight bias.

Furthermore, all the vertices of these gadgets are all multiplied by a $2^{100N}$ weight, except the vertices of the NOR gadget corresponding to the final bits of the value which are multiplied by $2^{90N}$. This is so that a possible error in the calculation of the next best neighbor supersedes any possible result of the comparison. The auxiliary vertices introduced above, which are meant to induce small biases to internal vertices, are not multiplied by anything.

Lastly, for technical simplicity, we have a single vertex for each computing circuit meant to induce bias to all control variable vertices $y,z$ at the same time. The topology of the connection is presented below.

\begin{figure}[t]
    \centering
    \includegraphics[scale=0.7]{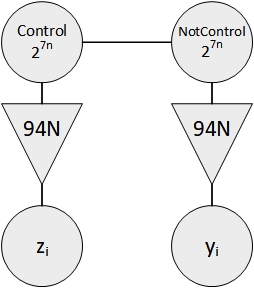}
    \caption{We use a single vertex Control to bias all control vertices $y,z$. Note that this vertex is connected with the $y,z$ vertices through leverage gadgets}
\label{f:control}
\end{figure}

We now prove the properties of these gadgets.

\begin{lemma}\label{l:SY_indifferency}
At any local optimum, if $z_i^1=1$ and $y_i^1=0$, then $I_1(g_i)$, $I_2(g_i)$ are indifferent with respect to the gadget $G_i$.
\end{lemma}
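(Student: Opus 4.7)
My plan is to show that under the hypothesis $z_i^1 = 1$, $y_i^1 = 0$, the values of all internal vertices of the gadget $G_i$ are uniquely forced at any local optimum, and that the forced configuration places equal total weight of neighbors of $I_1(g_i)$ (resp.\ $I_2(g_i)$) on each side of the cut, so that both input vertices experience zero bias from the gadget.

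First I would pin down the internal vertices. The auxiliary vertices of weight $2^{-200N}$ depicted in Figure~\ref{f:localbias} impose a tie-breaking local bias: $a_i^1,a_i^2,c_i^1,c_i^2,v_i,b_i^3$ are locally biased toward the value of $y_i^1$, while $b_i^1,b_i^2,d_i^1,d_i^2,c_i^3$ are biased toward the value of $z_i^1$. I would verify, vertex by vertex, that the weights inside $G_i$ are chosen so that, absent these auxiliary vertices, each internal vertex's heavier neighbors split evenly between the two sides of the cut, leaving the decision to the local bias. Under the hypothesis, this locks $a_i^1 = a_i^2 = c_i^1 = c_i^2 = v_i = b_i^3 = 0$ and $b_i^1 = b_i^2 = d_i^1 = d_i^2 = c_i^3 = 1$; the vertices $y_i^2,z_i^2,y_i^3,z_i^3$ and the output $g_i$ play no role in the local bias on $I_1(g_i),I_2(g_i)$ because, by the construction of Figure~\ref{f:SY_gadgets}, each input is adjacent only to a prescribed quadruple of the internal vertices listed above. (The role of $y_i^3,z_i^3$ and the control chain only concerns downstream gates, and the tiny weight $2^{-500N}$ bias vertex $\rho$ of Figure~\ref{f:norbias} lives below every gadget weight and cannot affect the arguments.)

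Next, I would compute the bias that the gadget produces on each input. By inspection of Figure~\ref{f:SY_gadgets}, $I_1(g_i)$ has exactly two neighbors in $G_i$ coming from the $y$-controlled family and two from the $z$-controlled family, all of equal weight $n$ (namely one pair among $\{a_i^1,c_i^1,v_i\}$ and one pair among $\{b_i^1,d_i^1,c_i^3\}$, depending on which side of the gadget one is reading). Since the first two have value $0$ and the latter two have value $1$, the contributions cancel exactly, so $I_1(g_i)$ has zero bias with respect to $G_i$. The identical argument applied to the symmetric half of the gadget yields zero bias for $I_2(g_i)$.

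The main obstacle is really the bookkeeping in the first step: one must check for each internal vertex that after summing the contributions of all its neighbors in the prescribed configuration the two sides match to within $2^{-200N}$, so that the local bias vertex of weight $2^{-200N}$ is strictly decisive. This is straightforward but requires explicitly enumerating the edge structure in Figure~\ref{f:SY_gadgets} for each of $a_i^1,a_i^2,b_i^1,b_i^2,b_i^3,c_i^1,c_i^2,c_i^3,d_i^1,d_i^2,v_i$ and using that all internal weights inside a single $\text{NOR}(n)$ gadget are polynomial multiples of $n$ while neighboring gadgets' weights differ by a factor of $2^N$, so no stray bias leaks in. Once the internal configuration is nailed down, the cancellation at the inputs is immediate from the symmetric placement of the $y$-family and $z$-family neighbors of each input vertex.
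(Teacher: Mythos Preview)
Your plan has a genuine gap. You assert that, under the hypothesis, ``the values of all internal vertices of the gadget $G_i$ are uniquely forced,'' with the $2^{-200N}$ tie-breaker being decisive at every vertex. This is not how the $\mathrm{NOR}(n)$ gadget behaves. The output vertex $g_i$ is \emph{not} determined by $y_i^1=0$, $z_i^1=1$: both $g_i=0$ and $g_i=1$ are possible at a local optimum (indeed, the whole point of the write mode is that $g_i$ is free). Several other internal vertices---for instance $c_i^1$, $c_i^2$, $b_i^3$, $c_i^3$, $v_i$---have their values driven by large biases of order $2^{100N}\cdot 2^i$ coming from $g_i$ and other heavy neighbors, not by the $2^{-200N}$ tie-breaker; and which way those biases push depends on $g_i$. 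Your claim that ``absent these auxiliary vertices, each internal vertex's heavier neighbors split evenly'' is therefore false in general, and the chain ``tie-breaker forces everything'' collapses.

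The paper's proof does what yours cannot avoid: it splits on the value of $g_i$. In each case it follows a chain of \emph{domination} arguments (e.g.\ $g_i=0\Rightarrow b_i^1=1\Rightarrow a_i^1=0$; or $g_i=1\Rightarrow c_i^1=0\Rightarrow b_i^1=1\Rightarrow a_i^1=0$), occasionally using the $2^{-200N}$ bias only at genuinely balanced vertices, to reach the conclusion $a_i^1=a_i^2=0$ and $d_i^1=d_i^2=1$ in \emph{both} cases. Since each input $I_j(g_i)$ is adjacent in $G_i$ precisely to $a_i^j$ and $d_i^j$ (not to four vertices as you guessed), and these have equal weight and opposite values, indifference follows. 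Your endpoint is the right one, but to get there you must do the case analysis on $g_i$ and trace the heavy-weight implications; the tie-breaker alone does not pin the configuration.
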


\begin{proof}
Since $z_i^1=1$ and $y_i^1=0$, by the previous lemma, $z_i^2=1$ and $y_i^2=0$
Since $y_i^1=0$, $a_i^1,a_i^2,c_i^1,c_i^2,v,b_i^3$ have an $\epsilon = 2^{-200N}$ bias towards 0. 
Since $z_i^1=1$, $d,c_i^3,b_i^1,b_i^2$ have an $\epsilon = 2^{-200N}$ bias towards 1. 
Assume $g_i=0$. Then $b_i^1$ has bias at least $2^{100N}\cdot(2\cdot2^i+10)+2^{-200N}$ towards 1 
which dominates his best response. Hence, $b_i^1=1$ in this case. 
Now $a_i^1$ has bias at least $w(I1)+2^{100N}\cdot2^i+2^{-200N}$ towards 0 which also dominates. Therefore, 
$a_i^1=0$. Similarly, $a_i^2=0$. 
Moreover, $c_i^3$ has $y_i^2$ and $g$ as neighbors which are both 0 so it can take
its preferred value of $c_i^3=1$.
Assume both $d_i^1,d_i^2$ are 0. Then $v=1$ and $b_i^3=1$  and hence at least one of $d_i^1,d_i^2$ would have incentive to change to 1. If $d_i^1=0,d_i^2=1$ then $v=0$ due to its $2^{-200N}$ bias. Also, $b_i^3=0$ because $d_i^1=0$, $d_i^2=1$, $g=0$, $c_i^3=1$ balance each other out $b_i^3=0$ due to its $2^{-200N}$ bias. Since $d_i^1$ experiences at least $2^{100N}\cdot{2\cdot2^i+10}+w(I_i^1)+2^{-200N}$ bias towards 1 it can only be $d_i^1=1$ in local optimum.
Hence, if $g=0 \implies d_i^1=d_i^2=1,a_i^1=a_i^2=0$.
Assume $g_i=1$. Then $c_i^1$ experiences bias towards 0 from $g_i$ and $z_i^2$ which together with the $2^{-200N}$ bias from $y_i^1$
means that his dominant strategy is to take the value 0.
Now $b_i^1$ experiences bias from $c_i^1=0$ towards 1 as well as bias $2^{100N}\cdot{2^i+10}$ towards 1. Along with the $2^{-200N}$ bias from
$z_i^1$ we have that $b_i^1=1$ in any local optimum. Similarly, $b_i^2=1$ by symmetry.
Hence, in this case as well $a_i^1$ is 0 in any local optimum. Similarly, we get that $a_i^2=0$ in any local optimum.

Assume both $d_i^1=d_i^2=0$ then, as above, we have that $b_i^3=1$ and hence at least one of the $d$ would gain the edge of weight $2^{-200N}$ by taking the value 1. Hence, at least one $d$ is equal to 1 and $b_i^3=0$ since it is indifferent with respect to $g,c_i^3$. Since $v$ is now indifferent with respect to $d_i^1=0,d_i^2=1$ it takes its preferred value $v=0$. Since $b_i^3=v=0$ we have that both $d_i^1,d_i^2$ must take their preferred values $d_i^1=d_i^2=1$ in any local optimum. 

In both cases both $a_i^1=a_i^2=0, d_i^1=d_i^2=1$ and hence $I_1(g_i),I_2(g_i)$ are indifferent with respect to the gadget.
\end{proof}

\begin{lemma}\label{l:SY_detection}
If gate $G_i$ is incorrect, then $z_i^2 = 1$. If $y_i^2 = 0$ then $z_i^2 = 1$. 
If $z_i^2 = 1$, then for all $j < i$ $z_j^1 = z_j^2 = z_j^3 = 1$ and 
$y_j^1 = y_j^2 = y_j^3 = 0$.
\end{lemma}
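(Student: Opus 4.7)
The plan is to establish the three claims in turn, using case analysis of the NOR gadget's internal vertices together with the wiring of the control-vertex chain from Figure~\ref{f:chain}.

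First I would prove claim~2 ($y_i^2 = 0 \Rightarrow z_i^2 = 1$) directly from the chain wiring. Inspection of Figure~\ref{f:chain} shows that $y_i^2$ and $z_i^2$ are connected so that at any local optimum, whenever $y_i^2$ is in its unnatural state, the neighbor weights (all scaled by $2^{100N}$) force $z_i^2$ into its unnatural state as well; this is a two-line balance-of-bias calculation on a single vertex.

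For claim~1 I would argue by contrapositive: if $z_i^2 = 0$ at a local optimum, then $G_i$ computes the correct NOR of its inputs. From $z_i^2 = 0$ together with claim~2 we get $y_i^2 = 1$. A short propagation argument inside the control chain of gate $i$ (passing through $y_i^3, z_i^3$, which were introduced precisely to guarantee that an incorrect calculation forces at least one $y$-vertex into its unnatural state) puts the gadget into its natural configuration $y_i^1 = 1, z_i^1 = 0$. In this configuration the local biases of weight $2^{-200N}$ drawn in Figure~\ref{f:localbias} pin $a_i^1, a_i^2, c_i^1, c_i^2, v_i, b_i^3$ to the value of $y_i^1 = 1$ and $b_i^1, b_i^2, d_i^1, d_i^2, c_i^3$ to the value of $z_i^1 = 0$. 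A case analysis over the four possible assignments of $(I_1(g_i), I_2(g_i))$ (analogous to the case analysis performed in the proof of Lemma~\ref{l:SY_indifferency}, but now tracking which value of $g_i$ is locally optimal rather than showing indifference of the inputs) then shows that the only locally optimal value of $g_i$ is the correct NOR of the inputs, where the $2^{-500N}$ tiebreak bias of Figure~\ref{f:norbias} breaks ties. This contradicts the incorrectness of $G_i$.

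Finally, I would prove claim~3 by downward induction on $j$, starting from $j = i-1$. Both the base case and the inductive step exploit the same pattern: assuming $z_{j+1}^2 = 1$ (available at the base case by hypothesis and at later steps from the previous induction step via $z_j^1 = 1$ inherited from above), the wiring of Figure~\ref{f:chain} delivers a bias amplified by the $2^{100N}$ factor on $y_j^3$ and $z_j^3$ that dominates any competing influence from the internal vertices of gate $j$; this forces $y_j^3 = 0$ and $z_j^3 = 1$, which in turn propagate along the chain to $y_j^1 = y_j^2 = 0$ and $z_j^1 = z_j^2 = 1$, allowing us to descend to $j-1$.

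The main obstacle is claim~1: the detailed case analysis verifying in each of the four input configurations that the natural settings of $y_i^1, z_i^1$, together with the hierarchy of weights $2^{100N} \gg 2^{N+i} \gg 2^{-200N} \gg 2^{-500N}$, force the internal vertices to the unique local optimum corresponding to the correct NOR output. The bookkeeping mirrors that of Lemma~\ref{l:SY_indifferency} but is direction-sensitive and must be done separately for each case, which I expect to be the most laborious part of the argument.
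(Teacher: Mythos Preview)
Your plans for claims~2 and~3 are essentially the paper's: claim~2 is a single balance-of-weights calculation at $z_i^2$, and claim~3 is inductive propagation of unnatural values along the control chain.

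Your treatment of claim~1, however, contains a genuine gap. Two points break it.

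First, the sentence ``the local biases of weight $2^{-200N}$ \ldots\ pin $a_i^1, a_i^2, c_i^1, c_i^2, v_i, b_i^3$ to the value of $y_i^1 = 1$'' is simply false. Those auxiliary edges are many orders of magnitude lighter than the main gadget edges (of order $2^{i}\cdot 2^{100N}$); they are tiebreakers, not dominators. The internal vertices $a_i^1, b_i^1, c_i^1$, etc.\ are fixed by domination from the input vertices and from $g_i$, not by $y_i^1, z_i^1$. A case analysis built on the ``pinning'' premise will therefore not go through.

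Second, and more structurally, even granting $y_i^1 = 1, z_i^1 = 0$ you cannot conclude that ``the only locally optimal value of $g_i$ is the correct NOR.'' With the natural control values, Lemma~\ref{l:SY_correction} shows that an \emph{incorrect} $g_i$ is indifferent with respect to the gate-$i$ gadget and receives only the $2^{-500N}$ bias from~$\rho$. But $g_i$ is also an input to downstream gates $G_j$ with $j<i$, whose $a$- and $d$-vertices can bias $g_i$ with weight of order $2^{j}\cdot 2^{100N}$, easily holding it at the wrong value. So an incorrect $g_i$ can be locally optimal, and your contrapositive yields no contradiction.

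The paper proves claim~1 directly and never touches $y_i^1, z_i^1$. It treats the two incorrect configurations separately. If some input is $1$ while $g_i = 1$, domination from that input and the constant forces $a_i^1 = 0$, hence $b_i^1 = 1$, hence (using $g_i = 1$) $c_i^1 = 0$, and this alone forces $z_i^2 = 1$. If both inputs are $0$ while $g_i = 0$, the $d$-side of the gadget forces $d_i^1 = d_i^2 = 1$, $b_i^3 = 0$, $c_i^3 = 1$, hence $y_i^2 = 0$, and then claim~2 gives $z_i^2 = 1$. The key point is that the relevant internal vertices are pinned by the inputs and by the (incorrect) value of $g_i$ itself, so the implication ``$G_i$ incorrect $\Rightarrow z_i^2 = 1$'' holds regardless of why $g_i$ happens to be incorrect and regardless of the state of $y_i^1, z_i^1$.
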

\begin{proof}
There are two possibilities if $G_i$ is incorrect. 
Either one of the inputs $I_1(g_i)$, $I_2(g_i)$ is $1$ and $g_i = 1$
or both $I_1(g_i) = I_2(g_i) = 0$
and $g_i = 0$.

In the first case, without loss of generality we have that $I_1(g_i) = 1$. 
This means that vertex $a_i^1$ is biased
towards value $0$ with weight at least $2\cdot2^{i+1}\cdot2^{100N}$ by $I_1(g_i)$ and
constant vertex $1$. This bias is greater than the weight of all the other neighbors
of $a_i^1$ combined. Hence, in local optimum, $a_i^1 = 0$. Hence, vertex $b_i^1$ is
biased towards value $1$ with weight at least $2\cdot|a_i^1|$, which is greater than
the total weight of all the other neighbors of $b_i^1$ combined. Hence, $b_i^1 = 1$.
Similarly, we can argue that $a_i^2 = 0$ and $b_i^2 = 1$ if $I_2(g_i) = 1$. 

Since $b_i^1 = 1$ and $g_i = 1$, vertex $c_i$ is biased towards $0$ with weight at least
$2\cdot2^i\cdot2^{100N}$, which is greater than
the total weight of all the other neighbors of $c_i^1$ combined. Hence, $c_i^1 = 0$.
We now focus on vertex $z_i^2$. Its neighbors are two vertices of weight $2^i \cdot 2^{100N}$
with constant value $0$, vertices $c_i^1$, $c_i^2$ , $y_i^3$ and a constant vertex $1$
with weights $2^{100N}\cdot(2^i-50)$ and some auxiliary vertices of negligible weight. 
If $c_i^1 = 0$, then $z_i^2$ is biased towards $1$ with weight at least 
$3\cdot2^i \cdot 2^{100N}$, which is greater than the weight of the remaining
neighbors combined. Hence, in local optimum, $z_i^2 = 1$. Hence, the claim has 
been proved in this case. If $I_2(g_i) = 1$, the proof is analogous.

Now suppose $I_1(g_i) = I_2(g_i) = 0$ and $g_i = 0$. Since $I_1(g_i) = 0$, vertex
$d_i^1$ is biased towards $1$ with weight at least $2\cdot2^{i+1} \cdot 2^{100N}$, 
which is greater that the weight of all its other neighbors combined. 
Hence, $d_i^1 = 1$. Similarly, we can prove that $d_i^2 = 1$. 
This means that vertex $b_i^3$ is biased towards $0$ with weight at least
$2\cdot(2^i + 10) \cdot 2^{100N}$, which is greater than the weight
of its other vertices combined. This implies that $b_i^3 = 0$.
By the same reasoning, $v_i = 0$. 
Since $b_i^3 = g_i = 0$, vertex $c_i^3$ is biased towards $1$ with weight
at least $2\cdot2^i \cdot 2^{100N}$, which is greater than the weight
of its other vertices combined. Hence, $c_i^3 = 1$. 
Now we focus on vertex $y_i^2$. Its neighbors are a vertex of weight $2^i \cdot 2^{100N}$
with constant value $1$, vertex $c_i^3$ with weight $2^i \cdot 2^{100N}$, 
$z_i^1$,a constant vertex $1$ both with weight $2^{100N}\cdot(2^i_20)$ ,$z_i^2$ and a constant $0$
with weight $2^{100N}\cdot{2^i-10}$ and some auxiliary vertices of negligible weight.
Hence, $y_i^2$ is biased towards $0$ with weight at least 
$2\cdot2^i \cdot 2^{100N}$, which is greater than the weight of the remaining
neighbors combined. Hence, $y_i^2 = 0$. 

We are now going to prove that if $y_i^2 = 0$, then $z_i^2 = 1$, which concludes
the proof for this case and is also the second claim of the lemma. 
We first notice that $z_i^2$ is never biased towards $0$ by the vertices of the 
NOR gadget. Hence, if the bias of the remaining vertices is towards $1$, then $z_i^2 = 1$
in local optimum. We notice that vertices $y_i^2$ and constant vertex $0$ bias our vertex
with weight $2\cdot(2^i-10)\cdot2^{100N}$, which is greater that any potential
bias by vertices $y_i^3$ and constant $1$ in the chain of total weight $2\cdot2^{100N}\cdot(2^i-50)$. Hence, $z_i^2 = 1$. 

It remains to prove the last claim of the Lemma. It suffices to show that when 
a $z_i$ in the chain is $1$, the next $y_{i+1}$ will be $0$ and the claim will follow
inductively. By a similar argument to the one used for the second claim, vertex
$y_{i+1}$ is not biased towards $1$ by any vertex in the NOR gadget. However, it
experiences bias towards $0$ from vertex $z_i$ and constant vertex $1$, which is greater
than any other potential bias from its other neighbors. Hence, $y_{i+1} = 0$ and
the claim follows.
\end{proof}

\begin{lemma}\label{l:SY_correction}
Suppose $z^1_i=0$ and $y^1_i=1$. If $g_i$ is correct then $z^2$ and $y^2$ are indifferent with respect to the other vertices of the gate $G_i$. If $g_i$ is incorrect then $g_i$ is indifferent with respect to the other vertices of the gate $G_i$, but gains the vertex $\rho$ of weight $2^{-500N}$.
\end{lemma}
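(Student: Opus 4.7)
The plan is to case-split on the values of the two inputs $I_1(g_i), I_2(g_i)$ and on whether $g_i$ equals $\mathrm{NOR}(I_1, I_2)$, then propagate the dominant biases at each internal vertex down the ``upper chain'' $a \to b \to c$ and the ``lower chain'' $d \to \{v, b_3\} \to c_3$ of the gadget in Figure~\ref{f:SY_gadgets}. Under the hypothesis $y_i^1 = 1$ and $z_i^1 = 0$, the auxiliary vertices of Figure~\ref{f:localbias} impose a tie-breaking bias of magnitude $2^{-200N}$ pushing $a_i^1, a_i^2, c_i^1, c_i^2, v_i, b_i^3$ towards $1$ and $b_i^1, b_i^2, d_i^1, d_i^2, c_i^3$ towards $0$; since every other in-gate edge has weight of order at least $2^{100N}$, these tiny biases are decisive only when the dominant weights exactly cancel.

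For the correct case I would split on the inputs. When $I_1 = I_2 = 0$ and $g_i = 1$, the upper chain is ``inactive'': with no $1$-input forcing $a_i^1, a_i^2$, the $y^1$ tie-breaker keeps them at $1$, which pushes $b_i^1 = b_i^2 = 0$ (reinforced by the $z^1$ tie-breaker), and then $c_i^1, c_i^2$ see canceling bias from $b$ and $g_i$ and land on the value whose contribution to $z_i^2$ exactly cancels the constant-$0$ and $y_i^3$ neighbors identified in the proof of Lemma~\ref{l:SY_detection}. The lower chain is ``active'': $d_i^1 = d_i^2 = 0$ by the $0$-inputs, and then $v_i, b_i^3, c_i^3$ resolve so that the in-gate contributions to $y_i^2$ cancel. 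When some $I_j = 1$ and $g_i = 0$, the roles of the two chains swap (the upper chain propagates the $1$-input while the lower chain becomes inactive) and the analogous cancellation at $z_i^2$ and $y_i^2$ applies. The offset constants of Figure~\ref{f:SY_gadgets} (the $n-10, n-20, n-50$ terms) are engineered precisely so that these cancellations occur.

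For the incorrect case I would run the same propagation, but now the value held by $g_i$ is inconsistent with what the chain is trying to produce. Consider for instance $I_1 = I_2 = 0$ with $g_i = 0$: the inputs still pin $d_i^1 = d_i^2 = 0$, but $b_i^3$ and $v_i$ are now pushed to $1$ and $c_i^3$ feels opposite forces from $b_i^3 = 1$ and $g_i = 0$. Arguing as in the proof of Lemma~\ref{l:SY_detection}, the internal vertices settle at values such that the dominant in-gate neighbors of $g_i$ (the $c_i^1, c_i^2, b_i^3, c_i^3$ together with the associated constants) split into equal total weight on each side, leaving $g_i$ indifferent within $G_i$. The only remaining pressure on $g_i$ is then the auxiliary $\rho$-vertex of weight $2^{-500N}$ from Figure~\ref{f:norbias}, which biases it towards the correct NOR output. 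The symmetric subcase, some $I_j = 1$ with $g_i = 1$, is handled by the same argument on the upper chain.

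The hard part will be the bookkeeping. For each of the four subcases I have to verify (i) that the propagated internal configuration is a local optimum, so that no $a, b, c, d, v$ vertex has an incentive to flip once the $y^1, z^1$ tie-breakers are accounted for, and (ii) that the carefully chosen weights of Figure~\ref{f:SY_gadgets} yield exact cancellation at the ``exit'' vertex---$z_i^2$ and $y_i^2$ in the correct case, $g_i$ in the incorrect case. Claim (ii) is delicate: the offsets $n-10, n-20, n-50$ are tailored to force these cancellations, and the extraordinarily small weight $2^{-500N}$ assigned to $\rho$ is what makes it simultaneously decisive for the incorrect case in (ii) and negligible with respect to every inequality used in (i).
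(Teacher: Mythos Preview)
Your overall strategy---case-split on the inputs and on whether $g_i$ is correct, propagate the dominant biases through the upper ($a\to b\to c$) and lower ($d\to\{v,b^3\}\to c^3$) chains using the $y^1/z^1$ tie-breakers when the heavy weights cancel, and then verify exact cancellation at $z_i^2,y_i^2$ (correct case) or at $g_i$ (incorrect case)---is exactly the approach the paper takes.

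However, several of the concrete propagation steps you state have the polarities reversed, and the cancellation argument depends on getting these right. A $0$-input biases the corresponding $d$ vertex towards $1$, not $0$ (see the proof of Lemma~\ref{l:SY_detection}): so in the subcase $I_1=I_2=0$ one gets $d_i^1=d_i^2=1$, which forces $b_i^3=0$ and $v_i=0$, the opposite of what you wrote. In the incorrect subcase $I_1=I_2=0,\ g_i=0$, this means $c_i^3$ sees $b_i^3=0$ and $g_i=0$ on the \emph{same} side and is driven to $1$; the indifference of $g_i$ then comes from the three matched pairs $(b_i^j,c_i^j)=(0,1)$, $j=1,2,3$, not from ``opposite forces'' at $c_i^3$. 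Similarly, in the correct subcase with some $I_j=1$ and $g_i=0$, the lower chain is not inactive: the $1$-input drives the corresponding $d$ to $0$, hence $b_i^3=1$ and $c_i^3=0$, and this is precisely what makes $y_i^2$ indifferent. Finally, the in-gate cancellations at $y_i^2$, $z_i^2$, and $g_i$ come from matched pairs of weight-$2^i$ neighbors, not from the offsets $n-10,\ n-20,\ n-50$; those offsets govern the inter-gate chain of Figure~\ref{f:chain} rather than this lemma. Once the directions are corrected your plan goes through as in the paper.
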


\begin{proof}
Assume $g_i$ is corrrect. 

Assume at least one of $I_1(g_i),I_2(g_i)$ is equal to 1, say $I_1(g_i)=1$, hence at least one of $d_i^1,d_i^2$, assume $d_i^1$, is equal to 0. This is because otherwise it would experience bias from its neighbors $I_1(g_i)$,$d_i^2$ towards 0 which, along with the bias from the auxiliary vertex between $y_i^1$ and $d_i^1$, would dominate it towards 0. Therefore, $b_i^3$ experiences bias towards 1 from both $g_i$, which is correct, and $d_i^1$, which means, along with the bias from $y_i^1$, its equal to 1. Hence, $c_i^3$ must be equal to 0, since it is dominated by the bias from $b_i^3$, the constant vertex of 1 and its auxiliary bias from $z_i^1$. Hence, $c_i^3$ is 0 and $y_i^2$ is indifferent. 

Assume $I_1(g_i)=0,I_2(g_i)=0$. Hence, $d_i^1=1,d_i^2=1$, which means that $b_i^3=0$. Since $g$ is correct, it must be $g=1$, and therefore $c_i^3=0$, since it can take its preferred value of 0, towards which it is biased by $z_i^1$. Therefore, $y_i^2$ is indifferent to this gadget.

Moreover, since $g_i=0$ it must be that $c_i^1=c_i^2=1$ since they both have $g_i$ and a constant 0 vertex as their neighbors, which along with the bias from $y_i$, dominates their bias. Since $z_i^2$ neighbors with two 1 vertices and two 0 vertices it is indifferent with respect to this gadget.

In all cases, when $g_i$ is correct, $y_i^2$, $z_i^2$ are indifferent.

Assume $g_i$ is incorrect.

Assume that at least one of $I_1(g_i)$,$I_2(g_i)$ is equal to 1. Similarly to above, $d_i^1=0$. Since, $g_i$ is incorrect $c_i^3$ will be 0 due to the bias from $g_i$, the constant vertex 1 and $z_i^1$. Hence, $b_i^3=1$. The vertex $g_i$ is therefore indifferent with respect to this gadget.

Furthermore, since $I_1(g_i)=1$, $a_i^1=0$ and $b_i^1=1$. Since $g_i=1$ we have that $c_i^3=0$. Also, since $I_2(g_i)=0$, $a_i^2$ must take its preferred value of 1, and hence $b_i^2$ takes its preferred value of 0. Similarly, $c_i^2$ can also take its preferred value of 1. Overall, $g_i$ is connected to $b_i^1=1$,$c_i^1=0$,$b_i^2=0$,$c_i^2=1$ and hence is indifferent 

Assume both $I_1(g_i)=I_2(g_i)=0$. Then $d_i^1=d_i^2=1$, which means that $b_i^3=0$, and since $g_i=0$, we have that $c_i^3=1$. Hence, $g_i$ is indifferent with respect to this gadget. 

Because $I_1(g_i)=I_2(g_i)=0$, we have that $a_i^1=a_i^2=1$ since they can take their preferred values. Moreover, $b_i^1=b_i^2=0$ since they are biased to 0 by $z_i^2$. Given that $g_i=0$ it must be that both $c_i^1=c_i^2=1$. Therefore, $g_i$ indifferent in this case as well.

Since in all cases that $g_i$ is incorrect, it is indifferent with respect to this gadget, it will adhere to the bias that the auxiliary gadget connecting $a_i^1,a_i^2,g_i$ gives to $g_i$. If both $I_i$ is 1 then $a_i^1=0$ and if $I_i=0$ then $a_i^1=1$. In all cases, $a_i = \neg I_i$. Hence, the auxiliary gadget gives bias to $g_i$ towards 0, except when both $I_1(g_i)=I_2(g_i)=0$ in which case it biases $g_i$ towards 1. This means that $g_i$ has a $2^{-500N}$ bias towards its NOR value.  
\end{proof}

\begin{lemma}\label{l:control_bias}
If $Control=1$ then all $y,z$ vertices have a $2^{-87N}$ bias towards their natural values. If $Control=0$ then all $y,z$ vertices have a $2^{-87N}$ bias towards their unnatural values.
\end{lemma}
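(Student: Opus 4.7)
\noindent\textbf{Proof plan for Lemma~\ref{l:control_bias}.} The plan is to directly invoke Lemma~\ref{l:leverage} (the properties of the \emph{Leverage} gadget) on each of the leverage gadgets inserted between the \emph{Control} vertex and the control vertices $y_i^1, z_i^1$ of every NOR gate, as depicted in Figure~\ref{f:control}. No interaction between different leverage gadgets needs to be analysed, since each $y,z$ vertex is connected to Control through its own independent gadget.

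First, I would fix the parameters. Let $w_{\text{Control}}$ be the weight chosen for the Control vertex (which, per the discussion following Theorem~\ref{l:SY_gadgets}, is taken much smaller than the weights of the internal circuit vertices so that the \emph{Equality} gadget is able to control its mode). The leverage gadgets joining Control to the $y_i^1, z_i^1$ vertices are instantiated with parameter $x$ chosen so that $w_{\text{Control}}/2^x$ together with the correction $2\varepsilon$ (recall $\varepsilon = 2^{-1000N}$) is equal to $2^{-87N}$; concretely, one picks $x = \log_2(w_{\text{Control}}) + 87N$. With this choice, Lemma~\ref{l:leverage} guarantees that the output terminal of each leverage gadget transmits a bias of magnitude exactly $2^{-87N}$ (up to the negligible $\pm 2\varepsilon$ term, which is dominated by every other weight in the construction) to the adjacent $y_i^1$ or $z_i^1$ vertex, while Control itself only receives from the gadget a negligible bias that will never overturn its value set by the \emph{Equality} gadget.

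Next I would read off the direction of the transmitted bias from Lemma~\ref{l:leverage}. By that lemma, if the input vertex of a leverage gadget has value $1$, its output experiences bias towards $0$, and symmetrically if the input has value $0$. The wiring in Figure~\ref{f:control} is arranged so that each $y_i^1$ vertex is joined to Control through a leverage gadget whose terminal neighbours $y_i^1$ on the side that \emph{favours the same value as Control} (i.e.\ one further ``inversion'' of neighbour is built in before $y_i^1$), whereas each $z_i^1$ vertex is joined so that its terminal favours the \emph{opposite} value of Control. Consequently, when $\text{Control}=1$, every $y_i^1$ vertex receives bias $2^{-87N}$ towards $1$ and every $z_i^1$ receives bias $2^{-87N}$ towards $0$, which are exactly the \emph{natural} values; when $\text{Control}=0$ the same reasoning with reversed signs yields bias $2^{-87N}$ towards the \emph{unnatural} values.

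The only step that requires care is confirming that the polarity of the leverage gadgets is actually set as above in Figure~\ref{f:control}, since Lemma~\ref{l:leverage} is stated in a fixed orientation while the natural convention for $y,z$ differs ($y$'s natural value is $1$ and $z$'s is $0$). I expect this to be the main (but routine) obstacle: it amounts to a case-by-case verification that the gadget is wired with the correct final neighbour vertex for each of the two kinds of controls. Once this is checked, the lemma follows immediately from the quantitative guarantee of Lemma~\ref{l:leverage}.
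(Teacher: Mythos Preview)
Your plan is essentially the paper's proof: apply Lemma~\ref{l:leverage} to each leverage gadget in Figure~\ref{f:control}, and read off the $2^{-87N}$ bias (indeed $w_{\text{Control}}=2^{7N}$ and $x=94N$, matching your formula). The one point you guess incorrectly is the polarity mechanism: the paper does \emph{not} build an extra inversion into the $y$-side gadgets. Instead it introduces an auxiliary vertex \emph{NotControl} of weight $2^{7N}$ adjacent to \emph{Control}; this vertex is forced to $\neg\text{Control}$, and the leverage gadgets to the $y$ vertices originate at \emph{NotControl} while those to the $z$ vertices originate at \emph{Control}. That is the entirety of the ``routine obstacle'' you flagged. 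Also, the bias is applied to all $y^1,y^2,y^3,z^1,z^2,z^3$ vertices, not only $y^1,z^1$.
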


\begin{proof}
The $NotControl$ vertices are dominated by $Control$'s bias of $2^{7N}$ and hence have the opposite value. By Lemma~\ref{l:leverage} we have that $Control$ and $NotControl$ experience at most $2^{6N}$ bias, while the $y,z$ vertices experience $2^{-87N}$ bias towards the values opposite $Control$ and $NotControl$, which proves the claim. 
\end{proof}

\begin{lemma}\label{l:full_from_bias}
Assuming all vertices of the computing circuit gadget are in local optimum and have no external biases. If $Control=1$ then $\forall i$,$ z_i^1=0,y_i^1=1,z_i^2=0,y_i^2=1,z_i^3=0,y_i^3=1$. If $Control=0$ then $\forall i z_i^1=1,y_i^1=0,z_i^2=1,y_i^2=0,z_i^3=1,y_i^3=0$. 
\end{lemma}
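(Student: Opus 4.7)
The plan is to induct on the gate index, going from the highest (first-computed) down to the lowest, handling the case $Control=1$ in detail; the case $Control=0$ will follow by a symmetric argument. By Lemma~\ref{l:control_bias}, every control vertex $y_i^k, z_i^k$ (for $k\in\{1,2,3\}$) experiences a bias of $2^{-87N}$ from $Control$ towards its natural value. The only other biases on these vertices come from (i) the auxiliary vertices of weight $2^{-200N}$ in Figure~\ref{f:localbias}, which contribute at most $O(2^{-200N})\ll 2^{-87N}$, and (ii) the neighbors in the control chain of Figure~\ref{f:chain}. The induction will ensure that contributions of type (ii) never oppose $Control$'s bias.

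For the base case I would take the highest-indexed gate $G_M$. No gate precedes it in the chain, so the only biases on $y_M^1, z_M^1, y_M^3, z_M^3$ are $Control$'s bias towards natural and the negligible auxiliary biases. Hence at any local optimum, $y_M^1=1$, $z_M^1=0$, $y_M^3=1$, $z_M^3=0$. Now apply Lemma~\ref{l:SY_correction} with $z_M^1=0, y_M^1=1$: if $g_M$ is computed correctly, then $y_M^2, z_M^2$ are indifferent to the remaining vertices of $G_M$ and thus adopt their natural values under $Control$'s bias; if $g_M$ is incorrect, then $g_M$ itself is indifferent to the gadget but receives a $2^{-500N}$ bias from the auxiliary vertex $\rho$ towards its NOR value, so in local optimum $g_M$ must take that NOR value, returning us to the first subcase. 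For the inductive step, assume every gate with index strictly greater than $i$ has all its $y,z$ vertices in their natural state. Then the chain neighbors of $y_i^1, z_i^1, y_i^3, z_i^3$ sit at natural values, and by the construction of Figure~\ref{f:chain} the chain bias they impart is consistent with natural. $Control$'s $2^{-87N}$ bias therefore prevails, and a second application of Lemma~\ref{l:SY_correction} (exactly as in the base case) yields $y_i^2=1, z_i^2=0$, closing the induction.

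For $Control=0$ the same induction runs with $Control$ now biasing every $y,z$ towards its unnatural value. In the base case, $y_M^1=0$ and $z_M^1=1$, so by Lemma~\ref{l:SY_indifferency} the inputs of $G_M$ are indifferent with respect to the gadget; the vertices $y_M^2, z_M^2$ also see no dominant internal bias from $G_M$ and follow $Control$ to unnatural values. Once $z_M^2=1$, Lemma~\ref{l:SY_detection} immediately forces $y_j^1=y_j^2=y_j^3=0$ and $z_j^1=z_j^2=z_j^3=1$ for all $j<M$, which matches $Control$'s bias and finishes this direction at a single stroke. The main obstacle will be a careful accounting of the three distinct bias scales ($2^{100N}$ for internal gadget vertices, $2^{-87N}$ for $Control$'s leveraged bias, and $2^{-200N}$ or $2^{-500N}$ for the fine-tuning auxiliaries), together with verifying that the control chain of Figure~\ref{f:chain} is oriented so that natural (respectively unnatural) neighbors in higher-indexed gates indeed bias the $y_i^1, z_i^1, y_i^3, z_i^3$ of lower-indexed gates in the direction the induction requires; once this bookkeeping is in place, the lemma assembles mechanically from Lemmas~\ref{l:control_bias}, \ref{l:SY_indifferency}, \ref{l:SY_detection}, and \ref{l:SY_correction}.
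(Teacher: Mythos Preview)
Your approach differs from the paper's: you run a top-down induction on the gate index, whereas the paper argues by contradiction (take the highest $k$ whose control vertices are unnatural, and derive a contradiction). Your $Control=0$ direction is essentially fine and in fact collapses to the paper's argument once you invoke Lemma~\ref{l:SY_detection}. The $Control=1$ direction, however, has a real gap.

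The problem is the subcase ``$g_M$ is incorrect'' (and likewise in the inductive step). You write that $g_M$ is indifferent to the gadget $G_M$ and therefore adopts its NOR value under the $2^{-500N}$ bias from $\rho$. But $g_M$ is not only the \emph{output} of $G_M$; it is also an \emph{input} to one or more gates $G_j$ with $j<M$ (recall the reverse topological ordering). In those gates, $g_M$ is adjacent to the $a_j,d_j$ vertices of weight order $2^{100N}\cdot 2^j$. If $G_j$ happens to have natural control vertices (which your induction has not yet established anything about), those vertices take the value $\neg g_M$ and lock $g_M$ into its \emph{current} (incorrect) value with bias of order $2^{100N}$, completely overwhelming $\rho$. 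So you cannot conclude that $g_M$ flips to the correct value.

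The paper's contradiction argument sidesteps exactly this issue. Taking $k$ to be the highest index with some unnatural control vertex, it first observes (from the weight hierarchy) that the culprit must be $y_k^2$ or $z_k^2$; Lemma~\ref{l:SY_detection} then forces \emph{all} lower gates $j<k$ to have unnatural control vertices, and Lemma~\ref{l:SY_indifferency} now guarantees that $g_k$ is indifferent to every $G_j$ in which it appears as an input. Only then is $g_k$ genuinely free to follow $\rho$'s tiny bias. Your induction could be repaired by inserting this same step (in the incorrect subcase, apply Lemma~\ref{l:SY_detection} to propagate unnatural values downward, then Lemma~\ref{l:SY_indifferency}), but at that point the argument is no longer an induction on $i$ at all: you have made a global statement about all $j<i$, which is precisely the paper's contradiction argument.
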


\begin{proof}
If $Control=1$, consider the highest $k$ such that $y_k$ or $z_k$ have their unnatural values, i.e. $y_k=0$ or $z_k=1$. Since $\emph{Control} = 1$ all $y,z$ vertices experience a bias towards their unnatural biases by Lemma~\ref{l:control_bias}. Since the bias that biases them towards their unnatural values is greater than the weight of the internal vertices connected to $y^1,z^1,y^3,z^3$ it must be that one of the vertices $y^2,z^2$ have unnatural values. However, by Lemma~\ref{l:SY_detection} all control vertices for $j<k$ are also unnatural. Assume that the output vertices of $G_i$ are only internal to the circuit, i.e. no vertex except those belonging to the computing gadget is connected to them. Since by Lemma~\ref{l:SY_indifferency}, unnatural values for $y_i^1,z_i^1$ imply that the input vertices of gates $G_i$ are indifferent to $G_i$, the vertex $g_i$ would be dominated by the bias from the auxiliary vertex $\rho$. If $g_k$ is an output vertex by the assumption has no external bias. Hence, in this case, $g_k$ can take the correct value, which is a contradiction since $g_k$ having the correct value would mean $y_k^2,z_k^2$ can take their natural bias by Lemma~\ref{l:SY_correction}.

If $Control=0$, consider the least $k$ such that the $y,k$ control vertices have their natural values. Since the $y_i^1,z_i^1,y_i^3,z_i^3$ vertices are dominated by the $2^{-87N}$ bias ensured by Lemma~\ref{l:control_bias} we have that the only vertices with natural values can be $y_i^2,z_i^2$. However, even these vertices can only be biased towards their unnatural values since, by lemmas~\ref{l:SY_detection}~and~\ref{l:SY_correction}, even if the gate is correct $y_i^2,z_i^2$ are indifferent with respect to the NOR gadget.
\end{proof}

Having proved the above auxiliary lemmas, we can finally prove the theorem specifying the behaviour of our computing circuits.

\sygadgets*

\begin{proof}
\noindent
\begin{enumerate}
    \item Since $\emph{Control}\ell = 1$ and since we assumed no vertex experiences any external bias, by lemma \ref{l:full_from_bias} we have that all $y,z$ have their natural values and hence all gates compute correctly, by lemma \ref{l:SY_detection}. Therefore, $Next\ell = Real-Next(I_\ell)$ and $Val_\ell = Real-Val(I_\ell)$.

    \item Since $Control\ell=0$, by lemma \ref{l:full_from_bias} all $y,z$ have their unnatural values. Since all NOR gadgets have unnatural control vertices we have that their inputs are indifferent with respect to the gadgets. Hence, the claim that they are unbiased follows. 
    
    \item The $Control\ell$ vertex is connected to a vertex $NotControl\ell$, of weight $W_{NotControl\ell}=2^{7N}$, as well as to several leverage gadgets, which contribute bias at most $2^{100N-94N}=2^{6N}$. Hence, the $2^{7N}$ bias dominates.
\end{enumerate}
\end{proof}

\subsection{The Equality Gadget}\label{s:control}

The \emph{Equality} Gadgets are used to check whether the next best neighbor of a circuit has been 
successfully transferred to the input of the other circuit. 
The output of the \emph{Equality} gadget is connected to the control variables of the
circuit that should receive the new input. If the new input has not been transferred,
the output of this gadget biases the \emph{Control} vertex towards 0, which biases the internal control vertices towards unnatural values.
This enables the inputs of the circuit to change successfully to the next solution. 
When the new solution is transferred, the output of the gadget changes, in order to bias the
control vertices towards their natural values, so that the computation can take place.

Since we have two possible directions, both from Circuit $A$ to Circuit $B$ and vice versa we 
need two copies of the gadgets described in this section.

We will now describe the function of the \emph{Equality} Gadget when Circuit $A$
gives feedback to Circuit $B$. The \emph{Equality} Gadget takes as inputs the $T_A$ vertices from the \emph{CopyA} Gadgets and $I_B$ and simply checks whether they are equal.
Due to Lemma \ref{l:copy2}, at local optimum, the $T_A$ vertices have the same value
as \emph{NextA} which we want to transfer. One might try to connect 
\emph{NextA} as input to the \emph{Equality} gadget. The reason we avoid this construction is that
we do not want the output vertices of the \emph{Circuit Computing} gadget $C_A$ to experience any bias from this
gadget, because the computation changes their value with very small bias.
For this reason, we connect $T_A$ vertices to the input that are dominated by $\eta_A$
vertices. The input vertices $I_B$ are dominated by either the vertices in the NOR gadgets or
$\eta_A$, hence we can connect them directly as inputs to the gadget.

For each bit of the next best neighbor, we construct a gadget as in Figure \ref{fig:eqComp},
which performs the equality check for the $i$-th bit of the next best neighbor.
The idea for this construction is very simple: the weights decrease as we come
closer to the output, so that the input values dominate the final result. 
If the inputs are equal, the final value will be $0$. Notice  that we have put
and intermediate vertex between $I_B$ and the gadget to ensure that the two input
vertices will have equal weight. 
A detailed analysis is provided in the proof of Lemma \ref{l:consistent1}. 

\begin{figure}[t]
    \centering
    \includegraphics[scale=0.85]{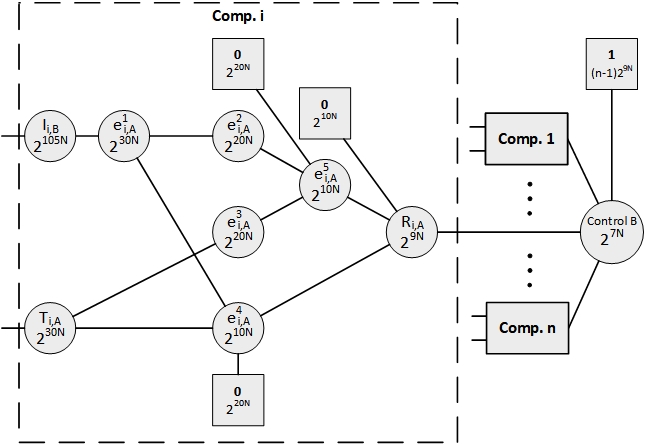}
    \caption{This gadget performs equality check for the bits $I_{i,B}$ and $T_{i,A}$. If they are
    equal, $R_{i,A} = 0$ at local optimum. We have $n$ such gadgets for each of the two circuits.
    The $n$ gadgets are connected to produce the final output, which is \emph{ControlB}. }
    \label{fig:eqComp}
\end{figure}

Now that we have gadgets to perform bit wise equality checks, we need to connect them
all to produce the output of the \emph{Equality} gadget. This is done by the construction of Figure~\ref{fig:eqComp}
Essentially, the idea is that if all the bits are equal, all the comparison results will
be $0$ and will dominate the \emph{ControlB} to take $1$. If at least one result is $1$, then together 
with the constant vertex $1$ will bias \emph{ControlB} to take value $0$.

We now prove the main lemma concerning the \emph{Equality} Gadget, which states that at local optimum, the output of the \emph{Equality} will be $1$ if and only if the two inputs to the 
gadget are equal. 

\consistent*

\begin{proof}
For simplicity we only prove the second claim, since the first follows by similar
arguments. We first focus on on the behavior of a single \emph{Equality} gadget. We would like
to prove that $R_{i,A} = 0$ if and only if $I_{i,B} = T_{i,A}$.

We first observe that vertex $e_{i,A}^1$ is biased with weight $2^{105N}$ by
$I_{i,B}$, which is greater that the bias from its other neighbor $e_{i,A}^2$. Hence, at
local optimum it is always the case that $e_{i,A}^1 = \neg I_{i,B}$. 
Moreover, vertices $e_{i,A}^2$ and $e_{i,A}^3$ essentially function as the complements of
$e_{i,A}^1$ and $T_{i,A}$. This is because they are biased with weight $2^{30N}$ by them,
which is greater than the bias by vertex $e_{i,A}^5$. Hence, $e_{i,A}^2 = \neg e_{i,A}^1$ 
and $e_{i,A}^3 = \neg T_{i,A}$.

We first examine the case where $I_B = T_A$. Then $e_{i,A}^1 = \neg T_{i,A}$. Since these vertices
have equal weights, vertex $e_{i,A}^4$ experiences $0$ total bias from them and is biased by
constant vertex $0$ with weight $2^{20N}$ and by $R_2$ with weight $2^{9N}$. Therefore,
$e_{i,A}^4 = 1$. By the previous observations we have that $e_{i,A}^2$ and $e_{i,A}^3$ have opposite 
values, which means that $e_{i,A}^5$ has bias $0$ from these two vertices. Is also has bias $2^{20N}$
by constant $0$ and $2^{9N}$ from $R_{i,A}$. Hence, $e_{i,A}^5 = 1$. Vertices $e_{i,A}^4$ and $e_{5.i}$ 
bias vertex $R_{i,A}$ towards $0$ with weight $2\cdot2^{20N}$, which is greater than the bias 
from constant $0$ and \emph{ControlB}. As a result, we have that $R_{i,A} = 0$ and the argument
is complete in this case.

Now we examine the case where $I_{i,B} \neq T_{i,A}$. Assume that $I_{i,B} = 1$, the other case
follows similarly. Then, $e_{i,A}^1 = 0$, $T_{i,A} = 0$, $e_{i,A}^2 = 1$, $e_{i,A}^3 = 1$. This means that $e_{i,A}^5$ is biased with
weight at least $2\cdot2^{30N}$ towards $1$, which is greater than the 
combined weight of $R_{i,A}$ and constant $0$. Therefore, $e_{i,A}^5 = 0$. Now we observe that
$R_{i,A}$ is biased with weight at least $2\cdot2^{20N}$ towards $1$ by vertices $e_{i,A}^5$ and 
constant $0$, which is greater that the combined weight of $e_{i,A}^4$ and \emph{ControlB}.
Hence, $R_{i,A} = 1$ in this case. If $I_{i,B} = 0$, then we could prove similarly that 
$e_{i,A}^4 = 0$, which implies that $R_{i,A} = 1$ by the same argument. 

We will now prove that \emph{ControlB} takes the appropriate value. 
First of all, we observe that \emph{ControlB} is connected with \emph{NotControlB}, (part of the \emph{Circuit Computing} gadget)
which has weight $2^{7N}$ and with $R_{i,A}$ vertices which have weight $2^{9N}$.
It is also biased with weight slightly more than $2^{6N}$ by each of the control variables
$y_i$ due to the leverage gadget. This means that for $N$ large enough \emph{ControlB} is dominated by the behavior of the $R_{i,A}$ vertices.
Suppose that $I_{i,B} = T_{i,A}$
for all $i$, $1 \leq i \leq n$. By the preceding calculations, we have that $R_{i,A} = 0$ for
all $i$. Hence, $ControlB$ experiences total bias $n\cdot2^{9N}$ towards $1$, which is greater
than the weight of constant vertex $1$. Thus, $\text{\emph{ControlB}} = 1$ in this case. 
Now suppose that there exists a $j$, $1\leq j \leq n$, such that $I_{2,j} = \neg T_{j,A}$. By the
preceding calculations, $R_{j,A} = 1$. Hence, vertex \emph{ControlB} is biased by vertices $R_{j,A}$ and 
constant $1$ towards $0$ with weight at least $(n-1)\cdot2^{9N} + 2^{9N} = n\cdot2^{9N}$, which is
greater than the combined weight of all the other $R_{i,A}$'s. Therefore, $\text{\emph{ControlB}} = 0$
in this case and the proof is complete. 
\end{proof}

\subsection{The Copy Gadget}\label{s:copy}

The \emph{Copy} Gadgets transfer the values of the next best Neighbor of a circuit to
the input of the other circuit. This is fundamental for the correct computation
of the local optimum. There are some technical conditions that these gadgets should
satisfy, which we discuss in the following.

The purpose of the \emph{Copy} Gadgets is twofold. Firstly, when the \emph{Flag} vertex has value 1, 
they are meant to give the inputs of Circuit $B$ a slight bias to take the values
of the best flip neighbor that Circuit $A$ offers, that is \emph{NextA}. 
Secondly, in this case they are meant to give zero bias to the output vertices 
of Circuit $A$ that calculate the best flip neighbors. This is because when vertex
\emph{Flag} is $1$, the input of circuit $A$ is going to change, which means
that the NOR gates of this circuit will compute the new values. A consequence of the
functionality of the NOR gadgets is that the outputs of a gadget are only biased 
towards the correct value with a very small weight. This is because the gadget is
constructed in a way that allows these vertices to be indifferent to all of their neighbors
when the time comes to change their value. As a result, if we connect the output
vertices with other gadgets, we have to ensure that they will experience zero bias from
them in order for the computation to take place properly. Since the outputs of 
Circuit $A$ that produce the next best neighbor are connected to the \emph{Copy} Gadgets, 
we should ensure that they will experience zero bias when vertex \emph{Flag} is $1$, 
so that they can change properly. 
A similar functionality should be implemented when vertex \emph{Flag} is $0$. 

Next, we present the gadgets that implement the above functionality.
There are two \emph{Copy} gadgets with similar topology, \emph{CopyA} and \emph{CopyB}. For simplicity,
we only describe the details of \emph{CopyA}. The gadget takes as input the value of 
vertex \emph{Flag}, which determines whether a value should be copied or
whether the outputs of Circuit $A$ should experience zero bias. It also takes as 
input \emph{NextA}, which is the next best neighbor calculated by Circuit $A$. The 
output of the gadget is a bias to vertices $I_B$ and $T_A$ towards adopting the value of \emph{NextA}.

At this point, one might wonder why we didn't just connect the output of the \emph{CopyA}
gadget to the input $I_B$. This is because the value of $I_B$ also depends on the 
control variables. If the control variables of the input gates have natural values,
then the inputs experience great bias from the gate, making it impossible for
their values to change by the \emph{Copy} Gadget. Hence, the \emph{Copy} Gadget gives a slight
bias to vertex $T_A$, which is an input to an auxiliary circuit that compares it with $I_B$ (i.e the $Equality$ gadget)
. If they are not equal, this means that the output has not been transferred yet. In 
this case, the output of the gadget is given a suitable value to bias the
control vertices towards unnatural values. When this happens, the inputs $I_B$ 
can change to the appropriate values.

\begin{figure}[t]
    \centering
    \includegraphics[scale=0.8]{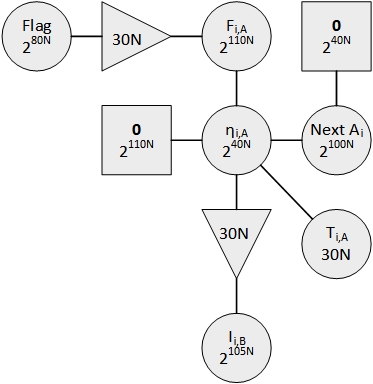}
    \,\,\,\,\,\,\,\,\,\,
    \includegraphics[scale=0.8]{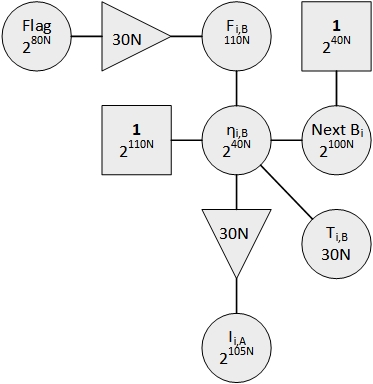}
    \caption{The gadgets that copy the values from one circuit to the other}
    \label{fig:feedbackA}
\end{figure}

Note that we have one of the above gadgets for each of the bits of the next best neighbor solution that the 
\emph{Circuit Computing} gadgets output.

We have a gadget of Figure \ref{fig:feedbackA} for each of the $m$ bits of the next best neighbor. Vertex $F_{i,A}$
has a very large weight in order to dominate the behavior of $\eta_{i,A}$. However, we do not want
this vertex to influence the behavior of \emph{Flag}. For this reason, we connect \emph{Flag} with $F_{i,A}$ using a \emph{Leveraging} 
gadget. 
Notice that the behavior of $F_{i,A}$ is dominated by \emph{Flag} by weight at least $2^{50N}$.
Another important point is that we connect the output of the \emph{CopyA} gadget with the
input of Circuit $B$ using another \emph{Leveraging} gadget. This is due to the fact that the 
weight of the input vertices is of the order of $2^{105N}$, which is far more than the
weight of $\eta_{i,A}$. Hence, we do not want the input vertices to influence the value of
$\eta_{i,A}$, while also ensuring that the \emph{Copy} gadget gives a slight bias to the inputs $I_B$
towards the value of \emph{NextA}.

We now prove Lemma~\ref{l:copy2}, which makes precise the already stated claims
about the function of the \emph{Copy} Gadgets.

\copyB*

\begin{proof}
We prove the claim for $\text{\emph{Flag}}=1$. The case $\text{\emph{Flag}}=0$ is identical.

We begin with the first claim. Due to the leveraging gadget, vertex $F_{i,B}$ experiences 
bias from \emph{Flag} which is slightly less than $2^{50N}$. Hence, it is biased towards
$0$ with weight at least $2^{49N}$. This is greater than the weight of $\eta_{i,B}$, 
which is the other neighbor of $F_{i,B}$. Hence, $F_{i,B} = 0$ at local optimum. 
Now vertex $\eta_{i,B}$ experiences zero total bias from vertices $F_{i,B}$ and constant $1$ and 
biases $2^{100N}$ by $\text{\emph{NextB}}_i$, $2^{30N}$ by $T_{i,B}$ and slightly more that $2^{75N}$ by 
the input $I_{i,A}$ due to leveraging,
which means that its value at local optimum will be determined by $\text{\emph{NextB}}_i$.
Specifically, $\eta_{i,B} = \neg \text{\emph{NextB}}_i$ at local optimum. Now, vertex $T_{i,B}$ experiences bias
$2^{40N}$ from $\eta_{i,B}$ and biases of the order of $2^{7N}$ from the gates of the controller
gadget. Hence, $T_{i,B}$ has bias towards $\text{\emph{NextB}}_i$ equal to $w_{\eta_{i,B}}$ and will take
this value at local optimum.

To prove the second claim, we use the already proven fact that $\eta_{i,B} = \neg \text{\emph{NextB}}_i$ when
$\text{\emph{Flag}} = 1$. Due to the \emph{Leverage} gadget, vertex $\text{\emph{I}}_i$ experiences bias slightly less than 
$2^{10N}$ from vertex $\eta_{i,B}$. Since $\text{\emph{ControlA}}=0$, by Lemma~\ref{l:SY_gadgets}, we have that $I_{i,A}$ is indifferent with respect to the gadget $C_A$, and will therefore take the value of $\neg \eta_{i,B} =  \text{\emph{NextB}}_i = T_{i,B}$
\end{proof}

\copyUnbias*
\begin{proof}
We notice that due to leveraging, vertex $F_{i,A}$ of gadget \emph{CopyA}
experiences bias slightly less than $2^{50N}$ from vertex $\text{\emph{Flag}} = 1$. This dominates its
behavior, since the other neighbor $\eta_{i,A}$ has weight that is orders of magnitude smaller.
Hence, $F_{i,A} = 0$. Now, vertex $\eta_{i,A}$ experiences total bias $2*2^{110N}$
from vertices $F_{i,A}$ and constant $0$, $2^{100N}$ from $NextA_i$, $2^{30N}$ from $T_{i,A}$ 
and slightly more than $2^{75N}$ from $I_{i,B}$ due to the \emph{Leverage} gadget used. 
This means that $\eta_{i,A} =1$. Now we are ready to prove our claim. Vertex $\text{\emph{NextA}}_i$
is connected to vertices $\eta_{i,A}$ and constant $0$ of gadget $\text{\emph{CopyA}}_i$. They have the same
weight and opposite values at local optimum. This means that $\text{\emph{NextA}}_i$ has $0$ bias with respect to
$\text{\emph{CopyA}}_i$, i.e it is indifferent.

The case for $\text{\emph{Flag}}=0$ follows symmetrically.
\end{proof}

\subsection{The Comparator Gadget}\label{s:comparison_gadget}

The purpose of the \emph{Comparator} gadget is to implement the binary comparison between the bits of the values of the two circuits. At the same time we need to ensure that the
vertices of the losing circuit (i.e the circuit with the lower value) are indifferent with respect to the \emph{Comparator} gadget, so that Lemma \ref{l:SY_gadgets} can be applied. 

In particular, the output vertices that correspond to the bits of the value, presented in section \ref{s:computing_gadgets}, with weights $2^{90N}\cdot 2^{N+i}$ are each connected as below.

Note that the output bits of the second circuit B are the complement of their true values, in order to achieve comparison with a single bit. The weight of the \emph{Flag} vertex is $2^{80N}$

\begin{figure}[t]
    \centering
    \includegraphics[scale=0.8]{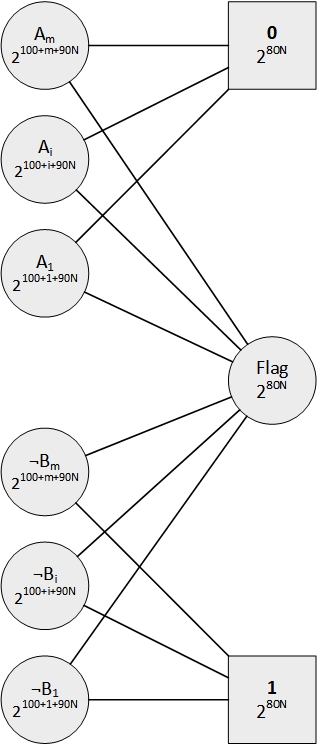}
    \caption{vertices of the \emph{Comparator} gadget. Note that Circuit B is meant to output the complement of its true output.}
\label{fig:comparator}
\end{figure}

To see why the value of vertex \emph{Flag} implements binary comparison one needs to consider four cases: In the first two, where the $i$th bits are both equal, the total bias \emph{Flag} experiences is zero, since it experiences bias towards a certain bit as well as the complement of said bit. In the other two, where one bit is 1 and the other is 0, the \emph{Flag} vertex will experience $2^i$ bias towards either value, which will supersede all lower bits.

However, the \emph{Comparator} gadget is meant not only to implement comparison between 
values, but also to detect whether a circuit is computing wrongly and, hence, to fix it.
To this end we connect the following control vertices to the vertex \emph{Flag}: the control 
vertices $y^3_{m+1,A}$ for circuit $A$ and $z^3_{m+1,B}$ for circuit B, where $m+1$ is the last NOR gadget 
before the bits of the values (recall that we have $m$ value bits and that $w_{y^3_{i,A}}=w_{z^3_{i,B}}=2^{100N}\cdot(2^{N+m+1}-50)$) 
(see Figure~\ref{f:chain}), as well as the control vertices $y^3_{i,A},z^3_{i,B},\forall i \leq m$
for each NOR gadget that corresponds to an output bit of the value (which have weight 
$w_{y^3_{i,A}}=w_{z^3_{i,B}}=2^{90N}\cdot(2^{N+i}-50)$). The vertices $y^3_{m+1,A}$ and $z^3_{m+1,B}$ are used to check whether 
the next best neighbor has been correctly computed. If it isn't, these vertices dominate \emph{Flag}, 
due to their large weight of $2^{100N}$ compared to the weight of the value bits, which is 
of the order of $2^{90N}$. The control vertices of the output bits of the value are used in 
a more intricate way to ensure that even if one to the results is not correct, the 
output of the comparison is the desired one. Details are provided in Lemma~\ref{l:super_comparison}. 
All these vertices are connected in such a way that a control vertex with unnatural value, biases \emph{Flag} 
towards fixing that circuit.

\begin{figure}[t]
    \centering
    \includegraphics[scale=0.5]{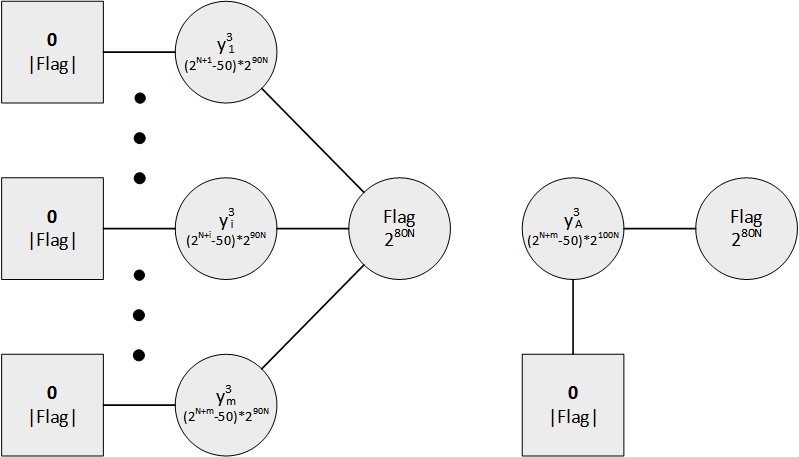}
    \includegraphics[scale=0.5]{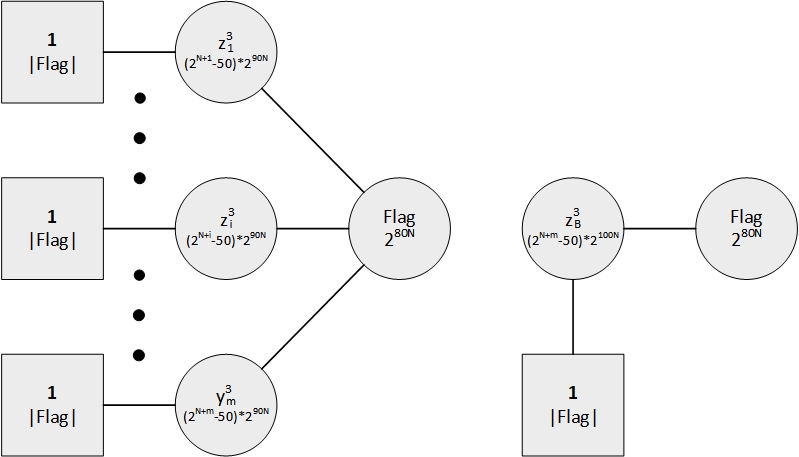}
    \caption{Connection between the \emph{control vertices} and the \emph{Flag} vertex.}
\label{fig:detection}
\end{figure}

We prove the following properties:

\comparatorUnbias*

\begin{proof}
Suppose $\text{\emph{Flag}}=1$. Then the only vertices of $C_A$ connected to the \emph{Comparator} gadget are the value output bits and certain control vertices, in such a way that they are connected to either \emph{Flag} or a constant vertex $0$ of weight equal to \emph{Flag}. In all cases, both biases  
cancel each other out and the vertices of Circuit $A$ are indifferent.
Suppose $\text{\emph{Flag}}=0$. Then the only vertices of $C_B$ connected to \emph{Flag} are also connected with a constant $1$ vertex. Similarly to the first case, all vertices of circuit B are indifferent with respect to the \emph{Comparator} gadget when $\text{\emph{Flag}}=0$.
\end{proof}

We now prove the most important lemma of the \emph{Comparator} gadget.
Our goal is to compare the output values of the two circuits, so that we change the
input of the circuit with the smaller real value. The main difficulty lies in that
one or both of the circuits might produce incorrect bits in their output. 
A simple idea would be to try to detect any incorrect output bits and 
influence \emph{Flag} accordingly, as we do with control variables $y^3_{m+1,A}$ and $z^3_{m+1,B}$.
However, if the least significant bit of a circuit is incorrect, the weight of the corresponding
control vertex is exponentially smaller that the rest of the bits. Hence, it cannot
dominate the outcome of the comparison. This means that sometimes we might be
at a local optimum where some output vertices are incorrect. To alleviate this problem we propose this construction.

The idea behind this lemma is very simple: if it is guaranteed that the output of
one of the circuits is correct and we know which bits of the other circuit \emph{might}
be wrong, we can still compare their true values.
This is accomplished by an extension of the traditional comparison method, by also taking
into account the control variables of the output bits and examining all the possible
cases. 
This lemma is very useful in our proof, since by Lemma~\ref{l:correctness1} we know that at least one of the circuits
computes correctly in local optimum. 

\begin{lemma}\label{l:super_comparison}
At any local optimum:

Suppose that $\text{\emph{Flag}}=1$. If $\forall i, z^1_{i,A}=0, y^1_{i,A}=1,z^2_{i,A}=0, y^2_{i,A}=1, z^3_{i,A}=0, y^3_{i,A}=1$ and $\forall i>m, z^1_{i,B}=0, y^1_{i,B}=1,z^2_{i,B}=0, y^2_{i,B}=1, z^3_{i,B}=0, y^3_{i,B}=1$ then 
\[ Real\text{-}Val(I_A) \leq Real\text{-}Val(I_B) \]

Suppose that $\text{\emph{Flag}}=0$. If $\forall i, z^1_{i,B}=0, y^1_{i,B}=1,z^2_{i,B}=0, y^2_{i,B}=1, z^3_{i,B}=0, y^3_{i,B}=1$ and $\forall i>m, z^1_{i,A}=0, y^1_{i,A}=1,z^2_{i,A}=0, y^2_{i,A}=1, z^3_{i,A}=0, y^3_{i,A}=1$ then 
\[ Real\text{-}Val(I_B) \leq Real\text{-}Val(I_A) \]
\end{lemma}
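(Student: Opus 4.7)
My plan is to establish the case $\text{\emph{Flag}}=1$; the case $\text{\emph{Flag}}=0$ is entirely symmetric. Let $v_A = \text{\emph{Real-Val}}(I_A)$ and $v_B = \text{\emph{Real-Val}}(I_B)$, and denote their $i$-th bits by $v_A[i]$ and $v_B[i]$. The first step is to observe that since every control variable of $C_A$ is in its natural state by hypothesis, Lemmas~\ref{l:SY_detection} and~\ref{l:SY_correction} force every gate of $C_A$ to compute correctly, so that $\text{\emph{Val}}_A[i] = v_A[i]$ for all $i$. The same reasoning applied to $C_B$ restricted to gates of index $i > m$ shows that all internal gates of $B$ compute correctly; only the $m$ output-bit gates with $i \le m$ may be faulty. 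Using Lemma~\ref{l:SY_detection}, I will conclude that a faulty output-bit gate $G_{i,B}$ forces $z^3_{i,B}$ to its unnatural value $1$, while a correct $G_{i,B}$ leaves $z^3_{i,B}$ at its natural value~$0$.

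The heart of the argument will be a position-by-position accounting of the net bias felt by \emph{Flag}. For each $i \le m$, the relevant neighbours of \emph{Flag} stemming from position $i$ are $\text{\emph{Val}}_A[i]$, $\text{\emph{Val}}_B[i]$, $y^3_{i,A}$, $z^3_{i,B}$ together with certain matching constant vertices of weight $2^{90N}\cdot 50$ depicted in Figures~\ref{fig:comparator} and~\ref{fig:detection}. I will proceed by a two-case split on whether $G_{i,B}$ is correct. In the correct case, $z^3_{i,B}=0$ is natural and cancels against its paired constant~$1$, so the only net contribution from position~$i$ is the textbook bit-comparison contribution of magnitude $2^{90N}\cdot 2^{N+i}$ favouring the side corresponding to $[v_A[i] \le v_B[i]]$ (recall that $\text{\emph{Val}}_B[i]$ equals the complement of $v_B[i]$ by design of $B$). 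In the incorrect case, $\text{\emph{Val}}_B[i]$ is flipped and would contribute in the wrong direction; however, $z^3_{i,B}=1$ of weight $2^{90N}\cdot(2^{N+i}-50)$ together with its paired constant of weight $2^{90N}\cdot 50$ exactly flips the contribution back, recovering the same textbook bit-comparison bias as in the correct case. Verifying this exact cancellation, by matching the weights of $z^3_{i,B}$ and the comparator constants against that of $\text{\emph{Val}}_B[i]$, is the main technical obstacle. The symmetric argument for $y^3_{i,A}$ is trivial since $y^3_{i,A}=1$ always, and its paired constant~$0$ of equal weight kills its contribution.

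Once this per-position equivalence is in place, the high-index control vertices $y^3_{m+1,A}$ and $z^3_{m+1,B}$ (of larger weight $2^{100N}\cdot(2^{N+m+1}-50)$) are pinned to their natural values by hypothesis, so they too cancel with their paired constants and contribute zero net bias on \emph{Flag}. Summing the textbook contributions over $i=1,\dots,m$ and using the geometric growth of the weights $2^{90N}\cdot 2^{N+i}$, so that the most significant differing bit strictly dominates all lower ones, the sign of the net bias on \emph{Flag} coincides with the sign of $v_B - v_A$ viewed as binary integers. Since $\text{\emph{Flag}}=1$ at local optimum, this sign must favour the value~$1$, which yields the desired inequality $\text{\emph{Real-Val}}(I_A) \le \text{\emph{Real-Val}}(I_B)$.
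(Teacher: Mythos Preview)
Your central step---the ``exact cancellation'' in the incorrect-gate case---does not go through. The pair $z^3_{i,B}$ and its matching constant have the \emph{same} weight $2^{90N}(2^{N+i}-50)$ (not $2^{90N}\cdot 50$), and when $z^3_{i,B}=1$ they jointly bias \emph{Flag} towards~$0$ by $2\cdot 2^{90N}(2^{N+i}-50)$, \emph{regardless} of which way $\text{\emph{Val}}_B[i]$ was flipped. Take the triple $(A_k,B_k,z_{k,B})=(0,0,1)$ with $B_k$ incorrect, so $v_B[k]=1$: the value-vertex pair $(\text{\emph{Val}}_A[k],\text{\emph{Val}}_B[k])=(0,1)$ cancels, and the $z^3$ pair then pushes \emph{Flag} towards~$0$ with magnitude $\approx 2\cdot 2^{90N}\cdot 2^{N+k}$---the opposite of the textbook bias, which should favour~$1$ since $v_A[k]=0<1=v_B[k]$. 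So the per-position bias is not recovered; the correction works only for one of the two flip directions.

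A secondary gap is your biconditional ``$G_{i,B}$ correct $\Leftrightarrow z^3_{i,B}=0$''. Lemma~\ref{l:SY_correction} only renders $y^2_i,z^2_i$ \emph{indifferent} to the gate when it is correct; nothing in the hypotheses pins them (or $z^3_i$) to their natural values, and Lemma~\ref{l:SY_detection} shows that an incorrect higher value-bit gate $j$ (with $i<j\le m$, which the hypotheses allow) propagates $z^3_{i,B}=1$ even when $G_{i,B}$ itself is correct.

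The paper's argument avoids both issues by abandoning the attempt to reduce to textbook comparison. It enumerates the eight observed triples $(A_k,B_k,z_{k,B})$, sorts them into three classes by the sign and magnitude of their contribution, and reasons about the \emph{highest} bit $k$ not in the ``near-zero'' class: either that triple lies in a class where $\text{\emph{Real}}(A_k)\le\text{\emph{Real}}(B_k)$ is directly certified, or it lies in a class whose bias towards~$0$ strictly dominates all lower positions combined, contradicting $\text{\emph{Flag}}=1$. Crucially, $z_{k,B}$ is used only through the one-sided implication $B_k\neq\text{\emph{Real}}(B_k)\Rightarrow z_{k,B}=1$, never as an equivalence.
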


\begin{proof}
Since for all gates that do not correspond to value bits (see Figure~\ref{f:chain}), we have that they possess natural values, and hence \emph{Flag} is indifferent with respect to them, we only need to examine the final $m$ gates that correspond to the value bits. 

We denote the $k$th bit of $\text{\emph{Val}}_A,\text{\emph{Val}}_B$ as $A_k,B_k$. $B_k$ corresponds to the actual value of the $k$th bit of the circuit B instead of its complement for simplicity. The actual value of the vertex corresponding to $B_k$ is the opposite. We also denote $z^2_{k,B}$ the control vertex corresponding to the bit $B_k$. We make the distinction between $A_k,B_k$ and $Real(A_k),Real(B_k)$. These may be equal or different depending on whether the circuit calculated the $k$th bit correctly. By the assumption we know that $A_k=Real(A_k)$ since $A$ calculates correctly. We do not know whether $B_k=Real(B_k)$, but we do know that $B_k \neq Real(B_k) \implies z^2_{k,B}=1$.

We consider three cases.

In the case that $(A_k,B_k,z^2_{k,B}) \in {(0,0,0),(1,1,0),(0,1,1)}$, \emph{Flag} experiences bias at most $2\cdot2^{90N}\cdot(50)$ from this bit towards $\text{\emph{Flag}}=1$ in any of these cases. In this case, we have that either $Real(A_k)=Real(B_k)$ or $Real(A_k)<Real(B_k)$, depending on whether $B_k$ calculated correctly. Either way, $Real(A_k) \leq Real(B_k)$. 

In the case that $(A_k,B_k,z^2_{k,B}) \in {(0,1,0)}$, \emph{Flag} experiences bias $2\cdot2^{90N}\cdot(2^{N+k})$ from this bit towards $\text{\emph{Flag}}=1$. In this case, we have that $Real(A_k) < Real(B_k)$, since both calculate correctly.

In the case that $(A_k,B_k,z^2_{k,B}) \in {(0,0,1),(1,0,0),(1,1,1),(1,0,1)}$ then \emph{Flag} experiences bias at least $2\cdot2^{90N}\cdot(2^{N+k}-50)$ towards $\text{\emph{Flag}}=0$ from this bit in any of these cases. In these cases, $Real(A_k)$ might be higher, but we will show that these cases can never matter.

Suppose $k$ the highest $i$ for which $(A_i,B_i,z^2_{k,B}) \notin {(0,0,0),(1,1,0),(0,1,1)}$. 

If no such $k$ exists then all bits must lie in the first case and hence $\forall k Real(A_k) \leq Real(B_k)$. Hence, $Real\text{-}Val(I_A) \leq Real\text{-}Val(I_B)$.

If for that $k$, $(A_k,B_k,z^2_{k,B}) \in {(0,1,0)}$, we know that $Real(A_k) < Real(B_k)$ while for all higher bits $k Real(A_k) \leq Real(B_k)$. This means that $Real\text{-}Val(I_A) < Real\text{-}Val(I_B)$, since the lower bits don't matter as long as we have a strict inequality in a high bit.

Lastly, if we have that $(A_k,B_k,z^2_{k,B}) \in {(0,0,1),(1,0,0),(1,1,1),(1,0,1)}$, we have that \emph{Flag} experiences bias at least $2\cdot2^{90N}\cdot(2^{N+k}-50)$ towards $\text{\emph{Flag}}=0$ from this bit. Furthermore, it experiences bias at most $2\cdot2^{90N}\cdot(50)$ towards $\text{\emph{Flag}}=1$ from each bit higher that $k$. Each bit $i$ lower than $k$ causes bias at most $2\cdot2^{90N}\cdot(2^{N+i})$ each towards $\text{\emph{Flag}}=1$. In total, if we have $m$ bits, we have at most $(m-k)\cdot2\cdot2^{90N}\cdot(50)+\sum_{i<k} 2\cdot2^{90N}\cdot(2^{N+i}) \leq (m)\cdot2\cdot2^{90N}\cdot(50)+2\cdot2^{90N}\cdot(2^{N+k}-2^N)$ towards $\text{\emph{Flag}}=1$ and at least $2\cdot2^{90N}\cdot(2^{N+k}-50)$ towards $\text{\emph{Flag}}=0$. For N sufficiently high, the bias towards $0$ would win, making \emph{Flag} no longer have 1 as its best response, which is a contradiction. Hence, the third case can not happen in a local optimum with $\text{\emph{Flag}}=1$.

The case for $\text{\emph{Flag}}=0$ is identical, with the only difference being we consider $y^2_{k,A}$ instead.
\end{proof}

\comparatorCorrectness*

\begin{proof} 
Assume a local optimum with $\text{\emph{Flag}}=1$ and $NextB \neq Real\text{-}Next(I_B)$. By Lemma~\ref{l:SY_detection} we have that Circuit $B$ is computing incorrectly and hence the control vertex $z^3_{m+1,B}$ (i.e. the last gate before the value bits) has its unnatural value, which is $z^3_{m+1,B}=1$. 

Assume that, $ControlA=0$. Then by Lemma~\ref{l:copy2} we have that $I_A=T_B$, which by Lemma~\ref{l:consistent1} we have $ControlA=1$, a contradiction. Hence, $ControlA=1$.

Therefore, since have that $ControlA=1$ and that $NextA = Real\text{-}Next(I_A)$, which by Lemma~\ref{l:full_from_bias}, implies that the corresponding vertex $y^3_{m+1,A}$ has its natural value $y^3_{m+1,A}=1$. 

This means that \emph{Flag} experiences bias towards 0 at least $2\cdot2^{100N}\cdot(2^{N+m+1}-50)$ from the vertices $z^3_{m+1,B}$,$y^3_{m+1,A}$, which dominates \emph{Flag} to take value 0. This is a contradiction since we assumed that $\text{\emph{Flag}}=1$ at local optimum. Hence, if $\text{\emph{Flag}}=1$ then it must be that $NextB = Real\text{-}Next(I_B)$.

Similarly, we can prove that if $\text{\emph{Flag}}=0$ then $NextA = Real\text{-}Next(I_A)$.
\end{proof} 

\comparator*

\begin{proof}

Assume that, $ControlA=0$. Then by Lemma~\ref{l:copy2} we have that $I_A=T_B$, which by Lemma~\ref{l:consistent1} we have $ControlA=1$, a contradiction. Hence, $ControlA=1$.

Since $\text{\emph{Flag}}=1$ and we have that $ControlA=1$, by Lemma~\ref{l:full_from_bias} all control vertices of $C_A$ have their natural values. Furthermore, since by the proof of Lemma~\ref{l:comparator_correctness} we know that all control vertices of weight $2^{100N}$ have their natural values, we can apply Lemma~\ref{l:super_comparison}. Therefore, $Real\text{-}Val(I_A) \leq Real\text{-}Val(I_B)$
   
The proof for $\text{\emph{Flag}}=0$ is identical.
\end{proof}


\end{document}